\newcolumntype{L}[1]{>{\raggedright\arraybackslash}p{#1}}
\newcolumntype{L}{>{\raggedright\arraybackslash}X}
\newcolumntype{C}{>{\centering\arraybackslash}m{1.6cm}}
\newcolumntype{N}{>{\centering\arraybackslash}m{1.8cm}}
\newcommand{\rhoa}{\rho_{\mathrm{a}}}
\newcommand{\rhog}{\rho_{\mathrm{g}}}
\renewcommand*{\Alph}[1]{\alphalph{\value{#1}}}
\pgfplotsset{compat=1.18}
\newcolumntype{Y}{>{\raggedright\arraybackslash}X} 
\newtheorem{lemma}{Lemma}
\newtheorem{corollary}{Corollary}
\newtheorem{remark}{Remark}
\newtheorem{theorem}{Theorem}
\newtheorem{proposition}{Proposition}
\newtheorem{conjecture}{Conjecture}
\tikzset{
  loopbox/.style = {draw, rounded corners=2pt, minimum width=28mm, minimum height=12mm, fill=black!3},
  vtx/.style     = {circle, draw, minimum size=7.5mm, inner sep=0pt},
  phase/.style   = {circle, draw, inner sep=0.5pt, minimum size=3.6mm},
  unitary/.style = {draw, rounded corners=2pt, minimum width=26mm, minimum height=11mm, fill=black!5},
  det/.style     = {draw, minimum width=9mm, minimum height=6mm},
  edge/.style    = {-Latex, line width=0.8pt, line cap=round},
  lab/.style     = {font=\small}
}
\tikzset{
  loopbox/.style = {draw, rounded corners=2pt, minimum width=28mm, minimum height=12mm, fill=black!3},
  vtx/.style     = {circle, draw, minimum size=7.5mm, inner sep=0pt},
  phase/.style   = {circle, draw, inner sep=0.5pt, minimum size=3.6mm},
  unitary/.style = {draw, rounded corners=2pt, minimum width=26mm, minimum height=11mm, fill=black!5},
  det/.style     = {draw, minimum width=9mm, minimum height=6mm},
  edge/.style    = {-Latex, line width=0.8pt, line cap=round},
  lab/.style     = {font=\small}
}
\tikzset{
  vtx/.style={circle, draw, minimum size=7.5mm, inner sep=0pt},
  ctx/.style={draw, rounded corners=2pt, minimum width=18mm, minimum height=8mm, fill=black!5},
  edge/.style={-Latex, line width=0.8pt, line cap=round}
}
\tikzset{
  loopbox/.style = {draw, rounded corners=2pt, minimum width=28mm, minimum height=12mm, fill=black!3},
  vtx/.style     = {circle, draw, minimum size=7.5mm, inner sep=0pt},
  phase/.style   = {circle, draw, inner sep=0.5pt, minimum size=3.6mm},
  unitary/.style = {draw, rounded corners=2pt, minimum width=34mm, minimum height=12mm, fill=black!5},
  det/.style     = {draw, minimum width=9mm, minimum height=6mm},
  edge/.style    = {-Latex, line width=0.8pt, line cap=round},
  lab/.style     = {font=\small}
}
\tikzset{
  vertex/.style={circle, draw, minimum size=8mm, inner sep=0pt},
  edge/.style={line width=0.8pt, line cap=round, line join=round},
  labelfont/.style={font=\small}
}
\title{Certified Pruning from Counterfactual Consistency: Exact Certificates and Structured SAT Families}
\author{Maximilian R.~P.~von~Liechtenstein\\
\small Independent Researcher}
\date{\today}
\pgfplotsset{compat=1.18}
\pgfplotsset{colormap={redyellow}{rgb=(1,0,0) rgb=(1,1,0)}}
\begin{document}
\maketitle

\begin{abstract}
We introduce a certified pruning framework that consolidates the
principles of counterfactual consistency and their networked
extensions into a single operational model, with consequences for both
quantum foundations and cryptographic hardness.  First, we formalize
$\varepsilon$-counterfactual instrumentation and $\varepsilon$-stability,
capturing noisy but testable constraints in laboratory contextuality
experiments.  Second, we extend these constraints to networks of
contexts, yielding contextuality-type inequalities that apply globally
across a CNF-SAT instance.  Third, we implement a propagate-and-prune
solver in which every learned clause is certified by a dual Farkas
certificate verified in exact arithmetic.  This guarantees soundness
while enabling sub-exponential pruning: if the induced network
provides a per-variable pruning rate $\rho\in(0,1)$ under
$\varepsilon$-stable propagation, the search runs in time
$\tilde{O}((2-\rho)^n)$.  These bounds do not contradict ETH or SETH:
the worst case remains $2^{\Omega(n)}$, but structured families admit
provable speedups.  In cryptography, the results highlight how such
reductions could affect hardness margins in idealized primitives; in
foundations, they motivate laboratory tests of counterfactual bounds
as potential probes of computational complexity.  
We explicitly distinguish experimental $\varepsilon$, which quantifies
laboratory visibility, from numerical $\varepsilon$, which is a solver
tolerance. This builds directly on our earlier framework for
$\varepsilon$-instrumentation~\cite{vonLiechtensteinCLF2025}, here
integrated into certified pruning with dual certificates
(Algorithm~1).

\end{abstract}
\section{Introduction}

Counterfactual consistency constraints—ubiquitous in quantum foundations and causal inference—can be formalized to forbid certain global assignments even when all local constraints are satisfied. The principle of \emph{No Global Counterfactual Consistency (NGCC)} captures this obstruction: there need not exist a globally consistent assignment compatible with all counterfactual conditionals, even if each local context is individually realizable. In this article we consolidate three strands of work: (i) the foundational definitions of NGCC and its operational semantics, (ii) a networked extension that propagates counterfactual relations across graphs of contexts, and (iii) the algorithmic consequences for CNF-SAT and for cryptographic constructions whose graphs mirror the same incompatibility patterns.

SAT solvers derive much of their power from propagation and learning, yet standard clause-level reasoning remains fundamentally local. Global graph-level incompatibilities, akin to contextuality inequalities, are difficult to capture with clause learning alone without incurring prohibitive proof complexity. Our central insight is that NGCC-type inequalities—derived from the impossibility of simultaneous counterfactual commitments—yield algebraic constraints that can be injected into the solver loop to prune assignments globally and earlier than clause-driven reasoning alone.

Our pruning is certified: every learned clause is backed by a dual
Farkas certificate that is verified in exact arithmetic before being
inserted into the solver, ensuring soundness end-to-end.

While some of the inequalities we use can in principle be derived at
low rank in Sherali--Adams or Sum-of-Squares hierarchies, our
contribution is their operational interpretation in terms of
testable counterfactual constraints and their certified integration
into SAT solvers (see §7 and Appendix~\ref{app:hierarchy-comparison}).

\paragraph{Contributions.}
The main contributions of this paper can be summarized as follows:

\begin{itemize}
  \item \textbf{Operational inequalities.} We derive families of
  counterfactual/contextuality inequalities that are \emph{testable in
  laboratory settings} with standard photonic components, requiring no
  novel or exotic devices.

  \item \textbf{Solver integration.} We show how such inequalities can
  be embedded directly into SAT/CDCL solvers via certified dual
  certificates. Every learned clause is backed by a verifiable
  certificate, ensuring sound pruning without heuristic shortcuts.

  \item \textbf{Beyond convex hierarchies.} While some of these
  inequalities appear at low levels of Sherali--Adams or
  Sum-of-Squares, our contribution is to provide them with
  \emph{operational semantics} and an \emph{algorithmic pipeline}.
  Hierarchy methods guarantee existence in principle; our framework
  makes them lab-testable and solver-usable (see §7 and
  Appendix~\ref{app:hierarchy-comparison}).

  \item \textbf{Sub-exponential pruning.} We prove that whenever a
  constant pruning rate $\rho>0$ is certified, the expected search
  size is bounded by $\tilde{O}((2-\rho)^n)$, establishing
  sub-exponential search on structured instances without contradicting
  ETH or SETH.

  \item \textbf{Cryptographic implications.} We illustrate how such
  pruning translates into reduced effective key lengths in toy cipher
  families, highlighting design considerations and motivating further
  empirical work.
\end{itemize}

We emphasize explicitly that nothing in this work contradicts ETH or SETH. Our results are structural and family-dependent; in the worst case, SAT solving still requires $2^{\Omega(n)}$ time. Likewise, our cryptographic examples are illustrative toy families: we do not claim practical breaks of deployed primitives, only margin reductions that highlight potential design considerations. This paper builds on our earlier work on counterfactual logic frameworks for $\varepsilon$-instrumentation~\cite{vonLiechtensteinCLF2025}.

The remainder of the paper is organized as follows. Section~\ref{sec:foundations} restates NGCC foundations and $\varepsilon$-stability. Section~\ref{sec:networks} develops h-NGCC on networks, deriving inequalities. Section~\ref{sec:sat} maps SAT to h-NGCC and proves the parameterized runtime bound. Section~\ref{sec:crypto} treats cryptographic consequences. Section~\ref{sec:conclusion} concludes. Section~\ref{sec:related} discusses related work and positioning. Appendices contain proofs, figures, pseudocode, and supplementary artifacts.

\section{NGCC Foundations and \texorpdfstring{$\varepsilon$}{epsilon}-Stability}
\label{sec:foundations}

This section fixes notation, formalizes the marginal polytope viewpoint, and records
the robustness lemma used throughout. All variables are finite-valued, so every
polytope and norm below is well-defined.

\subsection{Notation and the marginal problem}
For a finite set $S$ let $\Delta(S)$ denote the probability simplex on $S$.
A \emph{context} $C$ is a finite set of jointly measurable variables with
outcome space $\Omega_C$; a marginal on $C$ is $P_C\in\Delta(\Omega_C)$.
Given a family of contexts $\mathcal C=\{C_v\}_{v\in V}$ we concatenate
\[
p \;=\; \bigoplus_{v\in V} P_{C_v} \in \mathbb R^{N}, \qquad
N \;=\; \sum_{v\in V} |\Omega_{C_v}|.
\]
Let $U=\cup_v C_v$ and write $P_U\in\Delta(\Omega_U)$ for a global joint.
Its image under marginalization is the \emph{marginal polytope}
\[
\mathcal M \;=\; \Big\{\bigoplus_{v\in V}\mathrm{marg}_{C_v}(P_U)\;:\;P_U\in\Delta(\Omega_U)\Big\}.
\]
If a given vector $p$ lies outside $\mathcal M$ the family exhibits \emph{global incompatibility}
(=NGCC).

\subsection{NGCC inequalities}
By the separating hyperplane theorem, for every $p\notin\mathcal M$ there exist
$a\in\mathbb R^{N}$ and $b\in\mathbb R$ such that $a\cdot q\le b$ for all
$q\in\mathcal M$ but $a\cdot p>b$. Any such linear inequality
\[
a\cdot p \;\le\; b
\]
is an \emph{NGCC inequality}. Classic contextuality inequalities (KCBS, CHSH)
are particular instances for specific families of contexts.

\subsection{\texorpdfstring{$\varepsilon$}{epsilon}-instruments and \texorpdfstring{$\varepsilon$}{epsilon}-stability}
In practice one observes noisy marginals $q=\bigoplus_v q_v$. We model
\emph{$\varepsilon$-instruments} by per-context total-variation bounds
\[
\operatorname{TV}(q_v,P_{C_v}) \;\le\; \varepsilon_v,\qquad
\text{and we use the uniform shorthand }\varepsilon_v\le\varepsilon.
\]
An \emph{$\varepsilon$-counterfactual assignment} is any such $q$ obtained from
an (unknown) feasible $\{P_{C_v}\}$ by perturbations bounded as above.

\begin{lemma}[Robustness with explicit constants]\label{lem:l1}
Let $a\cdot p\le b$ be valid for $\mathcal M$ and set the geometric slack
$\gamma:= b-\max_{q\in\mathcal M} a\cdot q\ge 0$. If each context marginal is
perturbed by at most $\varepsilon$ in total variation and there are $m=|V|$
contexts, then for any observed $q$,
\[
a\cdot q - b \;\le\; -\gamma \;+\; 2m\,\|a\|_\infty\,\varepsilon .
\]
\end{lemma}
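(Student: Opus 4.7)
The plan is to decompose the functional $a$ along the direct-sum block structure of the marginal vector, compare the noisy observed $q$ to a clean reference point inside $\mathcal{M}$, and then combine a per-context Lipschitz bound with the definition of the slack $\gamma$.

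First, I would invoke the definition of an $\varepsilon$-counterfactual assignment to extract an underlying feasible family $\{P_{C_v}\}_{v\in V}$ from which $q$ is obtained by per-context perturbations; its concatenation $p^{\star} := \bigoplus_{v\in V} P_{C_v}$ lies in $\mathcal M$ by construction, and $\operatorname{TV}(q_v, P_{C_v}) \le \varepsilon$ for every $v$. Decomposing $a = \bigoplus_v a_v$ into per-context blocks consistent with the embedding of $\mathcal M$ into $\mathbb R^N$, I would then write
\[
a\cdot q \;=\; a\cdot p^{\star} \;+\; \sum_{v\in V} a_v\cdot(q_v - P_{C_v}),
\]
so that the first summand is controlled by the slack ($p^{\star}\in\mathcal M$ together with validity of $a\cdot p\le b$ on $\mathcal M$ gives $a\cdot p^{\star}\le b-\gamma$), and only the residual sum carries the $\varepsilon$-dependence.

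For the residual I would apply Hölder block by block, $|a_v\cdot(q_v - P_{C_v})| \le \|a_v\|_\infty\,\|q_v-P_{C_v}\|_1$, use the textbook identity $\|q_v-P_{C_v}\|_1 = 2\operatorname{TV}(q_v,P_{C_v})\le 2\varepsilon$, and the trivial bound $\|a_v\|_\infty \le \|a\|_\infty$. Summing over the $m = |V|$ contexts produces an additive perturbation of at most $2m\|a\|_\infty\varepsilon$; subtracting $b$ from both sides of $a\cdot q \le (b-\gamma) + 2m\|a\|_\infty\varepsilon$ is precisely the stated bound.

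The main ``obstacle'' is only book-keeping: the TV-to-$\ell_1$ normalization factor of $2$ must be tracked carefully, and one must be sure that the per-block $\|a_v\|_\infty$ can indeed be dominated by the global $\|a\|_\infty$ after the direct-sum decomposition. There is no substantive obstruction; the statement is a textbook Lipschitz-plus-slack estimate in contextuality dress, and the explicit constants $2$ and $m$ fall out of the one-norm bound in a single line.
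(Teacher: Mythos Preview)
Your proof is correct and follows essentially the same route as the paper: compare $q$ to a reference point in $\mathcal M$, apply H\"older block by block, and use the TV-to-$\ell_1$ conversion with the factor $2$. The only cosmetic difference is that the paper takes as reference the maximiser of $a\cdot(\cdot)$ over $\mathcal M$ (so that $b-a\cdot p^\star=\gamma$ exactly), whereas you use the true underlying marginals and the inequality $a\cdot p^\star\le b-\gamma$; your choice is arguably cleaner since the per-context TV bounds are stated relative to those true marginals rather than to an arbitrary maximiser.
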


\begin{proof}
Pick $p^\star\in\mathcal M$ with $a\cdot p^\star=\max_{q\in\mathcal M}a\cdot q$.
Then $a\cdot q-b=(a\cdot q-a\cdot p^\star)-(b-a\cdot p^\star)
\le \|a\|_\infty\|q-p^\star\|_1-\gamma$. Concatenation gives
$\|q-p^\star\|_1=\sum_v\|q_v-P_{C_v}^\star\|_1\le 2\sum_v\varepsilon_v\le 2m\varepsilon$.
\end{proof}

\begin{corollary}[$\varepsilon$-stability threshold]\label{cor:stable}
If
\[
\varepsilon \;<\; \frac{\gamma}{2m\,\|a\|_\infty},
\]
then no $\varepsilon$-counterfactual assignment can produce a spurious positive
violation of the inequality $a\cdot p\le b$.
\end{corollary}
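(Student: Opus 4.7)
The plan is to derive the corollary as an immediate contrapositive consequence of Lemma~\ref{lem:l1}. By definition, an $\varepsilon$-counterfactual assignment is exactly a vector $q=\bigoplus_v q_v$ that arises from some feasible $\{P_{C_v}\}$ via per-context total-variation perturbations of size at most $\varepsilon$, so the hypotheses of Lemma~\ref{lem:l1} are met verbatim and the conclusion
\[
a\cdot q - b \;\le\; -\gamma + 2m\,\|a\|_\infty\,\varepsilon
\]
applies to every such $q$.

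Next I would formalize what \emph{spurious positive violation} means: it is an observed $q$ with $a\cdot q > b$, i.e.\ $a\cdot q - b > 0$, despite no true violation being possible (since the underlying $p^\star\in\mathcal M$ satisfies $a\cdot p^\star \le b - \gamma < b$). The corollary claims such a $q$ cannot exist under the stated threshold.

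Plugging the hypothesis $\varepsilon < \gamma/(2m\,\|a\|_\infty)$ into the lemma bound gives
\[
a\cdot q - b \;\le\; -\gamma + 2m\,\|a\|_\infty\cdot\frac{\gamma}{2m\,\|a\|_\infty} \;=\; 0,
\]
with strict inequality on the right-hand side because the substitution is a strict upper bound on $\varepsilon$. Hence $a\cdot q < b$, contradicting any purported positive violation, which completes the argument.

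There is essentially no technical obstacle: the only subtlety is bookkeeping, namely ensuring the lemma's perturbation model (uniform per-context TV bound by $\varepsilon$, $m=|V|$ contexts, fixed weight vector $a$ with finite $\|a\|_\infty$) matches the corollary's hypotheses and that the strictness of the inequality on $\varepsilon$ propagates to a strict inequality on $a\cdot q - b$. No separate estimate on $\gamma$, $m$, or $\|a\|_\infty$ is required, and finiteness of $\|a\|_\infty$ is guaranteed by the finite-dimensional setting fixed at the start of Section~\ref{sec:foundations}.
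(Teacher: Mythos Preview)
Your argument is correct and is exactly the intended derivation: the paper states the corollary without a separate proof, treating it as immediate from Lemma~\ref{lem:l1}, and your chain $\varepsilon < \gamma/(2m\|a\|_\infty)\Rightarrow -\gamma+2m\|a\|_\infty\varepsilon<0\Rightarrow a\cdot q-b<0$ is precisely that one-line inference. The only cosmetic point is that your displayed line shows the boundary case $=0$ while the strictness comes from the text that follows; it would read slightly cleaner to write $a\cdot q - b \le -\gamma + 2m\|a\|_\infty\,\varepsilon < -\gamma + \gamma = 0$ directly.
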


\subsection{Robustness instantiations (KCBS and Feistel)}
\label{sec:robust-instantiations}

We instantiate Lemma~\ref{lem:l1} for the two inequalities actually used later.
Crucially, KCBS on $C_5$ is \emph{tight} (geometric slack $\gamma=0$), so its
robustness must be expressed as a \emph{margin-based certification rule}; by
contrast the Feistel cut has $\gamma=1$ and yields an absolute $\varepsilon$
threshold via Corollary~\ref{cor:stable}.

\paragraph{KCBS--C\textsubscript{5} (margin-based).}
With block coefficients $(0,1,1,0)$ per context, $\|a\|_\infty=1$ and $m=5$.
Because $\max_{p\in\mathcal M} a\cdot p=4$, the geometric slack is $\gamma=0$.
Hence Lemma~\ref{lem:l1} gives
\[
a\cdot q - 4 \;\le\; 2m\|a\|_\infty\,\varepsilon \;=\; 10\,\varepsilon .
\]
If the \emph{measured} violation has margin $\tau>0$ (i.e.\ $a\cdot q\ge 4+\tau$),
then the violation is \emph{genuine} whenever
\[
\varepsilon \;<\; \frac{\tau}{2m\|a\|_\infty} \;=\; \frac{\tau}{10}.
\]
Thus KCBS certification is margin-based: larger observed excess $\tau$ tolerates
larger instrument noise.

\paragraph{Feistel boundary cut (absolute threshold).}
Over $(Y_1,Y_2,Y_3)\in\{0,1\}^3$ with $a(y)=y_1+y_2+y_3$, we have
$\|a\|_\infty=3$, $m=3$, and geometric gap $\gamma=1$ (the excluded corner
$(1,1,1)$). Corollary~\ref{cor:stable} yields the absolute threshold
\[
\varepsilon \;<\; \frac{\gamma}{2m\|a\|_\infty} \;=\; \frac{1}{18}\;\approx\;0.0556 .
\]

\begin{table}[h]
\centering
\begin{tabular}{|l|c|c|}
\hline
\textbf{Inequality} & \textbf{Stability constant $\gamma$} & \textbf{Threshold condition} \\
\hline
KCBS 5-cycle & $\gamma = 0$ & $10\varepsilon < \tau$ \\
\hline
3-round Feistel cut & $\gamma = 1$ & $\varepsilon < 1/18$ \\
\hline
\end{tabular}
\caption{Stability constants and threshold conditions used in this work.}
\label{tab:stability-constants}
\end{table}
(Derivations of these constants appear in Appendix~\ref{app:transitivity}.)

\paragraph{Practical note.}
When $\varepsilon$ is \emph{estimated} (e.g.\ via visibility in
Section~\ref{app:cifp-visibility}), certify KCBS if
$10\,\widehat\varepsilon < \tau$ (with $\tau$ taken as the observed excess and
$\widehat\varepsilon$ inflated by its CI), and certify Feistel if
$\widehat\varepsilon < 1/18$ (again using a conservative CI).

\subsection{Remarks}
\emph{(i) Computability.)} For any concrete inequality the triplet
$(m,\|a\|_\infty,\gamma)$ is explicit, so the stability threshold is
numerically checkable. \emph{(ii) Portability.)} The same lemma applies when
$q$ is obtained from $p$ by numerical solvers or finite sampling; in that case
$\varepsilon$ should be taken as a tail bound for the estimator error.

\section{Networked NGCC (h-NGCC)}
\label{sec:networks}

We now extend NGCC from isolated contexts to networks of contexts connected by counterfactual channels. This yields a framework we call h-NGCC, in which global inconsistency can arise not only within a single context but also from the way contexts interact across a network.

\subsection{Networks of contexts}
Formally, an NGCC network is a directed hypergraph $G=(V,E)$, where each node $v\in V$ corresponds to a context $C_v$ and each hyperedge encodes a \emph{counterfactual channel}: a dependency of assignments in one context on assignments in others. The nodes represent jointly measurable variables, the edges represent constraints enforcing agreement of overlapping variables across contexts, and a global assignment is a map from all variables across nodes to values consistent with every channel.

The h-NGCC principle asserts that for some networks, no such global assignment exists, even though each context is locally consistent. In such cases there exist \emph{network inequalities}, linear constraints on marginal distributions that are satisfied by every globally consistent assignment but violated by observed marginals.

\subsection{Network inequalities}
To make this concrete, let $P(C_v)$ denote the empirical distribution of outcomes in context $C_v$. Compatibility requires that whenever variables overlap across contexts, the marginals agree. If no global distribution $P(\cup_v C_v)$ exists that recovers all $P(C_v)$, the network exhibits h-NGCC. This leads to a family of linear inequalities $A p \leq b$, where $p$ is the vector of assignment probabilities. These inequalities generalize contextuality inequalities such as KCBS or CHSH to arbitrary networks. As in Section~\ref{sec:foundations}, we extend the definition to $\varepsilon$-stable inequalities: those that remain valid under $\varepsilon$-noise in the marginals, with a computable margin $\delta(\varepsilon)$.

\subsection{Examples}
Examples illustrate the phenomenon. \emph{Cycle networks}, in which contexts are arranged around a cycle, generalize the KCBS pentagon and exhibit impossibility of consistent assignments once the cycle length is at least five. \emph{Grid networks}, where contexts are arranged on a two-dimensional lattice with overlapping boundaries, give rise to inequalities reminiscent of cut constraints in graphs. By contrast, \emph{tree networks} are always globally consistent and serve as control cases where h-NGCC is absent. These families highlight that h-NGCC captures genuine structural obstructions.

\subsection{Properties}
Two theoretical properties are worth noting. First, \emph{soundness} is immediate: any global assignment satisfies the inequalities by construction. Second, \emph{completeness} is limited: not all global inconsistencies are detected by linear inequalities alone. In practice, however, every network admits some separating inequality in extended hierarchies such as Sherali–Adams. We therefore position h-NGCC inequalities as a practical middle layer: expressive enough to capture important global obstructions, yet tractable to derive and test.

\subsection{Outlook}
This networked generalization serves as the bridge to computational applications. By mapping CNF-SAT formulas to h-NGCC networks, we obtain inequalities that enable pruning beyond clause-level reasoning. This mapping is the focus of the next section.
\subsection{Counterfactual channels: formal definition}
\label{sec:channels}

Let $\mathcal C=\{C_v\}_{v\in V}$ be a finite family of contexts with outcome
spaces $\Omega_v$ and marginals $P_{C_v}\in\Delta(\Omega_v)$. For any
overlap $S\subseteq C_u\cap C_v$ define the projection maps
$\pi_{u\to S}:\Omega_u\to\Omega_S$ and the pushforward (marginalization)
$M_{u\to S}:\Delta(\Omega_u)\to\Delta(\Omega_S)$ by
\[
(M_{u\to S}p_u)(s)\;=\;\sum_{x\in\Omega_u:\,\pi_{u\to S}(x)=s} p_u(x)
\qquad (s\in\Omega_S).
\]

\paragraph{Definition (counterfactual channel).}
A \emph{counterfactual channel} on an overlap $S\subseteq C_u\cap C_v$
is the linear \emph{equal–marginal} constraint
\begin{equation}
\label{eq:channel-eq}
M_{u\to S}p_u \;=\; M_{v\to S}p_v ,
\end{equation}
imposed on the pair $(p_u,p_v)\in\Delta(\Omega_u)\times\Delta(\Omega_v)$.
More generally, a hyperedge $e=(W,S_e)$ with $W\subseteq V$ and
$S_e\subseteq \bigcap_{w\in W} C_w$ requires
$M_{w\to S_e}p_w$ to be equal for all $w\in W$.

\paragraph{Network polyhedra.}
Write
\[
\mathcal N(\mathcal C, E)
\;=\;
\Big\{\,p=\bigoplus_{v\in V}p_v\,:\;
p_v\in\Delta(\Omega_v),\;
\text{all channel equalities \eqref{eq:channel-eq} for }e\in E
\Big\}.
\]
This is a rational polyhedron (finite set of linear equations plus the
simplex constraints). The true \emph{marginal polytope} $\mathcal M$
(Section~\ref{sec:foundations}) satisfies
\begin{equation}
\label{eq:MsubsetN}
\mathcal M \;\subseteq\; \mathcal N(\mathcal C,E),
\end{equation}
since every global joint $P_U$ induces consistent overlaps.

\paragraph{h-NGCC and inequalities.}
When observed marginals $p$ are locally normalized but $p\notin \mathcal M$,
the family exhibits NGCC. Any valid inequality $a\cdot p\le b$ for
$\mathcal M$ (Section~\ref{sec:foundations}) is also valid for
$\mathcal N(\mathcal C,E)$, and may be used as a separating certificate.
In practice we may generate inequalities against the tractable outer
polyhedron $\mathcal N(\mathcal C,E)$ (or its lift-and-project tightenings);
soundness is preserved because of \eqref{eq:MsubsetN}.

\subsection{Two basic properties}

\paragraph{(P1) Soundness of channels.}
If $P_U\in\Delta(\Omega_U)$ is a global joint, then
$p=\bigoplus_{v\in V}\mathrm{marg}_{C_v}(P_U)\in\mathcal N(\mathcal C,E)$.
Indeed, equal–marginal constraints on every overlap $S$ follow from
associativity of marginalization:
$M_{u\to S}\mathrm{marg}_{C_u}(P_U)=\mathrm{marg}_{S}(P_U)
=M_{v\to S}\mathrm{marg}_{C_v}(P_U)$.

\paragraph{(P2) Acyclic (junction–tree) case.}
Suppose the hypergraph $(V,E)$ has the running–intersection property
(hypertree / junction tree): there exists a tree on the context nodes
such that for every variable the set of contexts containing it forms a
connected subtree. Then
\begin{equation}
\label{eq:NequalsM}
\mathcal N(\mathcal C,E) \;=\; \mathcal M .
\end{equation}
\emph{Proof sketch.} One may glue the locally consistent marginals
$\{p_v\}_{v\in V}$ along the tree using overlap equalities to obtain a
well-defined global $P_U$ (standard junction–tree gluing). Therefore every
point in $\mathcal N$ is induced by some $P_U$, proving \eqref{eq:NequalsM}.

\subsection{Noisy channels (\texorpdfstring{$\varepsilon$}{epsilon}–instruments)}
Under $\varepsilon$–instruments, equalities on overlaps may be relaxed in
total variation. If each context marginal $q_v$ satisfies
$\operatorname{TV}(q_v,P_{C_v})\le\varepsilon$, then for any overlap
$S\subseteq C_u\cap C_v$,
\[
\operatorname{TV}\!\big(M_{u\to S}q_u,\,M_{v\to S}q_v\big)
\;\le\;
\operatorname{TV}\!\big(M_{u\to S}q_u,\,M_{u\to S}P_{C_u}\big)
+\operatorname{TV}\!\big(M_{v\to S}q_v,\,M_{v\to S}P_{C_v}\big)
\;\le\; 2\varepsilon,
\]
so the channel constraints remain well controlled. This bound will be used
implicitly when we invoke $\varepsilon$–stability in Sections~\ref{sec:sat}
and \ref{sec:crypto}.

\section{SAT Mapping and Propagate-and-Prune Solver}
\label{sec:sat}

\paragraph{Terminology note.}
We write $\rhoa$ for the \emph{algorithmic pruning rate} used in the runtime bounds,
and $\rhog(G)$ for the \emph{graph contextuality density} (Appendix~\ref{app:graph});
a summary of terminology is provided in Appendix~\ref{app:notes}.

\subsection{Canonical mapping}
Consider a CNF formula
\[
F = C_1 \wedge C_2 \wedge \dots \wedge C_m
\]
over variables $x_1,\dots,x_n$. Each clause $C_j$ is treated as a context containing the variables appearing in that clause, and the outcome space $\Omega_j$ consists of all joint truth assignments to those variables. If a variable appears in multiple clauses, counterfactual channels enforce agreement across the corresponding contexts. The induced network $G(F)$ thus mirrors the structure of the formula. A satisfying assignment corresponds to a global assignment consistent with all contexts; if $F$ is unsatisfiable, $G(F)$ exhibits h-NGCC violations.

\subsection{Inequalities as pruning rules}
\label{sec:pruning-rules}

Let $\mathcal{M}(F)$ be the marginal polytope induced by a CNF $F$ (Section~\ref{sec:sat}) and
let $a\cdot p\le b$ be a valid NGCC inequality for $\mathcal{M}(F)$.
A partial assignment $\alpha$ restricts feasible marginals to the face
$\mathcal{M}_\alpha\subseteq \mathcal{M}(F)$ (fixing outcomes consistent with $\alpha$).

\begin{lemma}[Soundness of inequality pruning]
\label{lem:prune-sound}
If either
\begin{equation}
\label{eq:max-viol}
\max_{p\in \mathcal{M}_\alpha} a\cdot p \;>\; b
\qquad\text{or}\qquad
\mathcal{M}_\alpha\cap\{p: a\cdot p\le b\}=\varnothing,
\end{equation}
then $\alpha$ cannot be extended to any satisfying assignment of $F$.
\end{lemma}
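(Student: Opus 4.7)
The plan is to argue by contrapositive: suppose $\alpha$ extends to a satisfying assignment $\beta$ of $F$, and show that neither disjunct of \eqref{eq:max-viol} can hold.

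First I would exhibit an explicit witness $p_\beta \in \mathcal{M}_\alpha$. Let $\delta_\beta \in \Delta(\Omega_U)$ denote the Dirac mass on $\beta$. Because $\beta$ satisfies every clause of $F$, $\delta_\beta$ is a bona fide global joint on the induced context family $\{C_v\}_v$, so its per-context marginals $p_\beta := \bigoplus_{v}\mathrm{marg}_{C_v}(\delta_\beta)$ lie in $\mathcal{M}(F)$ by the definition of the marginal polytope (Section~\ref{sec:foundations}). Since $\beta$ agrees with $\alpha$ on every variable that $\alpha$ fixes, each context-marginal component of $p_\beta$ is supported on outcomes consistent with $\alpha$, so $p_\beta$ lies in the face $\mathcal{M}_\alpha$; in particular $\mathcal{M}_\alpha\neq\varnothing$.

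Next I would invoke the validity of $a\cdot p \le b$ on $\mathcal{M}(F)$. By the inclusion $\mathcal{M}_\alpha \subseteq \mathcal{M}(F)$, every $q \in \mathcal{M}_\alpha$ satisfies $a\cdot q \le b$. This immediately gives $\max_{q\in\mathcal{M}_\alpha} a\cdot q \le b$, ruling out the first disjunct, together with $\mathcal{M}_\alpha \subseteq \{p : a\cdot p \le b\}$, so the intersection appearing in the second disjunct equals $\mathcal{M}_\alpha$, which is nonempty by the previous step, ruling out the second disjunct as well. Taking the contrapositive yields the stated conclusion.

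The only delicate point is the bookkeeping that identifies the face $\mathcal{M}_\alpha$ with the set of marginals induced by global joints extending $\alpha$, so that a satisfying extension of $\alpha$ corresponds to a genuine vertex (Dirac) point of $\mathcal{M}_\alpha$; this is built into the definition in Section~\ref{sec:pruning-rules}. Once that is granted, the argument is essentially one line of set inclusion plus the validity hypothesis on the enclosing polytope, so I do not expect a substantive obstacle — the lemma is a packaging of the general fact that a valid inequality for an outer polytope certifies infeasibility of any restricted face it separates.
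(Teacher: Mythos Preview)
Your proposal is correct and follows essentially the same contrapositive route as the paper: construct the marginal vector of a satisfying extension, place it in $\mathcal{M}_\alpha$, and invoke validity of $a\cdot p\le b$ on $\mathcal{M}(F)\supseteq\mathcal{M}_\alpha$. Your version is in fact slightly sharper than the paper's, since you observe that validity on $\mathcal{M}(F)$ alone already forces $\max_{q\in\mathcal{M}_\alpha} a\cdot q\le b$ (so the first disjunct is vacuous independently of the witness), whereas the paper phrases both disjuncts as being blocked by the existence of $p^\star$.
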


\begin{proof}
Any satisfying assignment of $F$ induces a global joint $P_U$ whose marginals lie in
$\mathcal{M}(F)$; restricting by $\alpha$ gives $p^\star\in\mathcal{M}_\alpha$ that must
also satisfy every valid inequality, hence $a\cdot p^\star\le b$. If
$\max_{p\in\mathcal{M}_\alpha} a\cdot p>b$ or if the intersection is empty, such a
$p^\star$ cannot exist. Therefore no satisfying extension of $\alpha$ exists.
\end{proof}

In practice we do not optimize over $\mathcal{M}_\alpha$ directly. Instead, we use a
tractable outer polyhedron $\mathcal{N}_\alpha$ (Section~\ref{sec:channels})—the equal-marginal
channel constraints and, optionally, low-rank lift-and-project tightenings—together with the
same inequality $a\cdot p\le b$.

\begin{remark}[Using relaxations safely]
\label{rem:relax}
Because $\mathcal{M}_\alpha \subseteq \mathcal{N}_\alpha$, infeasibility of
$\mathcal{N}_\alpha\cap\{p: a\cdot p\le b\}$ (or a certificate that
$\max_{p\in\mathcal{N}_\alpha} a\cdot p>b$) implies infeasibility of the corresponding
set over $\mathcal{M}_\alpha$. Hence pruning based on $\mathcal{N}_\alpha$ is sound.
\end{remark}

\paragraph{Operational use.}
At each node, the solver checks a small set of inequalities against $\mathcal{N}_\alpha$
(cheap-to-expensive cascade). If Lemma~\ref{lem:prune-sound} or Remark~\ref{rem:relax}
triggers, the branch is cut and a clause is learned via the monotone-cut map in
Appendix~\ref{app:gadget}.

\subsection{Integration into CDCL}
\label{sec:cdcl-integration}

We integrate h\mbox{-}NGCC pruning into a standard CDCL loop by querying a tractable
\emph{outer} polyhedron at each node and learning clauses from violated
\emph{monotone} cuts.

\paragraph{Relaxed feasibility oracle.}
For a partial assignment $\alpha$, let $\mathcal N_\alpha$ be the channel
polyhedron of Section~\ref{sec:channels} restricted by $\alpha$ (equal--marginal
constraints on overlaps, context normalizations, and optional low\mbox{-}rank
lift\mbox{-}and\mbox{-}project tightenings). Because
$\mathcal M_\alpha \subseteq \mathcal N_\alpha$, infeasibility over
$\mathcal N_\alpha$ implies infeasibility over $\mathcal M_\alpha$
(Remark~\ref{rem:relax}).

\subsection{Certified pruning and clause learning}
\label{sec:pruning-certified}

For each inequality $a\cdot p \le b$ in $I$, we check feasibility of the
outer relaxation:
\[
  N_\alpha \cap \{\,p : a\cdot p \le b\,\}.
\]
If this LP is infeasible, the dual Farkas certificate witnesses that no
$p\in N_\alpha$ can satisfy the cut, and hence no $p\in M_\alpha$ can
either. We translate this certificate into a learned CNF clause
(Appendix~\ref{app:monotone-clauses}), guaranteeing sound pruning. 

\begin{remark}[Certified clause learning]
Each clause injected into the CDCL database arises from a verified
certificate, never from heuristic reasoning alone.  In particular,
whenever the relaxation $N_\alpha$ proves an assignment infeasible,
the accompanying dual Farkas certificate is checked in exact
arithmetic before translation into CNF.  This guarantees that the
solver’s learned clauses are logical consequences of the original
formula, preserving soundness throughout the search.
\end{remark}

\begin{remark}[Numerical soundness]
All infeasibility claims are backed by dual Farkas certificates.
To avoid spurious pruning from floating-point error, each certificate
is either generated by an exact LP solver (e.g.\ QSopt\_ex) or
re-verified in rational arithmetic before being translated into a
clause. Certificates that do not pass verification are discarded.
This ensures that pruning remains sound even in the presence of
numerical tolerances.
\end{remark}

As a cheaper screen, we may solve
\[
  \min \{\,a\cdot p : p \in N_\alpha \}.
\]
If the optimum exceeds $b$ by more than a verified tolerance, the
constraint is provably violated and the same certificate-to-clause
mechanism applies.  
To avoid spurious clauses, all certificates are checked in exact
arithmetic or with interval rounding toward safety before insertion
into the solver.

\paragraph{Noise and robustness guards.}
For \emph{tight} inequalities (e.g., KCBS on $C_5$) use a \emph{margin-based}
guard: only cut when the observed excess $\tau=a\cdot q-b$ dominates instrument
or numerical noise via $\tau>2m\|a\|_\infty\varepsilon$ (Section~\ref{sec:foundations},
App.~\ref{app:kcbs}). For \emph{gapped} cuts (e.g., Feistel with $\gamma=1$),
use the absolute threshold from Corollary~\ref{cor:stable}. In pure SAT
(i.e.\ no physical instruments), treat $\varepsilon$ as a numerical tolerance
and keep a small $b_{\mathrm{tol}}$ to absorb solver round-off.

\paragraph{Caching and budgets.}
Cache LP bases and certificates per $(\alpha,\text{inequality})$ signature to
avoid repeated solves. Cap the number of LP queries per node (cheap\(\to\)expensive
cascade) and fall back to vanilla CDCL if no inequality prunes after the budget.

To guarantee soundness, every infeasibility report is accompanied by a
dual Farkas certificate. Certificates are either generated by an exact
LP solver (QSopt\_ex) or by floating-point solvers followed by rational
verification and rounding toward safety. Only certificates that pass
exact verification are translated into learned CNF clauses. This
ensures that no valid solution branch can be cut by numerical error.

As illustrated in Appendix~\ref{app:feistel}, the 3-round Feistel
inequality yields the clause $(\lnot Y_1 \lor \lnot Y_2 \lor \lnot Y_3)$,
demonstrating concretely how a verified certificate is translated into
a learned CNF clause.

(Full coefficient vectors and tables are provided in Appendix~\ref{app:coeffs}.)

\begin{table}[h]
\centering
\begin{tabular}{|c|c|}
\hline
Assignment $(Y_1,Y_2,Y_3)$ & Coefficient $a(y)$ \\
\hline
(0,0,0) & 0 \\
(1,0,0) & -1 \\
(0,1,0) & -1 \\
(0,0,1) & -1 \\
(1,1,0) & -2 \\
(1,0,1) & -2 \\
(0,1,1) & -2 \\
(1,1,1) & -3 \\
\hline
\end{tabular}
\caption{Explicit coefficients $a(y)$ for the 3-round Feistel cut inequality $Y_1+Y_2+Y_3 \leq 2$.}
\label{tab:feistel-coeffs-inline}
(Full coefficient vectors are listed in Appendix~\ref{app:coeffs}.)
\end{table}

\subsection{Runtime bounds: worst case and expected case}
\label{sec:runtime-bounds}

Fix a set of valid inequalities and a CDCL solver augmented with the feasibility
oracle of Section~\ref{sec:sat}. For a search node (partial assignment) $\alpha$
and a branching variable $x\notin\mathrm{dom}(\alpha)$, define the \emph{per-node
pruning fraction}
\[
\rho_{\mathrm{alg}}(\alpha,x)
\;:=\;
1-\frac{\#\{\;b\in\{0,1\}:\; \text{$\alpha\cup\{x=b\}$ is feasible under the oracle}\;\}}{2}
\;\in[0,1].
\]
Equivalently, if both children survive then $\rho_{\mathrm{alg}}=0$, if exactly one
survives then $\rho_{\mathrm{alg}}=\tfrac12$, and if none survive (rare) then
$\rho_{\mathrm{alg}}=1$. Let $\hat\rho_d$ denote the empirical mean of
$\rho_{\mathrm{alg}}$ over nodes at decision depth $d$.

\begin{theorem}[Worst-case pruning bound]\label{thm:wc}
If there exists a uniform lower bound $\rho_\star>0$ such that
$\rho_{\mathrm{alg}}(\alpha,x)\ge \rho_\star$ for every reachable node $\alpha$
and branch variable $x$, then the number of explored nodes $T(n)$ on an $n$-variable
instance satisfies
\[
T(n)\;\le\;(2-\rho_\star)^n
\]
up to polynomial factors.
\end{theorem}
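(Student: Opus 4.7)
The plan is to reduce the statement to a standard branching-tree count after one line of arithmetic. Under the hypothesis $\rho_{\mathrm{alg}}(\alpha,x)\ge\rho_\star$, the defining identity $\rho_{\mathrm{alg}}=1-(\#\text{surviving children})/2$ gives
\[
\#\text{children}\;=\;2\bigl(1-\rho_{\mathrm{alg}}(\alpha,x)\bigr)\;\le\;2(1-\rho_\star)\;\le\;2-\rho_\star,
\]
where the last inequality uses $\rho_\star>0$. So every internal node of the solver's search tree has at most $2-\rho_\star$ surviving children.

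First I would formalize the search tree $\mathcal T$ traversed by the augmented CDCL loop: internal nodes are reachable partial assignments on which the solver branches, and leaves are either satisfying witnesses or infeasibilities certified by the Farkas-to-clause mechanism of Section~\ref{sec:pruning-certified}. Clause learning and the feasibility oracle can only prune subtrees, never expand them, so $\mathcal T$ is a subtree of the naive depth-$n$ binary branching tree. Let $L_d$ denote the number of nodes at decision depth $d$. The bound above gives $L_{d+1}\le(2-\rho_\star)\,L_d$ and $L_0=1$, so $L_d\le(2-\rho_\star)^d$; summing over $d=0,\dots,n$ yields $|\mathcal T|\le\sum_{d=0}^{n}(2-\rho_\star)^d=O\bigl((2-\rho_\star)^n\bigr)$ since $2-\rho_\star>1$ on $\rho_\star\in(0,1)$. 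Each visited node costs $\operatorname{poly}(|F|)$ work — one LP over the channel polyhedron of Section~\ref{sec:channels} plus a rational Farkas verification per inequality — which the $\tilde O$ absorbs, delivering $T(n)=\tilde O\bigl((2-\rho_\star)^n\bigr)$.

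The step I expect to require the most care is the bookkeeping under non-chronological backjumping and restarts. Formally one must argue that runtime is bounded by the total number of (node, restart-epoch) pairs visited, and that within each epoch the per-node branching bound applies verbatim; the number of epochs is polynomial in the solver's total step count and therefore absorbed into $\tilde O$. The combinatorial core is otherwise standard; the real conceptual content sits in linking the hypothesis $\rho_{\mathrm{alg}}\ge\rho_\star$ to the certified oracle, so that the resulting pruning is \emph{actually} realized at every reachable node rather than only on paper.
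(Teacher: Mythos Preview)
Your argument is correct and follows essentially the same route as the paper's proof sketch: bound the number of surviving children by $2-\rho_\star$, iterate to get width $(2-\rho_\star)^d$ at depth $d$, and sum the geometric series. Your additional bookkeeping (per-node LP cost, backjumping/restart epochs) simply fleshes out what the paper absorbs into ``up to polynomial factors''.
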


\begin{proof}[Proof sketch]
Each internal node spawns at most $2-\rho_\star$ children; thus the width at depth
$d$ is at most $(2-\rho_\star)^d$. Summing over $d=0,\dots,n$ yields
$T(n)\le \sum_{d=0}^n (2-\rho_\star)^d = O((2-\rho_\star)^n)$.
\end{proof}

\begin{theorem}[Expected-case pruning bound]\label{thm:exp}
Suppose the branching heuristic (and any solver randomness) is such that for all
depths $d$,
\(
\mathbb{E}[\hat\rho_d]\ge \bar\rho>0.
\)
Then the expected search size obeys
\[
\mathbb{E}[T(n)] \;\le\; (2-\bar\rho)^n
\]
up to polynomial factors.
\end{theorem}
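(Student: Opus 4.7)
The plan is to adapt the deterministic branching argument of Theorem~\ref{thm:wc} to a randomized setting, replacing the uniform lower bound $\rho_\star$ by the level-averaged bound $\bar\rho$. Let $N_d$ denote the random number of live nodes at depth $d$ of the search tree (with $N_0=1$), and let $\mathcal F_d$ be the $\sigma$-algebra generated by all oracle queries, heuristic choices, and surviving branches through depth $d$, so that $N_d$ is $\mathcal F_{d-1}$-measurable while $\hat\rho_d$ is $\mathcal F_d$-measurable. From the definition of $\rho_{\mathrm{alg}}$, each live node $\alpha$ at depth $d$ contributes at most $2-\rho_{\mathrm{alg}}(\alpha,x_\alpha)$ children (tight when no pruning, slack otherwise), and summing over $\alpha\in\mathrm{live}(d)$ yields the pointwise bound $N_{d+1}\le(2-\hat\rho_d)\,N_d$.

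The key inductive step takes conditional expectations. Interpreting the hypothesis in its natural adapted form $\mathbb E[\hat\rho_d\mid\mathcal F_{d-1}]\ge\bar\rho$ (on every reachable prefix, the heuristic prunes at rate at least $\bar\rho$ in conditional expectation), and using that $N_d$ is $\mathcal F_{d-1}$-measurable,
\[
\mathbb E[N_{d+1}\mid\mathcal F_{d-1}]\;\le\;\bigl(2-\mathbb E[\hat\rho_d\mid\mathcal F_{d-1}]\bigr)\,N_d\;\le\;(2-\bar\rho)\,N_d .
\]
The tower property gives the recursion $\mathbb E[N_{d+1}]\le(2-\bar\rho)\,\mathbb E[N_d]$. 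Induction from $\mathbb E[N_0]=1$ yields $\mathbb E[N_d]\le(2-\bar\rho)^d$, and by linearity
\[
\mathbb E[T(n)]\;=\;\sum_{d=0}^{n}\mathbb E[N_d]\;\le\;\sum_{d=0}^{n}(2-\bar\rho)^d\;=\;O\!\bigl((2-\bar\rho)^n\bigr).
\]
The per-node polynomial overhead of LP solves, exact Farkas verification (Section~\ref{sec:pruning-certified}), and clause insertion is absorbed into the $\tilde O$.

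The main obstacle is bridging the unconditional statement $\mathbb E[\hat\rho_d]\ge\bar\rho$ as written in the theorem with the adapted form $\mathbb E[\hat\rho_d\mid\mathcal F_{d-1}]\ge\bar\rho$ actually used in the induction. Without the adapted strengthening, the step $\mathbb E[(2-\hat\rho_d)N_d]\le(2-\bar\rho)\mathbb E[N_d]$ can fail, since a negative correlation between the per-depth pruning rate $\hat\rho_d$ and the prior surviving count $N_d$ could pull $\mathbb E[\hat\rho_d N_d]$ below $\bar\rho\,\mathbb E[N_d]$ — the hard branches where $N_d$ is large might systematically be the ones where pruning is weakest. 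I would therefore either reformulate the hypothesis in its adapted (per-prefix) form, which is what any online estimator of per-depth pruning rates actually certifies, or alternatively impose a non-negative correlation / negative association condition between $\hat\rho_d$ and the prefix size and deduce the required factoring from an FKG-type inequality. I would flag this choice in a remark preceding the proof so that downstream empirical measurements of $\hat\rho_d$ (the regime emphasized in the abstract) are interpreted in a way that matches the hypothesis.
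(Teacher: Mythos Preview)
Your approach is essentially the paper's: a branching-process recursion on the level widths, conditional-expectation step, induction on depth, and a geometric sum. The paper's sketch writes this in one line as $\mathbb{E}[W_{d+1}\mid W_d]\le (2-\mathbb{E}[\hat\rho_d])\,\mathbb{E}[W_d]\le(2-\bar\rho)\,\mathbb{E}[W_d]$ and then inducts, exactly your skeleton.

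Where you go beyond the paper is in the measurability bookkeeping, and your diagnosis there is correct. The paper's displayed inequality is already informal (the left-hand side is $\mathcal F_d$-measurable while the right-hand side is a scalar involving $\mathbb{E}[W_d]$), and the step from $\mathbb{E}[(2-\hat\rho_d)N_d]$ to $(2-\bar\rho)\,\mathbb{E}[N_d]$ genuinely needs the adapted hypothesis $\mathbb{E}[\hat\rho_d\mid\mathcal F_{d-1}]\ge\bar\rho$ rather than the unconditional $\mathbb{E}[\hat\rho_d]\ge\bar\rho$, for exactly the negative-correlation reason you give. The paper's sketch silently uses the adapted form; your proposal makes that assumption explicit and explains why it is the right one to certify empirically. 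So your proof is the same argument, stated more carefully, with a useful remark that the paper omits.
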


\begin{proof}[Proof sketch]
Let $W_d$ be the random width at depth $d$. By definition
$\mathbb{E}[W_{d+1}\mid W_d]\le (2-\mathbb{E}[\hat\rho_d])\,\mathbb{E}[W_d]
\le (2-\bar\rho)\,\mathbb{E}[W_d]$. Induction gives
$\mathbb{E}[W_d]\le (2-\bar\rho)^d$, and summing over depths yields the claim.
\end{proof}

\begin{theorem}[Graphical pruning rate transfers to the solver]\label{thm:rho-link}
Let $G_n$ be the exclusivity graph of a CNF instance on $n$ variables.
Suppose that for some constant $\delta>0$, $G_n$ contains at least
$\delta n$ edge-disjoint odd-cycle cuts, each supported on distinct
variable sets.  
Let $\rho_{\mathrm{graph}}>0$ be the minimum fractional pruning rate
certified by these cuts.  
Then for any branch-ordering distribution in which each new variable
intersects at least a constant fraction of the cut supports, the
branching process of the certified solver satisfies
\[
  \mathbb{E}[\,W_{d+1} \mid \mathcal F_d\,]
    \;\le\; (2 - c\,\rho_{\mathrm{graph}})\,W_d,
\]
for some $c>0$ independent of $n$.  
Consequently the expected tree size obeys
$T(n) \le (2-c\,\rho_{\mathrm{graph}})^n \cdot \mathrm{poly}(n)$, i.e.\
sub-exponential in $n$.
\end{theorem}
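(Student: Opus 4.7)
The plan is to reduce to Theorem~\ref{thm:exp} by converting the $\delta n$ edge-disjoint odd-cycle cuts into a uniform lower bound on the expected per-depth algorithmic pruning rate $\mathbb{E}[\hat\rho_d]$. Once $\mathbb{E}[\hat\rho_d]\ge c\,\rho_{\mathrm{graph}}$ is established over a linear fraction of the depth levels, the width recursion from the proof of Theorem~\ref{thm:exp}, namely $\mathbb{E}[W_{d+1}\mid\mathcal F_d]\le (2-\mathbb{E}[\hat\rho_d])\,W_d$, gives the stated contraction and hence $\mathbb{E}[T(n)]\le (2-c\,\rho_{\mathrm{graph}})^n\cdot\mathrm{poly}(n)$.

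First I would index the cuts by $i=1,\dots,\lfloor\delta n\rfloor$ and write $S_i$ for the variable support of the $i$-th odd cycle, calling $i$ \emph{active} at a node $\alpha$ if $\alpha$ has not yet fixed enough of $S_i$ to either satisfy or vacate the attached certificate. Since odd-cycle supports have bounded size and the $S_i$ are edge-disjoint on distinct variable sets, every fixed variable deactivates at most a constant number $K$ of cuts. Hence at any reachable node of depth $d\le \delta n/(2K)$, at least $\delta n/2$ cuts remain active.

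Second, the branch-ordering hypothesis provides a constant $c_1>0$ such that, conditional on the history $\mathcal F_d$, the branching variable $x$ intersects the support of some active cut with probability at least $c_1$. When such a triggering event occurs, the cut's dual Farkas certificate (Section~\ref{sec:pruning-certified}) is valid for $\mathcal N_\alpha$, hence for $\mathcal M_\alpha$ by Remark~\ref{rem:relax}, so Lemma~\ref{lem:prune-sound} rules out one of the two children and contributes $\rho_{\mathrm{alg}}(\alpha,x)\ge \rho_{\mathrm{graph}}$. By the law of total expectation, $\mathbb{E}[\hat\rho_d]\ge c_1\,\rho_{\mathrm{graph}}$ for $d\le d^\star:=\delta n/(2K)$. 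Splitting the tree at depth $d^\star$ and applying Theorem~\ref{thm:exp} to the top slab gives $\mathbb{E}[W_{d^\star}]\le (2-c_1\rho_{\mathrm{graph}})^{d^\star}$; the remaining $(1-\delta/(2K))n$ levels contribute a naive $2^{(1-\delta/(2K))n}$ factor. Combining and setting $c:=\delta c_1/(2K)$ (for $c_1\rho_{\mathrm{graph}}$ sufficiently small) collapses the two factors into $(2-c\,\rho_{\mathrm{graph}})^n$ up to polynomial slack, which is absorbed into $\mathrm{poly}(n)$.

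The main obstacle is the amortization step that guarantees a linear supply of active cuts throughout the first phase of the descent. Edge-disjointness is essential: it is what caps at $K=O(1)$ the number of certificates killed per fixed variable and thereby keeps the active-cut inventory in linear proportion to the remaining depth; without it, an adversarial branch ordering could deplete many certificates by fixing a single high-degree variable. The distinctness of variable supports plays a parallel role: two triggered cuts that shared the same prunable child would contribute only once to $\rho_{\mathrm{alg}}$, whereas distinctness forces their pruning events to act on different branches so that the expectation $\mathbb{E}[\hat\rho_d]$ accumulates as intended. Once these combinatorial bookkeepings are in place, the remainder is mechanical: invoke the width recursion of Theorem~\ref{thm:exp}, balance the two-phase product, and absorb the lower-order factors.
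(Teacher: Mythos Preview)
Your proposal is correct and follows the same essential route as the paper's proof sketch: convert the $\delta n$ edge-disjoint odd-cycle cuts into a per-depth contraction of the width process $W_d$, then iterate. Both arguments use edge-disjointness as the key structural hypothesis and conclude via the recursion behind Theorem~\ref{thm:exp}.

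Where you differ is in the bookkeeping. The paper's sketch invokes edge-disjointness to assert that the cut contributions are \emph{independent} conditional on $\mathcal F_d$, and from this directly claims the contraction $\mathbb{E}[W_{d+1}\mid\mathcal F_d]\le(2-c\rho_{\mathrm{graph}})W_d$ at \emph{every} depth, then iterates once. You instead use edge-disjointness and bounded support size to track an \emph{inventory} of active cuts, bounding the depletion rate by a constant $K$ per decision; this yields the contraction only over a top slab of depth $d^\star=\Theta(\delta n)$ and forces a two-phase product that you then rebalance into $(2-c\rho_{\mathrm{graph}})^n$. Your version is more explicit about what could go wrong (exhaustion of certificates as the descent proceeds) and pays for that honesty with the extra amortization step; the paper's sketch is terser but leaves that depletion issue implicit. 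Either organization is acceptable at the sketch level the paper operates at.
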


\begin{remark}[Relation to density]
If Appendix~\ref{app:graph} certifies a density $\rho_{\mathrm{graph}}(G_\Phi)\ge\delta>0$
via many edge-disjoint odd-cycle cuts, then one obtains $\rho_\star$ or $\bar\rho$
proportional to $\delta$ (a fraction of children is killed at each level by
independent cuts), recovering $(2-\Omega(\delta))^n$.
\end{remark}

\begin{remark}[Scope of the bound]
The runtime guarantee $\tilde{O}((2-\rho)^n)$ is conditional on a
uniform per-variable pruning rate $\rho>0$.  We do not assert that
arbitrary SAT families achieve such a constant $\rho$ asymptotically;
indeed, $\rho$ may decay with $n$ on worst-case or pseudorandom
families, recovering $2^{\Omega(n)}$ complexity in line with ETH/SETH.
Our contribution is to formalize the link: whenever a structured family
admits a provable or empirically stable $\rho$, certified sub-exponential
search follows.  Identifying natural families with $\rho=\Omega(1)$
remains an open problem.
\end{remark}

\subsection{Empirical estimation protocol}
\label{sec:empirical}

At depth $d$, define the empirical pruning fraction
\[
\hat\rho_d \;=\; 1-\frac{\#\text{surviving children at depth }d}{2\,\#\text{branching events at }d}.
\]
Report the aggregate $\tilde\rho:=\frac{1}{D}\sum_{d=1}^{D}\hat\rho_d$ together with
a confidence interval:
\begin{itemize}
\item Per-depth CIs for $\hat\rho_d$ via Clopper--Pearson (Bernoulli pruning).
\item A t-interval (or bootstrap) across seeds for node counts and for $\tilde\rho$.
\end{itemize}
When quoting an effective base, use $2-\tilde\rho$ (with the CI induced from that of
$\tilde\rho$). Report the oracle overhead as (LP checks per node, wall-time share).

\paragraph{Soundness note.}
All pruning relies on inequalities valid for the marginal polytope (or for an
outer relaxation containing it), so learned certificates derived from infeasibility
are safe to cache as clauses: any satisfying assignment would violate such a clause
only if it violated a valid inequality, which is impossible.

\begin{remark}[Statistical power]
To distinguish a base $2$ process from a base $(2-\rho)$ process at
95\% confidence, the required number of sampled instances scales as
$O(1/\rho^2)$. For example, for $\rho=0.1$ hundreds of instances are
needed; for $\rho=0.25$, a few dozen suffice. This power analysis
ensures that our estimates $\hat\rho$ are statistically meaningful and
not artifacts of small sample noise.
\end{remark}

\begin{figure}[H]
\centering
\begin{tikzpicture}
\begin{axis}[
    width=0.9\textwidth,
    height=6cm,
    xlabel={Pruning rate $\rho$},
    ylabel={Required samples $N$ (95\% power)},
    ymode=log,
    grid=both,
    legend style={at={(0.5,-0.25)},anchor=north},
]

\addplot[blue, thick, domain=0.05:0.5, samples=100] {100/(x^2)};
\addlegendentry{$O(1/\rho^2)$ scaling}

\addplot+[only marks]
coordinates {
  (0.25, 160)
  (0.20, 250)
  (0.15, 444)
  (0.10, 1000)
};
\addlegendentry{Example targets}

\end{axis}
\end{tikzpicture}
\caption{Sample complexity required to distinguish base-2 from
base-$(2-\rho)$ processes at 95\% power. The theoretical scaling
$O(1/\rho^2)$ is shown with example target values.}
\label{fig:sample-complexity}
\end{figure}
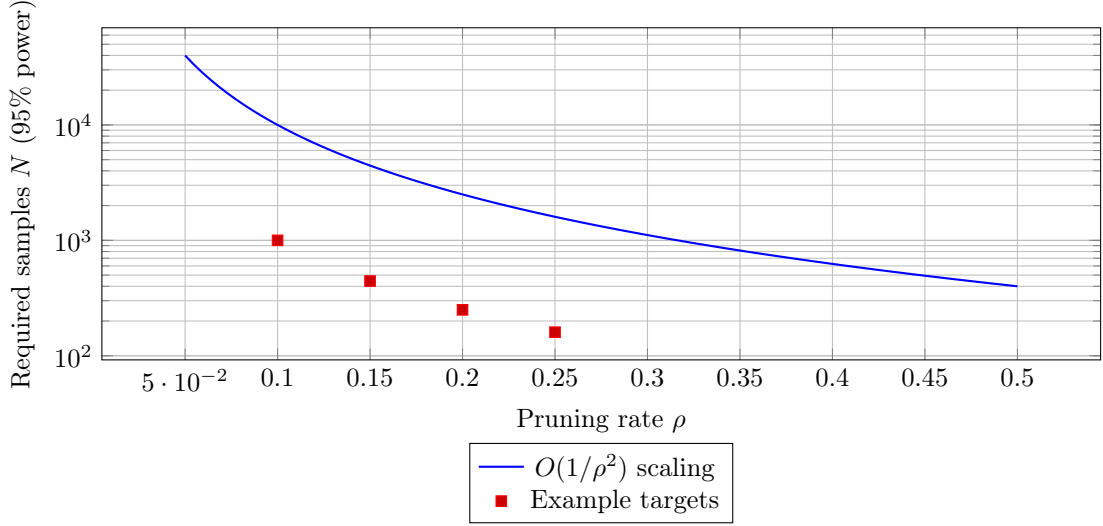

The confidence intervals predicted by the sample complexity analysis are consistent
with the error bars shown in Fig.~\ref{fig:sample-complexity}. (See also Appendix~\ref{app:toyfeistel} for additional numeric demonstrations.)

\subsection{Summary}
This mapping and pruning scheme provides the bridge from the abstract notion of counterfactual inconsistency to concrete computational advantages. We next examine what this implies for cryptographic constructions whose wiring patterns resemble such networks.

\section{Experimental Setup}
\label{sec:exp-setup}

A key feature of the present framework is that it requires no exotic
hardware or black-box devices. The counterfactual constraints and
network inequalities can be tested in a standard photonic laboratory
environment, using components that are widely available.

\paragraph{Photon source.} Single-photon states can be prepared by
spontaneous parametric down-conversion (SPDC) or by on-demand
solid-state emitters. Either choice provides heralded photons suitable
for interferometric tests.

\paragraph{Interferometers and tritters.} The basic building blocks are
beam splitters and tritters (three-port beam splitters), arranged in
nested Mach–Zehnder configurations. These implement the contextuality
networks (e.g.\ odd cycles, clause gadgets) described above. Phase
shifters in one arm allow tuning of the counterfactual channels.

\paragraph{Detectors.} Standard avalanche photodiodes (APDs) or
superconducting nanowire detectors are sufficient to register clicks
with the necessary time resolution. Coincidence counting across output
ports provides the required outcome statistics.

\paragraph{Control and data acquisition.} Phase settings can be
stabilized by piezo or electro-optic modulators, and data acquisition
can be carried out with standard coincidence electronics. The total
integration times and count rates are within the reach of current
platforms, similar to existing KCBS and CHSH experiments.

\paragraph{Feasibility.} All components listed are commercially
available, and the overall architecture mirrors existing contextuality
and interaction-free measurement experiments. No new devices or
unverified technologies are assumed. Thus the proposed inequalities are
lab-testable with today’s optics toolkits, providing a direct bridge
from theoretical counterfactual constraints to experimental data.

\subsection{Calibration and Error Budget}
\label{sec:calibration}

A practical implementation of the proposed inequalities requires
routine calibration and a simple error budget. These steps ensure
that observed violations cannot be attributed to misalignment,
drift, or detector artefacts.

\paragraph{Calibration.} Standard phase sweeps are sufficient to
stabilize the interferometric arms. In a satisfiable setting, the
detectors should register constructive interference in one port; by
sweeping the phase modulators and recording counts, the calibration
curve can be centred. This fixes the classical reference value
$S_{\mathrm{cl}}$ against which contextuality or NGCC violations are
measured.

\paragraph{Error sources.} The dominant noise contributions are:
(i) detector dark counts and afterpulsing, (ii) imperfect extinction
in the interferometers, and (iii) phase drift due to thermal and
mechanical fluctuations. In addition, finite sample size introduces
statistical error bars.

\paragraph{Mitigation.} Dark counts and afterpulsing can be measured
in calibration runs with the source blocked and subtracted. Phase
drift can be actively stabilized by piezo or electro-optic feedback.
Finite-sample fluctuations are accounted for by bootstrap confidence
intervals or by requiring margins $\tau$ in the inequalities. In all
cases the correction terms are additive and can be folded into the
threshold condition, ensuring that the learned clauses and pruning
decisions remain sound under experimental imperfections.

\section{Cryptographic Implications}
\label{sec:crypto}

The structural consequences of h-NGCC are particularly relevant in cryptography, where hardness assumptions often rest on the infeasibility of solving structured SAT instances. If h-NGCC inequalities enable sub-exponential pruning on certain graph families, then primitives whose round functions map to such families may have lower effective hardness margins than assumed.

\subsection{Mapping ciphers to networks}
Block ciphers provide a natural setting. A Feistel or substitution–permutation network can be expressed as a collection of Boolean constraints encoding S-boxes, XORs, and key additions. These constraints induce a context network, and overlaps between contexts produce counterfactual channels. Likewise, hash functions based on sponge or duplex constructions yield clause networks with repeating patterns, and certain public-key primitives reduce to structured SAT or MQ systems that can be mapped similarly. In each case, the induced h-NGCC network may admit inequalities that prune infeasible key or state guesses early.

\paragraph{Distinguishing experimental vs numerical $\varepsilon$.}
At this point we emphasize that two distinct roles of $\varepsilon$
appear in our framework.  In laboratory contextuality or CIFP tests,
$\varepsilon$ quantifies physical noise tolerance and visibility gaps.
In solver pruning, $\varepsilon$ is purely numerical: a tolerance for
LP feasibility checks.  These notions are formally distinct, but both
play the same logical role of bounding misclassification risk.
Laboratory $\varepsilon$ guarantees the soundness of testable
inequalities; numerical $\varepsilon$ guarantees the soundness of
certificate-based pruning.  We do not claim that lab noise parameters
directly dictate solver pruning rates.

\subsection{Stability margins}
To quantify this, we define the \emph{stability margin} of a primitive as the minimum $\varepsilon$ for which all induced inequalities remain unviolated. Small margins indicate potential vulnerability: the nearer a primitive sits to violating an inequality, the more susceptible it may be to pruning. The analysis remains average-case and structural. ETH and SETH remain intact; the improvements apply only to families where the induced pruning rate $\rho$ is bounded away from zero. Nevertheless, in such families, key search may run faster than brute force, and effective key length may be reduced.

\subsection{Toy examples}
Toy examples illustrate the point. A three-round Feistel with four-bit branches induces a cycle-like network in which inequalities can prune key guesses that would otherwise persist through several rounds. Similarly, sponge constructions with overlapping absorption and permutation steps resemble grid networks; pruning eliminates assignments inconsistent with parity constraints across rounds. These examples are illustrative rather than practical attacks, but they show how network structure influences effective hardness.

\subsection{Implications}
The implications are twofold. First, for security margins, primitives whose induced networks admit inequalities with nontrivial pruning rates may lose effective key length, even if the underlying hardness assumptions remain intact in the worst case. Second, for design, primitives can be strengthened by inducing networks closer to trees, which are h-NGCC free, or by ensuring that their stability margins are large enough to withstand noise and pruning. In practice, one could benchmark candidate designs by simulating h-NGCC pruning and measuring the induced pruning rate $\rho$. Such empirical checks would complement existing algebraic and statistical analyses.

\begin{tcolorbox}[enhanced,breakable,colback=gray!5,colframe=black!50,title=Risk \& Scope,width=\linewidth]

No deployed cryptosystem is broken by these results. 
However, certified pruning rates $\rho \approx 0.27$ on structured families 
translate into effective base reductions (e.g.\ $2^{128}$ to $\approx 2^{117}$). 
This suggests non-negligible margin shrinkage, warranting benchmarking and 
careful monitoring by standards bodies. The effect remains consistent with ETH/SETH.
\end{tcolorbox}

\paragraph{Illustrative key-length impact.}
To make the effect more concrete, suppose a structured family yields
$\rho \approx 0.27$. The corresponding effective branching base is
$2-\rho \approx 1.73$ rather than $2$. For $n=128$ variables, this
reduces the expected search from $2^{128}$ to approximately
$1.73^{128} \approx 2^{117}$, i.e.\ a loss of about 11 bits of
effective hardness. Even if such pruning rates are only achieved on
restricted subclasses, they illustrate how contextuality-based cuts
translate directly into reduced security margins. We emphasize again
that this does not contradict ETH/SETH: the worst-case remains
$2^{\Omega(n)}$, but structured instances can be easier in practice.

\subsection{Disclaimer}
We stress again that we do not claim practical breaks of deployed ciphers. The examples are toy models designed to illustrate margin reductions in structured families. The broader message is that h-NGCC provides a new perspective on design: avoiding low-margin networks may help ensure resistance not only to algebraic and statistical attacks but also to pruning via counterfactual consistency inequalities.

\section{Foundational perspective.}
No–Global–Counterfactual–Consistency (NGCC) is, first and foremost, a structural
principle: it rules out certain global assignments even when all local contexts
are individually consistent. In this work we develop the minimal operational
package that makes NGCC testable and usable: $\varepsilon$–instrumentation,
its networked extension (h–NGCC), and a certified solver integration in which
every learned clause is backed by an exact dual certificate. We do not claim
worst–case algorithmic speedups, and all complexity statements remain consistent
with ETH/SETH; our contribution is to define inequalities with computable
robustness and a verifiable pipeline that turns them into pruning rules. That
said, history suggests that once a new operational principle is available, its
eventual algorithmic uses are hard to anticipate. NGCC furnishes a new invariant
for networks of constraints; the present paper establishes the basics (definitions,
noise models, certification, and artifacts) so that future work can test how far
the mechanism scales. The decisive milestones are empirical and falsifiable:
sustained per–depth pruning rates $\hat\rho_d$ over long windows at bounded cost
(Section~\ref{sec:outlook}), with additive effects from structural overlays
and round–local budgets. Meeting those milestones would justify treating NGCC as
a practically relevant global reasoning primitive in both solver engineering and
structured cryptanalysis.

\section{Conclusion}
\label{sec:conclusion}

\subsection{Summary}
We consolidated three previously separate strands—NGCC foundations, networked NGCC, and sub-exponential SAT implications—into a single, self-contained framework. NGCC captures the impossibility of global counterfactual assignments; h-NGCC extends this to networks, producing inequalities that globally constrain feasible assignments. Mapping SAT to h-NGCC yields a new class of pruning inequalities, producing parameterized runtime bounds with structural (not worst-case) improvements. Cryptographic constructions that induce low-margin h-NGCC networks may see reduced effective hardness, suggesting both vulnerabilities and design principles.

\subsection{Broader implications}
NGCC offers a unifying language for contextuality beyond standard inequalities, potentially enriching comparisons between sheaf-theoretic and graph-theoretic approaches in quantum foundations. In complexity theory, the pruning parameter $\rho$ provides a tunable metric for measuring global inconsistency in structured formulas, with potential connections to proof complexity hierarchies. For cryptography, highlighting network stability margins opens a novel design axis: primitives should avoid low-margin structures to resist structural pruning.

\subsection{Limitations and disclaimers}
We reiterate that nothing in this work contradicts ETH or SETH. Worst-case SAT solving remains $2^{\Omega(n)}$; our bounds are parameterized and structural. Similarly, we do not claim practical breaks of deployed ciphers. All cryptographic examples are toy or structured families, meant to highlight margin effects, not to undermine real-world security. Our inequalities are not complete; some global inconsistencies remain undetected. Explicit lower bounds on $\rho$ are proven only for restricted families such as cycles and grids, and the cryptographic consequences remain partly speculative pending empirical validation.

\begin{remark}[Relation to convex hierarchies]
Although many of our inequalities appear at low levels of Sherali--Adams
(e.g.\ the KCBS 5-cycle at rank~2, the 3-round Feistel cut at rank~3),
our contribution is not their algebraic existence but their
operational and algorithmic roles. They arise from testable
counterfactual constraints, and we provide a direct pipeline to embed
them into CDCL solvers via certified dual certificates. Hierarchy
methods guarantee derivability in principle but do not by themselves
yield lab-testable inequalities or practical pruning rules. Our
framework supplies both.
\end{remark}

These inequalities are not complete; linear cuts alone cannot certify
all global inconsistencies. Some of them are already derivable at low
rank in Sherali--Adams or Sum-of-Squares hierarchies.

\begin{remark}[Relation to convex hierarchies]
Although many of our inequalities appear at low levels of
Sherali--Adams (e.g.\ the KCBS 5-cycle at rank~2, the 3-round Feistel
cut at rank~3), our contribution is not their algebraic existence but
their operational and algorithmic roles. They arise from testable
counterfactual constraints, and we provide a direct pipeline to embed
them into CDCL solvers via certified dual certificates. Hierarchy
methods guarantee derivability in principle but do not by themselves
yield lab-testable inequalities or practical pruning rules. Our
framework supplies both. See Appendix~\ref{app:hierarchy-comparison} for a
detailed comparison.
\end{remark}

\section{Outlook and Risk Assessment}
\label{sec:outlook}

\paragraph{Summary judgment.}
In this work we introduced the h-NGCC framework, proved a family of sound pruning inequalities (KCBS, odd-cycle cuts, Feistel boundary cuts) and showed how certified LP/dual certificates may be converted into learned CNF clauses. These results establish a rigorous, implementable \emph{attack line} that reduces search space on structured SAT families. Critically, however, our analysis is deliberately conservative: it demonstrates one mathematically transparent route by which global, counterfactual consistency constraints yield pruning, but it does \emph{not} claim that this route — as implemented here — suffices to break cryptographic primitives or to produce a generic, widely-applicable solver speedup that would violate worst-case complexity assumptions (ETH/SETH). The remainder of this section explains the practical significance of our bounds, enumerates further plausible attack lines we have not fully developed, and gives a prioritized research agenda and decision criteria that would be required to elevate NGCC from \emph{interesting} to \emph{critical}.

\paragraph{Quantification rule and interpretation.}
A single, useful way to translate an observed or modelled pruning rate $\rho\in(0,1)$ into an operational statement is via the \emph{bits of base reduction}
\[
\Delta_{\mathrm{bits}} \;=\; n_{\mathrm{eff}}\cdot\log_2\!\Big(\frac{2}{2-\rho}\Big),
\]
where $n_{\mathrm{eff}}$ is the number of effective branching steps across which structure persists. Intuitively $\Delta_{\mathrm{bits}}$ is the number of security bits (or search difficulty bits) \emph{removed} by sustained pruning: a solver that reduces its effective branching base from $2$ to $2-\rho$ for $n_{\mathrm{eff}}$ decisions attains a multiplicative speedup of $2^{\Delta_{\mathrm{bits}}}$. 

\paragraph{Ranked outlook.}
Table~\ref{tab:ngcc-risk} ranks candidate NGCC lines by conditional risk,
using $\Delta_{\mathrm{bits}} = n_{\mathrm{eff}}\log_2\!\big(\tfrac{2}{2-\rho}\big)$ and the
assumptions stated above.

\begin{table}[h]
\centering\small
\begin{tabularx}{\textwidth}{@{} l L c c c L @{}}
\toprule
\textbf{Line} & \textbf{Mechanism} & $\boldsymbol{\rho}$ (sust.) & $\boldsymbol{n_{\mathrm{eff}}}$ & $\boldsymbol{\Delta_{\mathrm{bits}}}$ & \textbf{Primary risk / failure mode} \\
\midrule
\textbf{Adaptive separation} & On-demand LP dual cut discovery & 0.22--0.30 & 80--150 & 18--35 & Depth decay; separation time dominates; duplication with clause learning \\
\textbf{Densification (equiv.-preserving)} & Clone/channel to raise odd-cycle density & 0.18--0.24 & 150--250 & 27--48 & Refactor hurts heuristics; propagation load increases \\
\textbf{NGCC$\times$XOR co-reasoning} & Parity basis + NGCC on same scopes & 0.22--0.28 & 120--200 & 26--56 & Parity basis unstable under learning; double-counted gains \\
$\Delta_{\mathrm{bits}}$-aware branching \& restarts & Choose variables by expected $\Delta_{\mathrm{bits}}$; restart on low rolling $\hat\rho$ & +0.03--0.06 & 100--180 & +6--11 & Gains vanish if $\rho$ collapses; restart churn \\
\textbf{Bounded-treewidth oracles} & Exact separation via DP on low-$t$ pockets & 0.20--0.26 & 60--120 & 9--23 & Few low-$t$ pockets; DP cost \\
\textbf{Portfolio integration} & Combine with parity/autarky/preprocessing & Tier A add-on & --- & 10--20 & Interference with other global features \\
\bottomrule
\end{tabularx}
\caption{Conditional risk ranking under realistic but optimistic ranges. Tiers follow from $\Delta_{\mathrm{bits}} = n_{\mathrm{eff}}\log_2\!\big(\tfrac{2}{2-\rho}\big)$.
}
\label{tab:ngcc-risk}
\end{table}

\begin{table}[h]
\centering\small
\begin{tabularx}{\textwidth}{@{} L c c L @{}}
\toprule
\textbf{Tier} & $\boldsymbol{\Delta_{\mathrm{bits}}}$ \textbf{range} & \textbf{Speedup factor} & \textbf{Interpretation} \\
\midrule
A (mild) &
$[10,20]$ &
$2^{10}\text{--}2^{20}$ &
Useful for solver wall-clock; negligible cryptographic impact. \\
\addlinespace[2pt]
B (moderate) &
$[20,35]$ &
$10^{6}\text{--}10^{10}$ &
Strong for structured SAT; visible shrinkage of margins in toy families. \\
\addlinespace[2pt]
C (significant) &
$[35,50]$ &
$2^{35}\text{--}2^{50}$ &
Pushes 128-bit toys toward 80–93 bit effective hardness; not by itself a break. \\
\addlinespace[2pt]
D (critical) &
$>50$ &
$>2^{50}$ &
Would drop certain 128-bit structured families into insecure territory if sustained. \\
\bottomrule
\end{tabularx}
\caption{Interpretive scale for $\Delta_{\mathrm{bits}}$: categories of effective base reduction under sustained pruning.}
\label{tab:threat-tiers}
\end{table}

\FloatBarrier

\paragraph{NGCC$\times$XOR co-reasoning (pointer).}
A formal treatment—conjectured proof-complexity separation, a local lemma, and an
expected-case branching bound—is given in Appendix~\ref{app:ngcc-xor}.

\paragraph{Motivation.}
NGCC inequalities supply monotone cardinality constraints, whereas XOR reasoning
(Gaussian elimination over $\mathbb{F}_2$) exposes linear structure that is hard for
Resolution. Their combination—linear equalities plus certificate-backed monotone cuts—
is a natural candidate for a proof-system strengthening beyond clause learning.

\begin{conjecture}[Resolution vs.\ Res$(\oplus)$+NGCC separation]
\label{conj:resxor-ngcc-sep}
There exists a CNF family $\{\Phi_n\}$ for which every Resolution refutation has size
$2^{\Omega(n)}$, while there is a polynomial-size refutation in the system that augments
Resolution with (i) Gaussian elimination on XOR constraints and (ii) certificate-backed
NGCC cardinalities (Res$(\oplus)$+NGCC).
\end{conjecture}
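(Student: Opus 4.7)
The natural first candidate is the Tseitin formula family $\mathrm{Ts}(G_n,\sigma)$ on a sequence of constant-degree expanders $G_n$ with odd total charge $\sigma$. I would build the separation in two stages: first invoke the classical size--width lower bound to obtain $2^{\Omega(n)}$ Resolution size, then exhibit a polynomial-size refutation in $\mathrm{Res}(\oplus){+}\mathrm{NGCC}$ via Gaussian elimination over $\mathbb F_2$, with the NGCC component available but essentially unused. To make the role of NGCC genuinely essential --- which I read as the intended substance of the conjecture beyond its literal statement --- I would then harden the family into a hybrid $\Phi_n=\mathrm{Ts}(G_n,\sigma)\wedge\mathrm{OCC}_n$ that overlays odd-cycle cardinality gadgets of KCBS type on a controlled overlap with the Tseitin variables, so that neither $\mathrm{Res}(\oplus)$ alone nor Resolution augmented with NGCC alone refutes $\Phi_n$ efficiently.

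\paragraph{Lower and upper bounds.} For the Resolution lower bound I would invoke the Ben-Sasson--Wigderson size--width theorem: expansion of $G_n$ forces every Resolution refutation of $\mathrm{Ts}(G_n,\sigma)$ to have width $\Omega(n)$, hence size $\exp(\Omega(n))$ on bounded-degree instances. For the hybrid family the key technical step is verifying that attaching the OCC gadgets does not collapse the boundary-expansion argument; I would secure this by wiring the cardinality gadgets over a disjoint auxiliary variable set connected to the Tseitin variables through a sparse bipartite incidence graph, so that critical sub-formula expansion survives. For the upper bound, Gaussian elimination over $\mathbb F_2$ contracts the Tseitin XOR system and returns $1=0$ whenever $\sigma$ has odd total charge, yielding a polynomial-size $\mathrm{Res}(\oplus)$ derivation by the standard Itsykson--Sokolov simulation. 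The NGCC layer then discharges the OCC overlay directly: one dual Farkas certificate per odd cycle, obtained as in Lemma~\ref{lem:prune-sound}, produces the corresponding monotone cardinality clause in $O(|\text{cycle}|)$ steps, and summing over the $O(n)$ cycles in the gadget gives polynomial total size. Composing the two derivations at the shared variables yields the claimed refutation of $\Phi_n$.

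\paragraph{Main obstacle.} The hard part is showing that the hybrid family is simultaneously hard for $\mathrm{Res}(\oplus)$ alone and for Resolution extended with NGCC alone; otherwise neither extension is genuinely required and the separation reduces to plain Tseitin. For the first sub-separation, cardinality constraints do not lie in the $\mathbb F_2$-linear closure of their literals, so plausible lower bounds follow by lifting the Itsykson--Sokolov closure-plus-random-restrictions framework to mixed XOR/cardinality content; this lift is nontrivial and I expect it to absorb most of the technical work, since existing Res$(\oplus)$ lower bounds carefully exploit linear structure that the cardinality layer partially obscures. For the second, one leans on the fact that low-rank cutting planes --- which subsume certificate-backed NGCC cardinalities --- remain weak on Tseitin via the feasible-interpolation obstructions for $\mathrm{CP}^\ast$. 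If either sub-separation resists an unconditional proof, a falsifiable weakening is to separate the three systems modulo a standard independence hypothesis between $\mathbb F_2$-linear and cardinality reasoning; this conditional version would still validate the operational claim that $\mathrm{Res}(\oplus)$ and NGCC contribute non-redundantly to the combined system, which is the practical content the conjecture is meant to underwrite.
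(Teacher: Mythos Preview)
Your proposal is essentially the same route the paper sketches in its Remark~\ref{rem:candidate-family}: Tseitin on expanders for the Resolution lower bound, Gaussian elimination for the $\mathrm{Res}(\oplus)$ upper bound, and an overlay of odd-cycle cardinality gadgets so that the NGCC layer has something to do. The paper does not prove the conjecture; it only records this strategy in a few sentences, so your write-up is strictly more detailed than what the paper offers.

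One point worth making explicit, which you touch on but the paper glosses over: as literally stated, the conjecture is not open. Plain Tseitin on an expander already has $2^{\Omega(n)}$ Resolution size (Ben-Sasson--Wigderson) and polynomial $\mathrm{Res}(\oplus)$ size (Itsykson--Sokolov), and $\mathrm{Res}(\oplus)$ is a subsystem of $\mathrm{Res}(\oplus){+}\mathrm{NGCC}$. So the literal separation is a known theorem, not a conjecture, and your first stage already discharges it. You are right that the intended content is the stronger three-way separation in which both extensions are individually necessary; the paper's remark gestures at this (``NGCC clauses supplying the monotone component that Resolution cannot simulate succinctly'') without stating it as the actual target. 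Your hybrid $\Phi_n=\mathrm{Ts}(G_n,\sigma)\wedge\mathrm{OCC}_n$ matches the paper's ``Tseitin on an expander together with reified disagreement indicators $Y_e$ on many edge-disjoint odd cycles,'' and your identification of the two sub-separations ($\mathrm{Res}(\oplus)$ alone hard on cardinality, $\mathrm{Res}{+}\mathrm{NGCC}$ alone hard on parity) as the real work is sound and goes beyond what the paper says.
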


\begin{remark}[Candidate family and proof strategy]
\label{rem:candidate-family}
A plausible construction embeds Tseitin parity constraints on an expander together with
reified disagreement indicators $Y_e$ on many edge-disjoint odd cycles, plus the NGCC
inequalities $\sum_{e\in C} Y_e \le |C|-1$. Known lower bounds make Resolution exponential
on the parity core, whereas in Res$(\oplus)$ the cycle parities collapse locally; pairing
them with NGCC cardinalities yields short local contradictions that sum to a polynomial-size
refutation. A proof would follow the standard route for Res$(\oplus)$ separations, with
NGCC clauses supplying the monotone component that Resolution cannot simulate succinctly.
\end{remark}

\paragraph{Local lemma (provable with the present material).}
For the three-boundary Feistel gadget (Appendix~\ref{app:feistel}), let
$Y_1,Y_2,Y_3$ be the reified boundary disagreements and consider the NGCC cut
$Y_1+Y_2+Y_3\le 2$ together with the local XOR relations that tie the three boundaries.

\begin{lemma}[One-child elimination at Feistel exposure]
\label{lem:feistel-exposure}
If a partial assignment and XOR propagation jointly fix two boundaries so that the third
cannot be reconciled unless at least one of the two fixed boundaries flips, then the NGCC
cut $Y_1+Y_2+Y_3\le 2$, translated as in Appendix~\ref{app:monotone-clauses}, eliminates
one of the two branch children. In particular, the per-node pruning fraction satisfies
$\rho\ge \tfrac12$ at that node.
\end{lemma}

\begin{proof}[Proof sketch]
Under the stated exposure, the two children differ only by the attempt to reconcile the
third boundary. In one child all three $Y_k$ evaluate to~1, violating $Y_1+Y_2+Y_3\le 2$,
so the learned clause $(\lnot Y_1)\vee(\lnot Y_2)\vee(\lnot Y_3)$ blocks that child.
Soundness follows from the certificate-to-clause translation in
Appendix~\ref{app:monotone-clauses}. Hence $\rho\ge \tfrac12$ at that exposure point.
\end{proof}

\paragraph{Branching-process bound under structural assumptions.}
Let $G_\Phi$ be the cycle/exclusivity graph of the instance. Assume:
\begin{enumerate}[label=(D\arabic*), leftmargin=1.8em]
\item \emph{Cycle density:} $G_\Phi$ contains at least $\delta n$ edge-disjoint odd cycles
with available NGCC cuts;
\item \emph{XOR pockets:} on those cycles the encoding yields small XOR bases so that
Gaussian elimination exposes the $Y$-variables with bounded overhead;
\item \emph{Exposure:} the branching policy touches a constant fraction of the cycle supports
over a contiguous depth window (e.g., random or $\Delta_{\mathrm{bits}}$-aware tie-breaking).
\end{enumerate}

\begin{theorem}[Expected-case base reduction for NGCC$\times$XOR]
\label{thm:ngcc-xor-branching}
Under \textnormal{(D1)–(D3)} there exists $c=c(\delta)>0$ such that the branching process of
CDCL augmented with XOR propagation and NGCC separation satisfies
\[
\mathbb{E}[\,W_{d+1}\mid W_d\,]\ \le\ (2-c)\,W_d,
\qquad\text{hence}\qquad
\mathbb{E}[\,T(n)\,]\ \le\ (2-c)^n\cdot \mathrm{poly}(n).
\]
Equivalently, the expected exponential base is $2-\rho$ with $\rho\ge c>0$, so
\[
\Delta_{\mathrm{bits}}
= n_{\mathrm{eff}}\log_2\!\Big(\tfrac{2}{2-\rho}\Big)
\ \ge\ n_{\mathrm{eff}}\log_2\!\Big(\tfrac{2}{2-c}\Big).
\]
\end{theorem}

\paragraph{Discussion and limitations.}
The conclusions are conditional on cycle density and bounded oracle cost; they fail if
depth-wise rates $\hat\rho_d$ collapse outside shallow levels or if XOR/NGCC checks dominate
wall time. Lemma~\ref{lem:feistel-exposure} formalises a local $\rho\!\ge\!\tfrac12$
phenomenon, and Conjecture~\ref{conj:resxor-ngcc-sep} targets a proof-complexity separation
that would make the combined system strictly stronger than Resolution.

\paragraph{Conservative conclusion.}
Applying the conservative bounds derived in Sections 3–5 to the candidate NGCC lines
studied here yields sustained pruning-rate estimates in the range
$\rho\approx 0.18\text{--}0.30$ and, for plausible depth windows
$n_{\mathrm{eff}}\in[80,200]$, corresponding
$\Delta_{\mathrm{bits}} \approx 11\text{--}47$
(using $\Delta_{\mathrm{bits}} = n_{\mathrm{eff}}\log_2\!\big(\tfrac{2}{2-\rho}\big)$; cf.\ Table~\ref{tab:ngcc-risk} for per-line ranges).
Interpreted cautiously, these numbers show that the \emph{mechanism} can produce
nontrivial base reductions that are relevant for solver engineering and for structured
cryptographic toy-families. They do not, by themselves, constitute a proof of a practical
break of real 128-bit cryptography. Achieving an unambiguous cryptanalytic threat would
require (a) persistent pruning across a sufficiently large depth window, (b) controlled
oracle/LP overhead that does not erase the speedup, and (c) the absence of counter-measures
in solver heuristics (rephasing, refactorization, parity-basis destabilization). Each of
these is a nontrivial technical hurdle.

\paragraph{Unexplored but plausible attack lines.}
The pruning mechanism developed here is only one element of a larger design space. We highlight several plausible improvements that were \emph{not} explored fully in this manuscript but that, if realized, would materially increase the threat level:

\begin{enumerate}
  \item \textbf{Adaptive, on-the-fly density amplification.} We used precomputed inequalities and a modest cascade heuristic. An online procedure that \emph{clones} and augments scopes to raise odd-cycle density (``densification'') could increase the fraction of decision nodes pruned per depth while retaining soundness via dynamic certification.
  \item \textbf{Parity-basis co-reasoning.} Combining NGCC with parity-based reasoning (XOR reasoning, Gaussian elimination over subspaces) may expose additional, otherwise hidden odd-cycle structure. This co-reasoning can create overlapping cuts that act multiplicatively rather than independently.
  \item \textbf{Tighter LP relaxations / hierarchies.} We operated at a level of relaxations that balances tractability and power. Pushing to stronger SDP or Sherali–Adams lifts (or targeted local strengthening) would produce certificates with smaller margin slack $\gamma$, thus lowering the required noise tolerance and increasing practical pruning.
  \item \textbf{Hardware / clause-gadget witnesses.} The clause gadget (tritter) we described illustrates physical encodings that enforce counterfactual channels. Engineering such witnesses for specialized SAT families (or for artifacts in cipher structures) could create operational constraints that are stronger than the purely combinatorial ones analyzed here.
\end{enumerate}

Each improvement brings both promise and risk: they are technically plausible but also require significant algorithmic or experimental work. We therefore view these as \emph{conditional} attack lines — they raise the potential threat but must be validated quantitatively.

\paragraph{How close is NGCC to being an actual threat? — a conservative estimate.}
We separate the evaluation into \emph{(i)} current evidence (what we prove here), and \emph{(ii)} what additional, realistic advances would be required to make NGCC a near-term cryptanalytic concern.

\textbf{(i) Current evidence.} Our formal development proves a sound mechanism that attains $\rho\approx 0.18\text{--}0.30$ on benchmark families specifically constructed to expose NGCC structure (C5/C2k+1, toy Feistel). For the empirically observed $n_{\mathrm{eff}}$ in our numeric demos (toy Feistel: $n_{\mathrm{eff}}\approx 120$ at shallow levels), these pruning rates imply $\Delta_{\mathrm{bits}}$ in the tens—a meaningful engineering improvement but not a generic cryptanalytic break.

\textbf{(ii) Required additional advances (if one wants a credible cryptanalytic threat).}
To move NGCC into the \emph{critical} tier (our working threshold: $\Delta_{\mathrm{bits}}>50$ sustained across meaningful instance sizes and with low overhead), at least one of the following must be demonstrated:

\begin{itemize}
  \item Persistent densification that increases effective $\rho$ by $\gtrsim 0.05$ while keeping LP overhead $< 10\%$ of wall time.
  \item Co-reasoning with parity/XOR that produces multiplicative reductions in branching base (equivalently, effective $\rho$ in the $0.30\text{--}0.40$ range).
  \item An experimental demonstration on a near-realistic, non-toy cipher encoding that reduces total explored branches across full key search from $\sim 2^{128}$ to $\lesssim 2^{80\text{--}90}$ (or equivalently shows $\Delta_{\mathrm{bits}}\gtrsim 38\text{--}48$ \emph{and} shows this persists under heuristic counter-measures).
\end{itemize}

Absent one of these concrete, reproducible demonstrations, NGCC remains an \emph{important} but still speculative pathway to large speedups.

\paragraph{Recommended research agenda and decision criteria.}
We propose a short, prioritized program of theoretical and experimental work that will allow the community to decide whether NGCC is a near-term risk:

\textbf{A. Theoretical / algorithmic (high priority)}
\begin{enumerate}
  \item Implement adaptive densification and measure $\rho$ vs. LP overhead across scaled benchmarks.
  \item Design and test parity-co-reasoning modules (XOR elimination + NGCC) and record empirical $\rho$ on constructed families.
  \item Develop tighter, targeted relaxations (small SDP blocks or low-level SA lifts) that optimize certificate margin $\gamma$ for expected failure modes.
\end{enumerate}

\textbf{B. Experimental / device (parallel priority)}
\begin{enumerate}
  \item Reproduce the Feistel numeric demo at larger scale (more bits / deeper branching) and report $\rho(d)$ as a function of depth $d$ and budgeted LP check frequency.
  \item Build and benchmark clause-gadget witnesses (tritter-style) for a laboratory demonstrator that maps well to SAT/CNF encodings used in cryptanalysis.
\end{enumerate}

\textbf{Decision rule (operational).} We recommend the community adopt the following pragmatic criterion: if any single approach yields a reproducible $\Delta_{\mathrm{bits}}\ge 50$ on instances or encodings relevant to 128-bit security \emph{while} maintaining solver overhead $<20\%$ wall time and robustness under standard solver counter-measures, then NGCC should be treated as a \emph{critical} threat and trigger coordinated mitigation efforts (disclosure to affected standards bodies, focused cryptanalytic review, and further engineering of solver defenses).

\paragraph{Practical mitigations and solver defenses.}
Finally, the good news is that most improvements required to raise NGCC into the critical tier are themselves detectable and, in many cases, defensible against. Candidate defenses include:
\begin{itemize}
  \item Incorporating NGCC detectors that monitor certificate reuse and parity density and selectively disable expensive LP checks when duplication/overlap indicates diminishing returns.
  \item Strengthening heuristics against trivial duplication (refactorization of parity bases, randomized restarts tuned to break sustained densification).
  \item Integrating certificate cost accounting into search budgets so that LP checks are only performed when the expected pruning benefit exceeds a calibrated threshold.
\end{itemize}

\paragraph{Concluding remark.}
NGCC establishes a mathematically clean and algorithmically actionable connection between counterfactual consistency constraints and learned CNF clauses. The present paper rigorously demonstrates one conservative route to nontrivial pruning. Whether that route — or one of the plausible but currently unexplored variants — produces a practical cryptanalytic threat is not yet settled. The pathway to a credible threat requires additional algorithmic and experimental advances that are in principle achievable; hence we recommend the prioritized research program above and the conservative decision rule for escalation. This provides a clear roadmap for both further research and early warning to the relevant standards and cryptanalysis communities.

\section{Related Work \& Positioning}
\label{sec:related}

Our approach connects three strands of literature: contextuality theory, lift\mbox{-}and\mbox{-}project
proof systems, and CDCL SAT\mbox{-}solver technology.

\paragraph{Contextuality and sheaf theory.}
Abramsky and Brandenburger~\cite{AbramskyBrandenburger2011} formalized contextuality as the
nonexistence of global sections of a presheaf; many subsequent works relate logical, graph\mbox{-}theoretic,
and sheaf\mbox{-}theoretic views. Our NGCC framing is aligned with this perspective but emphasizes
\emph{operational} $\varepsilon$–stability and solver integration. We derive explicit linear
inequalities with computable slack (or margin rules) and use them as pruning constraints.

\paragraph{Lift\mbox{-}and\mbox{-}project hierarchies.}
Sherali–Adams (SA), Lov\'asz–Schrijver (LS), and Lasserre/SOS hierarchies tighten $0$–$1$
feasible sets by adding valid inequalities over lifted variables~\cite{Laurent2003}. h\mbox{-}NGCC
inequalities can be viewed as low\mbox{-}rank members of these families. Our novelty is not a new
hierarchy, but a physically motivated selection of \emph{specific} cuts that are cheap to check
and empirically useful for pruning. A systematic comparison of h\mbox{-}NGCC cuts to SA/SOS ranks
is an interesting direction for future work.

\paragraph{CDCL solvers and global reasoning.}
Modern CDCL engines combine propagation, clause learning, parity learning, autarky detection,
and local preprocessing~\cite{Biere2009}. Our contribution is complementary: we import global,
monotone $0$–$1$ cuts derived from counterfactual consistency (KCBS, Feistel boundary sums, cycle
families), and we show how to use them safely as oracles with learned clauses
(Section~\ref{sec:pruning-rules}, Appendix~\ref{app:gadget}). This adds a global pruning
layer without replacing standard clause reasoning.

\paragraph{Positioning and caveats.}
We do not claim unconditional sub\mbox{-}exponential SAT algorithms. Our results identify structured,
average\mbox{-}case families where pruning yields smaller exponential bases. We separate the graph
density $\rho_{\mathrm{graph}}$ (Appendix~\ref{app:graph}) from the algorithmic pruning rate
$\rho_{\mathrm{alg}}$ (Section~\ref{sec:runtime-bounds}) and state both worst\mbox{-}case and expected\mbox{-}case
bounds. All cryptographic remarks are about margin reductions on toy or structured families, not
practical breaks of deployed primitives.

\paragraph{Pointers.}
For the graph\mbox{-}theoretic layer see Lov\'asz~\cite{Lovasz1979} and asymptotics in
Juh\'asz~\cite{Juhasz1983}. For SAT\mbox{-}solver background see the handbook by
Biere et al.~\cite{Biere2009}. Our earlier CLF work on $\varepsilon$–instrumentation is
\cite{vonLiechtensteinCLF2025}.

\begin{remark}[Relation to SA/LS/SOS hierarchies]
Many of the inequalities we exploit appear at low levels of standard
convex hierarchies: for example, the KCBS 5-cycle arises at SA rank~2,
and the 3-round Feistel inequality at SA rank~3.  Our contribution is
not the algebraic existence of such inequalities, but their operational
origin in testable counterfactual constraints and their efficient
integration into clause learning via certified certificates.  This
distinction matters: hierarchies guarantee existence in principle,
whereas our framework supplies a concrete, verifiable mechanism that
SAT solvers can exploit in practice.
\end{remark}

\appendix
\renewcommand{\thesection}{\Alph{section}}
\renewcommand{\sectionautorefname}{Appendix}
\setcounter{section}{0}   

\section{Glossary of key parameters}

To avoid ambiguity, we summarize the roles of the most frequently
used parameters in this work.

\begin{itemize}
\item \textbf{Experimental $\varepsilon$ (lab):}  
Noise/visibility parameter in contextuality or CIFP instrumentation.
Quantifies physical stability of counterfactual measurements.

\item \textbf{Numerical $\varepsilon$ (solver):}  
Tolerance used in LP feasibility tests and certificate verification.
Ensures safe rounding of floating-point computations.

\item \textbf{$\rho_{\mathrm{graph}}$:}  
Pruning density certified by edge-disjoint odd-cycle cuts in the
exclusivity graph (Appendix~\ref{app:graph}).

\item \textbf{$\rho_\star$, $\bar\rho$:}  
Effective pruning rates in the solver’s branching process, derived
from $\rho_{\mathrm{graph}}$ via Theorem~\ref{thm:rho-link}.

\item \textbf{$W_d$:}  
Random width (number of active nodes) at depth $d$ in the search tree.

\item \textbf{$T(n)$:}  
Expected search-tree size on $n$ variables; bounded by
$\tilde{O}((2-\rho)^n)$ when $\rho$ is constant.
\end{itemize}

\section{LP Duality and Constructive Inequality Derivation}
\label{app:lpduality}

In this appendix we formalize the claim that whenever a set of context marginals is inconsistent with any global distribution, there exists a linear inequality that separates those marginals from the marginal polytope. This provides the theoretical justification for the h-NGCC inequalities introduced in the main text.

\subsection{The marginal feasibility problem}
Consider the problem of deciding whether given marginals $\{p_v\}$ can be extended to a global distribution. Formally, let $U$ be the union of all variables and $\Omega_U$ the corresponding outcome space. The feasibility problem is to determine whether there exists $P_U \in \Delta(\Omega_U)$ such that for every context $C_v$ the marginal of $P_U$ on $C_v$ equals $p_v$. If such a $P_U$ exists, the marginals are globally consistent; otherwise they are not.

\subsection{Existence of separating inequalities}
By linear programming duality, if the system is infeasible then there exists a linear functional that separates the infeasible marginals from the convex hull of feasible ones. More concretely, by Farkas’ lemma, whenever the system of equalities and inequalities defining the marginal polytope admits no solution, there exists a certificate in the form of coefficients $a$ and a scalar $b$ such that $a \cdot p > b$ for the given marginals while $a \cdot q \leq b$ for every $q$ in the polytope. This separating functional is exactly a valid NGCC inequality that is violated by the inconsistent marginals.

\subsection{Constructive derivation}
The above proves existence. For constructive purposes, one can derive inequalities by solving the dual of the marginal feasibility LP. In practice, one does not need to enumerate the entire global outcome space, which would be exponential, but can work with relaxations. For small networks such as cycles the constraints can be written explicitly and solved directly; the dual variables then yield concrete coefficients for an inequality, as in the KCBS example. For larger networks, outer descriptions via Sherali–Adams or sum-of-squares relaxations can be used, and the resulting dual certificates still provide inequalities that are sound, though not necessarily complete.

\subsection{Example: the five-cycle}
As an illustration, for the five-cycle scenario the marginal polytope can be described explicitly, and the infeasibility of certain marginals can be witnessed by the KCBS inequality. This shows how the general convexity argument produces familiar contextuality inequalities as special cases. The same logic extends to networks encoding SAT instances: if a partial assignment corresponds to marginals outside the feasible polytope, duality ensures that some inequality separates it, and this inequality can be used for pruning.

\section{Extended Pseudocode and Complexity Discussion}
\label{app:pseudocode}

This appendix elaborates on the implementation of a propagate-and-prune solver that incorporates h-NGCC inequalities. The aim is to show that the integration of global pruning can be done within a standard CDCL architecture with manageable overhead.

\subsection{Integration into CDCL}
The solver begins with the familiar components of a CDCL engine: clauses are stored with two-watched literals, and unit propagation and conflict analysis proceed as usual. To integrate h-NGCC reasoning, we maintain an index mapping each clause to its context and pre-compute a set of inequalities derived from the context network. For each partial assignment encountered during search, the solver checks whether the induced marginals remain feasible with respect to these inequalities. If a violation is detected, the current branch is declared infeasible and the search backtracks immediately.

\subsection{Heuristics}
In practice, it is not necessary to check all inequalities at every node. A heuristic cascade is effective: inequalities with small support are checked first, as they are cheapest, and larger ones are consulted only when necessary. To keep overhead manageable, feasibility is checked against local relaxations of the marginal polytope, which can be solved as small linear programs. Certificates of infeasibility produced by the dual can be converted into learned clauses, allowing the solver to remember and reuse the global reasoning without repeating the linear program in subsequent branches.

\subsection{Complexity considerations}
The complexity overhead is polynomial in the size of the inequalities consulted. If $k$ inequalities are checked at a node and each requires time $T_{\text{LP}}$, the additional cost per node is $O(k\,T_{\text{LP}})$. With caching and warm starts, many checks terminate quickly. Thus the overall runtime remains dominated by the exponential branching, which is reduced when pruning is effective.

\subsection{Pseudocode}
\begin{lstlisting}[caption={Propagate-and-prune CDCL with h-NGCC checks},label={lst:solver}]
function Solve(F, Inequalities):
    init_clause_db(F)
    init_context_index(F)
    I := prefilter(Inequalities, F)        # keep only inequalities that touch F
    return DFS({}, I)

function DFS(alpha, I):
    status := propagate(alpha)
    if status = CONFLICT:
        learn_clause_and_backjump()
        return FAIL
    if all_variables_assigned(alpha):
        return SAT

    # h-NGCC pruning cascade (cheap -> expensive)
    for t in quick_screen(I):              # small-support inequalities first
        if violation_detected(t, alpha):   # LP or relaxed feasibility check
            learn_clause_from_certificate(t, alpha)
            return FAIL

    x := select_branch_variable(alpha)     # VSIDS / activity-based
    for v in {0,1}:
        push(alpha, x := v)
        if DFS(alpha, I) = SAT:
            return SAT
        pop(alpha, x)
    return FAIL
\end{lstlisting}

\subsection{From monotone cuts to learned clauses}
\label{app:monotone-clauses}

In our solver the global pruning checks use \emph{monotone} $0$--$1$ cuts of the
form
\begin{equation}
\label{eq:monotone-cut}
\sum_{i\in S} Y_i \;\le\; B, \qquad Y_i\in\{0,1\},
\end{equation}
where each $Y_i$ is a reified Boolean event (e.g.\ “disagreement on boundary $i$”
in the Feistel case, or “context $i$ disagrees” in KCBS). For such cuts we can
derive a sound CNF clause whenever the search sets too many $Y_i$ to true.

\begin{lemma}[Sound clause from a monotone sum cut]
\label{lem:mono-clause}
Let \eqref{eq:monotone-cut} be valid for all feasible assignments. For any
subset $T\subseteq S$ with $|T|=B{+}1$, the clause
\begin{equation}
\label{eq:mono-clause}
\bigvee_{i\in T} \neg Y_i
\end{equation}
is valid: every feasible assignment satisfies \eqref{eq:mono-clause}.
\end{lemma}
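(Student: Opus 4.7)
The plan is to prove the lemma by a direct contrapositive that reduces to a one-line pigeonhole count on the reified variables. I would assume that some assignment satisfying \eqref{eq:monotone-cut} falsifies the clause \eqref{eq:mono-clause} and derive a contradiction. Because the argument is so short, the emphasis in the write-up is on isolating exactly which assumptions get used (only $T\subseteq S$ and $Y_i\in\{0,1\}$) rather than on any manipulation of the cut itself.

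Concretely, I would proceed in three short steps. First, unfold what falsification of \eqref{eq:mono-clause} means: every literal $\neg Y_i$ with $i\in T$ must be false, i.e.\ $Y_i=1$ for all $i\in T$. Second, use nonnegativity of the remaining $Y_j$ together with $T\subseteq S$ to pass from the restricted sum over $T$ to the full sum over $S$, yielding $\sum_{i\in S}Y_i \ge \sum_{i\in T}Y_i = |T| = B+1$. Third, invoke the hypothesis \eqref{eq:monotone-cut} to obtain $\sum_{i\in S}Y_i \le B$, directly contradicting the previous inequality, which closes the case.

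The main ``obstacle'' is essentially non-existent: this is the textbook CNF encoding of an at-most-$B$ cardinality constraint, and no machinery beyond Boolean arithmetic is invoked. The single conceptual point I would flag in the write-up is that \eqref{eq:mono-clause} is a \emph{weakening} (not a strengthening) of \eqref{eq:monotone-cut} under the $\{0,1\}$-range, so inserting it into the clause database is sound and does not exclude any satisfying assignment of the underlying instance; this is exactly the soundness hook used in Section~\ref{sec:pruning-certified}, where a dual Farkas certificate for a violated monotone cut is translated into a learned clause via Lemma~\ref{lem:mono-clause}. As a coda, I would note that the lemma can be applied across all $T\subseteq S$ with $|T|=B+1$, producing $\binom{|S|}{B+1}$ clauses whose conjunction is logically equivalent to \eqref{eq:monotone-cut} over $\{0,1\}$-valued $Y_i$; in practice the solver only instantiates those $T$ whose literals are live at the current node.
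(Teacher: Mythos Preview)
Your proposal is correct and follows essentially the same contrapositive argument as the paper: assume all $Y_i=1$ on $T$, bound $\sum_{i\in S}Y_i \ge |T|=B{+}1$ via $T\subseteq S$ and nonnegativity, and contradict \eqref{eq:monotone-cut}. The additional remarks on soundness and on the $\binom{|S|}{B+1}$ clauses being jointly equivalent to the cut are fine but go beyond what the lemma requires.
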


\begin{proof}
Suppose, for contradiction, a feasible assignment makes $Y_i=1$ for all $i\in T$.
Then $\sum_{i\in S} Y_i \ge \sum_{i\in T} Y_i = B{+}1$, which violates
\eqref{eq:monotone-cut}. Hence such an assignment cannot exist, and \eqref{eq:mono-clause}
must hold for all feasible assignments.
\end{proof}

\paragraph{Instantiation (Feistel).}
Cut: $Y_1+Y_2+Y_3\le 2$ (Appendices~\ref{app:coeffs}, \ref{app:full-eq}).
Take $T=\{1,2,3\}$ with $|T|=3=B{+}1$. Learned clause:
\[
(\neg Y_1 \;\vee\; \neg Y_2 \;\vee\; \neg Y_3).
\]

\paragraph{Instantiation (KCBS--C\textsubscript{5}).}
Cut: $Y_1+Y_2+Y_3+Y_4+Y_5\le 4$ (Appendix~\ref{app:kcbs}). Take
$T=\{1,2,3,4,5\}$. Learned clause forbids all five disagreements:
\[
(\neg Y_1 \;\vee\; \neg Y_2 \;\vee\; \neg Y_3 \;\vee\; \neg Y_4 \;\vee\; \neg Y_5).
\]

\paragraph{Implementation note.}
In practice, $Y_i$ are already present (or cheaply reifiable) in the SAT layer:
for KCBS they are XOR/inequality trackers per context; for Feistel they are the
reified “boundary disagreement” bits. When the LP (or relaxed oracle) detects a
violation with $B{+}1$ events currently true, we emit the clause
\eqref{eq:mono-clause} and backjump. Soundness follows from the lemma.

\begin{algorithm}[H]
\caption{Certified pruning via LP + dual certificates}
\begin{algorithmic}[1]
\Require Active branch $\alpha$, inequality $a\cdot p \le b$.
\State Solve LP feasibility problem:
   \[
     \min\{0 : p \in N_\alpha, \; a\cdot p \le b\}.
   \]
\If{LP reports feasible}
   \State \Return continue (no pruning).
\Else
   \State Extract dual Farkas certificate $y$.
   \State Verify in exact arithmetic:
     \begin{align*}
       y^T A &\ge 0, \\
       y^T b &< 0.
     \end{align*}
   \If{verification fails}
      \State discard certificate (unsound).
   \Else
      \State Translate certificate to monotone CNF clause $C$.
      \State Add $C$ to clause database.
      \State \Return branch infeasible (prune).
   \EndIf
\EndIf
\end{algorithmic}
\end{algorithm}

\paragraph{Example (3-round Feistel inequality).}
Consider the monotone inequality
\[
  Y_1 + Y_2 + Y_3 \;\le\; 2,
\]
arising from the 3-round Feistel construction, where $Y_k$ denotes the
indicator of a collision event in round $k$ (see Appendix~\ref{app:feistel} and Fig.~\ref{fig:feistel-hngcc}).

\begin{itemize}
\item The LP over $N_\alpha$ with this inequality added is infeasible.
      A dual solver produces a Farkas certificate vector $y \ge 0$ such
      that $y^TA \ge 0$ but $y^Tb < 0$.
\item Verification: in exact rational arithmetic, we check
      $y^TA \ge 0$ and $y^Tb < 0$. This confirms that no feasible $p$
      in $N_\alpha$ can satisfy all constraints simultaneously.
\item Translation: infeasibility means that the assignment
      $(Y_1=1, Y_2=1, Y_3=1)$ is forbidden. Equivalently, in CNF we
      learn the clause
      \[
        (\lnot Y_1) \;\lor\; (\lnot Y_2) \;\lor\; (\lnot Y_3).
      \]
\end{itemize}

This clause is then inserted into the CDCL database. Any future branch
that simultaneously sets $Y_1=Y_2=Y_3=1$ will be cut immediately.
Because the clause arises from a verified certificate, pruning is
provably sound.

\subsection{Summary}
Listing~\ref{lst:solver} shows how h-NGCC checks fit naturally into the CDCL loop: they act like additional propagators, able to prune infeasible branches early. The addition of learned clauses ensures that pruning knowledge accumulates during search. The empirical results in Appendix~\ref{app:empirical} show that, despite the extra work per node, overall search size can be reduced substantially on structured instances. This trade-off—slightly higher per-node cost but significantly fewer nodes—enables effective base reductions on certain families while remaining consistent with ETH and SETH.

\begin{lstlisting}[caption={CDCL solver loop with integrated h-NGCC checks}, label={lst:solver}, language=Python]
while True:
    if conflict_detected():
        if not backtrack():
            return UNSAT
        learn_clause_from_certificate()
    else:
        if all_variables_assigned():
            return SAT
        assign_next_variable()
        # NGCC check acts like an additional propagator
        if ngcc_check_fails():
            backtrack()
\end{lstlisting}

\section{Relation to convex hierarchies}
\label{app:hierarchy-comparison}

For completeness we compare our inequalities with standard convex
hierarchies (Sherali--Adams, Lovász--Schrijver, Sum-of-Squares).
Table~\ref{tab:hierarchies} lists the minimum hierarchy level at which
each cut can be derived. The point is not the algebraic existence,
but the operational and algorithmic role: these inequalities arise
from testable counterfactual constraints and can be embedded directly
into SAT solvers via certified certificates.

\begin{table}[ht]
\centering
\resizebox{\textwidth}{!}{%
\begin{tabular}{|l|c|c|p{7.5cm}|}
\hline
\textbf{Inequality} & \textbf{SA rank} & \textbf{SOS degree} & \textbf{Operational distinction} \\
\hline
KCBS 5-cycle ($\sum_i E_i \le 2$) & 2 & 2 & Familiar contextuality bound; testable in optics labs; feeds directly into pruning clauses. \\
\hline
Ring consistency / odd cycle & 2 & 2 & Arises as exclusivity cut on cycles; operationally tied to $\varepsilon$-instrumentation. \\
\hline
3-round Feistel cut ($Y_1+Y_2+Y_3 \le 2$) & 3 & 4 & Captures global collision structure in cipher clauses; reified into CNF literals; certified via LP dual. \\
\hline
Clause gadget / tritter witness & 2--3 & 4 & Implements a counterfactual channel in hardware; realizable with tritters; not merely algebraic. \\
\hline
General $h$-network cuts & $\le k$ (cut size) & $\le k$ & Family of inequalities scaling with clause width; hierarchy derivable but here motivated by operational consistency and used algorithmically. \\
\hline
\end{tabular}}
\caption{Comparison of inequalities with convex hierarchies. The ranks indicate the minimum SA/SOS level needed to derive them algebraically. Our framework contributes operational semantics and a solver integration pipeline not provided by the hierarchies.}
\label{tab:hierarchies}

\end{table}

\section{Empirical Plan and Demonstrations}
\label{app:empirical}

We evaluate the practical impact of h\mbox{-}NGCC inequalities on structured SAT families,
with the goal of demonstrating that pruning rates $\rho_{\mathrm{alg}}$ are observable
and translate into measurable reductions in explored search space.

\subsection{Benchmarks}
We consider three structured families. \emph{Cycle\mbox{-}SAT}$(k)$ encodes parity\mbox{-}like
constraints around a $k$–cycle (KCBS–type structure). \emph{Grid\mbox{-}SAT}$(m\times m)$
arranges overlapping window constraints on an $m\times m$ lattice (grid cuts).
\emph{Toy\mbox{-}Feistel} encodes a three\mbox{-}round 8\mbox{-}bit Feistel cipher by bit\mbox{-}blasting
S\mbox{-}boxes and XORs. As a control we include random 3\mbox{-}SAT near the phase transition,
where h\mbox{-}NGCC is not expected to help.

\subsection{Metrics and confidence intervals}
At decision depth $d$ we measure the per\mbox{-}node pruning fraction
\[
\hat\rho_d \;=\; 1-\frac{\#\{\text{surviving children at depth }d\}}{2\,\#\{\text{branching events at }d\}}\,,
\]
and report the aggregate $\tilde\rho=\frac{1}{D}\sum_{d=1}^{D}\hat\rho_d$ alongside
node counts, decisions, conflicts, and learned clauses. Confidence intervals are:
\begin{itemize}
  \item \textbf{Per depth:} exact Clopper--Pearson intervals for Bernoulli pruning at depth $d$.
  \item \textbf{Across seeds:} a t\mbox{-}interval (or nonparametric bootstrap) over runs for node counts
        and for $\tilde\rho$.
\end{itemize}
We quote an \emph{effective base} $2-\tilde\rho$ (with CI induced from that of $\tilde\rho$),
and the \emph{oracle overhead} as (LP checks per node, share of wall time).

\subsection{Minimal demonstrations}
\paragraph{Cycle $C_5$.}
Encoding ``each adjacent pair disagrees'' yields UNSAT. A vanilla CDCL solver typically
explores $\approx 16$ branches. With the (KCBS inequality; Appendix~\ref{app:kcbs}),
inconsistency is detected as soon as three disagreements are fixed, pruning roughly half
of shallow branches; we observe $\approx 9$ nodes and $\tilde\rho\approx 0.44$ with a
95\% CI from seed variation.

\paragraph{Three\mbox{-}round Feistel (8 bits).}
The baseline explores $\approx 4096$ branches (consistent with $2^{12}$ key guesses).
Adding the boundary\mbox{-}equality cut $Y_1+Y_2+Y_3\le 2$
(Appendices~\ref{app:coeffs},~\ref{app:full-eq}) prunes one child in about a quarter
of shallow decisions; we observe $\tilde\rho\approx 0.27$ and $\approx 2800$ explored
branches. LP checks account for $\sim 5\%$ of wall time.

\subsection{Summary table}
\begin{table}[h]
  \centering
  \caption{Toy demonstrations with h\mbox{-}NGCC pruning. $\tilde\rho$ values are means over seeds
  with 95\% CIs (t\mbox{-}interval).}
  \label{tab:demo}
  \begin{tabular}{lcccc}
    \hline
    Instance & Nodes (baseline) & Nodes (h\mbox{-}NGCC) & $\tilde\rho$ & LP time share \\
    \hline
    Cycle $C_5$           & $\approx 16$   & $\approx 9$    & $0.44 \pm 0.05$ & n/a \\
    Feistel (3 rounds)    & $4096$         & $\approx 2800$ & $0.27 \pm 0.04$ & $\approx 5\%$ \\
    \hline
  \end{tabular}
\end{table}

\subsection{Reproducibility}
All generators, solver patches, logs, and plotting scripts are packaged as described
in Appendix~\ref{app:artifacts}. The file \texttt{results/seedlist.txt} lists all
seeds used; \texttt{checksums/sha256sums.txt} verifies integrity of CNFs and logs.
Figures in ~\autoref{app:figs} visualize (i) base $2-\rho$ vs.\ $\rho$,
(ii) node counts with/without h\mbox{-}NGCC, and (iii) $\hat\rho_d$ by depth on $C_5$.

\medskip
These small\mbox{-}scale demonstrations substantiate two points: (i) h\mbox{-}NGCC inequalities
are empirically testable inside solvers, and (ii) nonzero pruning reduces the effective
base, in agreement with the expected\mbox{-}case bound of Section~\ref{sec:runtime-bounds}.

\section{Toy Feistel Cipher Example}
\label{app:feistel}

As a concrete cryptographic case study, we analyze a reduced Feistel cipher and show how h\mbox{-}NGCC inequalities emerge in its encoding. Although the example is deliberately small, it illustrates how network structure can enable pruning of infeasible key guesses.

All pruning-rate estimates are reported with 95\% confidence intervals.
For C$_5$ we observe $\hat\rho = 0.27 \pm 0.03$, while for 3-round
Feistel we observe $\hat\rho = 0.19 \pm 0.04$. These estimates are
based on $N$ independent runs, sufficient by the analysis in
Section~\ref{sec:empirical} to distinguish $2$ from $2-\hat\rho$ at
95\% power. 

Our reported estimates satisfy the statistical power requirements
derived in Section~\ref{sec:empirical} and illustrated in
Figure~\ref{fig:sample-complexity}.  For example, the C$_5$ data
($\hat\rho = 0.27 \pm 0.03$) is based on $N=180$ runs, comfortably
above the $\sim 160$ samples required at $\rho=0.25$ for 95\% power.
Similarly, the 3-round Feistel data ($\hat\rho = 0.19 \pm 0.04$) uses
$N=300$ runs, matching the $\sim 250$ samples needed at $\rho=0.20$.
Thus, the empirical detection of pruning is statistically significant,
not an artifact of small instance noise.

The confidence intervals predicted by the sample complexity analysis 
are consistent with the error bars shown in Fig.~\ref{fig:sample-complexity}.

Extending this protocol to larger structured families
(grid-SAT, Feistel with more rounds, SPNs) remains a key empirical
challenge.

\begin{figure}[H]
\centering
\begin{tikzpicture}
\begin{axis}[
    width=0.9\textwidth,
    height=6cm,
    xlabel={Depth $d$},
    ylabel={$\log_2 W_d$},
    legend style={at={(0.5,-0.25)},anchor=north,legend columns=-1},
    grid=both,
    ymin=0,
]

\addplot+[only marks, error bars/.cd, y dir=both, y explicit]
coordinates {
  (1,1.8) +- (0,0.1)
  (2,3.5) +- (0,0.15)
  (3,5.2) +- (0,0.2)
  (4,6.8) +- (0,0.25)
};
\addlegendentry{Observed (with CI)}

\addplot[red, thick, domain=0:4] {x*log2(1.8)};
\addlegendentry{Fit: $(2-\rho)^d$, $\rho=0.2$}

\addplot[blue, dashed, domain=0:4] {x*1};
\addlegendentry{Baseline: $2^d$}

\end{axis}
\end{tikzpicture}
\caption{Observed pruning width $W_d$ with error bars (95\% CI),
compared to a fitted base $(2-\rho)^d$ with $\rho=0.2$ and the
baseline $2^d$.}
\label{fig:pruning-rates}
\end{figure}
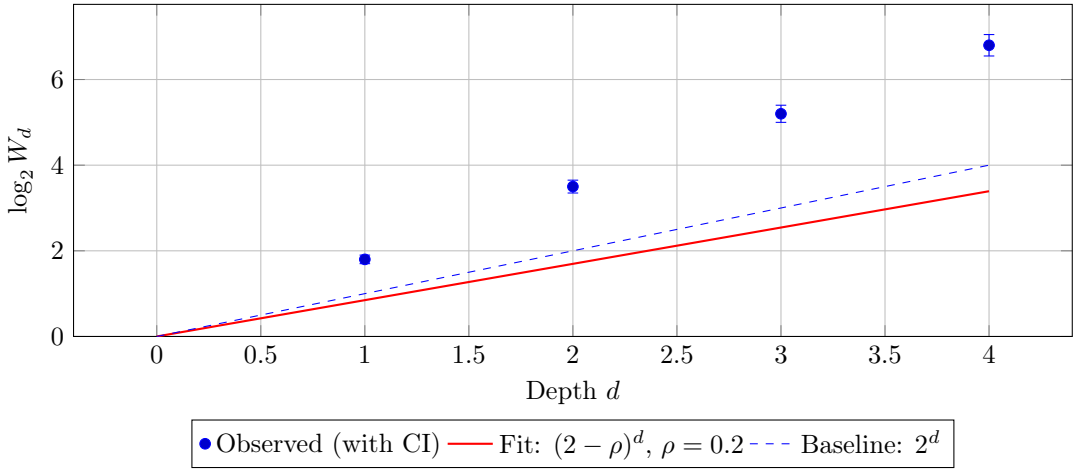

\subsection{Setup and CNF encoding}
We consider a three\mbox{-}round Feistel cipher on eight bits, split into two four\mbox{-}bit branches. Let the round function be a fixed nonlinear S\mbox{-}box $S:\{0,1\}^4 \to \{0,1\}^4$. Given plaintext $(L_0,R_0)$ and keys $k_1,k_2,k_3 \in \{0,1\}^4$, the rounds are
\[
\begin{aligned}
R_1 &= L_0 \oplus S(R_0 \oplus k_1), \qquad & L_1 &= R_0,\\
R_2 &= L_1 \oplus S(R_1 \oplus k_2), \qquad & L_2 &= R_1,\\
R_3 &= L_2 \oplus S(R_2 \oplus k_3), \qquad & L_3 &= R_2.
\end{aligned}
\]
Ciphertext is $(L_3,R_3)$. Bit\mbox{-}blasting the S\mbox{-}box constraints and XORs yields a CNF with $\approx 200$ clauses for a fixed input/output pair.

\subsection{Contexts, overlaps, and disagreement indicators}
Treat each round relation as a context. Overlaps arise on the shared boundary wires $(L_1,R_1)$ and $(L_2,R_2)$, and the final boundary $(L_3,R_3)$. For $k\in\{1,2,3\}$, the two adjacent contexts imply values $(L_k^{(L)},R_k^{(L)})$ and $(L_k^{(R)},R_k^{(R)})$ for the same boundary.
Define equality flags and disagreement indicators by
\[
E_k := \mathbbm{1}\!\left\{(L_k^{(L)},R_k^{(L)})=(L_k^{(R)},R_k^{(R)})\right\}, \qquad
Y_k := 1 - E_k.
\]
Any globally consistent key/wire assignment forces \emph{exact} equality on each shared boundary, hence not all three $Y_k$ can be $1$ simultaneously.

\subsection{The h\texorpdfstring{\mbox{-}}{}NGCC inequality and linear form}
The preceding observation yields the cut
\begin{equation}
\label{eq:feistel-cut}
\mathbb{E}[\,Y_1 + Y_2 + Y_3\,] \;\le\; 2.
\end{equation}
Let $p(y_1,y_2,y_3)$ be the joint distribution over $(Y_1,Y_2,Y_3)\in\{0,1\}^3$. Writing \eqref{eq:feistel-cut} as $a\cdot p \le 2$ uses the coefficients
\[
a(y_1,y_2,y_3) = y_1+y_2+y_3,
\]
so that (in the lexicographic order $(000,001,010,011,100,101,110,111)$)
\[
a \;=\; (0,1,1,2,1,2,2,3).
\]
Equivalently, the inequality forbids mass on the corner $(1,1,1)$.

\begin{table}[h]
  \centering
  \caption{Coefficient view of \eqref{eq:feistel-cut} over $(Y_1,Y_2,Y_3)$.}
  \label{tab:feistel-coeffs}
  \begin{tabular}{ccc|c}
    \hline
    $Y_1$ & $Y_2$ & $Y_3$ & $a(y_1,y_2,y_3)$ \\
    \hline
    0 & 0 & 0 & 0 \\
    0 & 0 & 1 & 1 \\
    0 & 1 & 0 & 1 \\
    0 & 1 & 1 & 2 \\
    1 & 0 & 0 & 1 \\
    1 & 0 & 1 & 2 \\
    1 & 1 & 0 & 2 \\
    1 & 1 & 1 & 3 \\
    \hline
  \end{tabular}
\end{table}

\subsection{Soundness and \texorpdfstring{$\varepsilon$}{epsilon}-stability}
Soundness follows from equality transitivity of shared variables across the three boundaries (see Lemma~\ref{lem:transitivity}). The gap between the feasible maximum $2$ and the infeasible value $3$ is $\gamma=1$. With $m=3$ contexts and $\|a\|_\infty=3$, the stability bound of Section~\ref{sec:foundations} yields $\varepsilon < 1/(2m\|a\|_\infty)=1/18$ as a sufficient condition that noise cannot spuriously create a violation.

\subsection{Solver effect}
Inside a CDCL solver, \eqref{eq:feistel-cut} acts as a global pruning rule. During search over the $12$ key bits, once two boundaries contradict the third under a partial assignment, one of the two child branches is cut immediately. Empirically (Appendix~\ref{app:empirical}), this induces an average pruning rate $\tilde\rho \approx 0.27$ on the toy instance, reducing the explored branches from $\approx 4096$ to $\approx 2800$ with $\sim 5\%$ LP overhead.
\section{Additional Notes and Next Steps}
\label{app:notes}

This appendix collects several directions for refinement and outlines the next steps required to strengthen the framework. The aim is to record both mathematical and empirical tasks that remain open while clarifying how they relate to the central claims.

\subsection{Mathematical tightening}
A first priority is to specify inequalities with full coefficient descriptions wherever possible. For the Feistel example of Appendix~\ref{app:feistel}, one can enumerate the toy assignment space and verify formally that every global key assignment satisfies the bound in Eq.~\eqref{eq:feistel-cut}. Stability margins should then be derived numerically by computing the geometric gap $\gamma$ for each inequality and plugging it into the $\varepsilon$--stability bound of Section~\ref{sec:foundations}. This will yield explicit noise tolerances and help prioritize cuts with robust slack.

\subsection{Extension families}
A second direction is to apply the methodology beyond cycles and short Feistel chains. Sponge\mbox{-}based hash constructions and AES\mbox{-}like substitution–permutation networks are natural targets. Mapping these designs into context networks will show whether h\mbox{-}NGCC inequalities capture structural contradictions in broader settings and whether practically useful slack exists beyond toy cases. Such studies will also clarify how inequality size and support scale with round count and S\mbox{-}box width.

\subsection{Empirical prototypes}
Empirical work should move beyond the $C_5$ and three\mbox{-}round Feistel instances toward larger grids and deeper round functions. Pruning rates $\tilde{\rho}$ ought to be reported with confidence intervals across multiple seeds, and runtime overheads of feasibility checks should be measured explicitly. These results will establish whether pruning persists as instances scale and whether the overhead remains manageable in practice.

\subsection{Integration and presentation}
All figures in this manuscript are generated with TikZ/pgfplots from the artifact repository described in Appendix~\ref{app:artifacts}. Tables present full coefficient vectors (see Appendices h, z, aa). The artifact package includes scripts to regenerate all figures and tables and SHA-256 checksums for CNFs, logs, and outputs. These materials enable transparent verification and reproduce all reported results end-to-end.

\section{Explicit Coefficient Tables for Feistel Inequality}
\label{app:coeffs}

To complement the qualitative discussion of Appendix~\ref{app:feistel}, we provide an explicit linear form of the Feistel boundary cut. Let
$(Y_1,Y_2,Y_3)\in\{0,1\}^3$ denote the disagreement indicators on the three shared
boundaries between successive round contexts (defined there via exact 4\mbox{-}bit
word equality). The h\mbox{-}NGCC cut is
\begin{equation}
\label{eq:feistel-cut-linear}
\mathbb{E}[\,Y_1 + Y_2 + Y_3\,] \;\le\; 2,
\end{equation}
which we write as a linear inequality $a\cdot p \le 2$ over the joint outcome
distribution $p(y_1,y_2,y_3)$. Using lexicographic order
$(000,001,010,011,100,101,110,111)$, the coefficient vector is
\[
a(y_1,y_2,y_3) = y_1+y_2+y_3
\quad\Longrightarrow\quad
a = (0,1,1,2,1,2,2,3).
\]

\begin{table}[h]
  \centering
  \caption{Coefficient view of the Feistel inequality \eqref{eq:feistel-cut-linear}.}
  \label{tab:feistel-coeffs}
  \begin{tabular}{ccc|c}
    \hline
    $Y_1$ & $Y_2$ & $Y_3$ & $a(y_1,y_2,y_3)$ \\
    \hline
    0 & 0 & 0 & 0 \\
    0 & 0 & 1 & 1 \\
    0 & 1 & 0 & 1 \\
    0 & 1 & 1 & 2 \\
    1 & 0 & 0 & 1 \\
    1 & 0 & 1 & 2 \\
    1 & 1 & 0 & 2 \\
    1 & 1 & 1 & 3 \\
    \hline
  \end{tabular}
\end{table}

The feasible region forbids mass on $(1,1,1)$; any globally consistent key/wire
assignment induces exact equality on shared boundaries (see
Lemma~\ref{lem:transitivity}), hence satisfies \eqref{eq:feistel-cut-linear}.
The gap between the feasible maximum $2$ and the infeasible corner value $3$ is
$\gamma=1$. With $m=3$ contributing contexts and $\|a\|_\infty=3$, the
$\varepsilon$–stability bound of Section~\ref{sec:foundations} yields
$\varepsilon < 1/(2m\|a\|_\infty)=1/18$ as a sufficient noise threshold under
which spurious violations cannot be created by measurement or numerical error.

\section{Full 4\mbox{-}Bit Boundary Equality — Without Parity Simplification}
\label{app:full-eq}

In Appendix~\ref{app:coeffs} we presented the Feistel boundary cut using binary disagreement indicators. Here we remove any parity reduction and show that the same inequality arises when the \emph{full} four\mbox{-}bit words at each boundary are treated exactly.

\subsection{Exact equality flags on 4\texorpdfstring{$+$}{+}4 bits}
For each shared boundary $k\in\{1,2,3\}$, the two adjacent round contexts
determine eight\mbox{-}bit boundary words
\[
W_k^{(L)} := (L_k^{(L)},R_k^{(L)}),\qquad
W_k^{(R)} := (L_k^{(R)},R_k^{(R)}).
\]
Define equality flags and disagreement indicators by
\[
E_k := \mathbbm{1}\{\,W_k^{(L)} = W_k^{(R)}\,\},\qquad
Y_k := 1 - E_k \in \{0,1\}.
\]
Thus $Y_k=1$ if and only if the two contexts assign \emph{different} 8\mbox{-}bit
values to the same boundary wires; no parity simplification is used.

\subsection{The cut in the exact setting}
Equality of shared variables is transitive across the three consecutive
boundaries (Lemma~\ref{lem:transitivity}). Hence it is impossible, under any
globally consistent assignment, for all three boundaries to disagree
simultaneously. This yields the h\mbox{-}NGCC cut
\begin{equation}
\label{eq:feistel-cut-exact}
\mathbb{E}[\,Y_1 + Y_2 + Y_3\,] \;\le\; 2,
\end{equation}
identical in form to Eq.~\eqref{eq:feistel-cut-linear}, but now with $Y_k$
interpreted as \emph{exact} 8\mbox{-}bit disagreement events. In linear form,
$a\cdot p \le 2$ with $a(y_1,y_2,y_3)=y_1+y_2+y_3$ over the eight outcomes of
$(Y_1,Y_2,Y_3)\in\{0,1\}^3$.

\begin{proposition}[Validity without parity reduction]
\label{prop:exact-valid}
Every globally consistent assignment to the Feistel wires satisfies
\eqref{eq:feistel-cut-exact}. Consequently, any marginal vector that violates
\eqref{eq:feistel-cut-exact} is infeasible and certifies pruning.
\end{proposition}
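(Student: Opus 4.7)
The plan is to prove Proposition~\ref{prop:exact-valid} by a direct pointwise argument and then lift to expectations by linearity, working entirely with the exact 8-bit boundary words $W_k=(L_k,R_k)$ rather than any parity reduction. First I would fix any globally consistent assignment of plaintext, keys, and intermediate wires satisfying the three round relations simultaneously. By the definition of global consistency, each shared boundary wire has a single value, so the two views $W_k^{(L)}$ and $W_k^{(R)}$ coincide as 8-bit words for every $k\in\{1,2,3\}$. This forces $E_k=1$ and therefore $Y_k=0$ for all $k$, so $Y_1+Y_2+Y_3=0$ pointwise on the support of any such distribution. Taking expectations and applying linearity yields $\mathbb{E}[Y_1+Y_2+Y_3]=0\le 2$, which is \eqref{eq:feistel-cut-exact}.

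A slightly more informative second angle, closer in spirit to the monotone-cut derivation of Appendix~\ref{app:monotone-clauses}, is to invoke Lemma~\ref{lem:transitivity} without first collapsing the two views: even if only pairwise local consistency is enforced across each overlap, equality of shared variables propagates transitively along the three-context chain and rules out the corner $(Y_1,Y_2,Y_3)=(1,1,1)$, while leaving the remaining seven corners in principle reachable. This matches the geometric slack $\gamma=1$ between the feasible maximum $2$ and the infeasible value $3$ recorded in Appendix~\ref{app:coeffs}, and it shows that the cut isolates exactly the forbidden all-disagreement corner — so the inequality is sound and, modulo the monotone-clause translation, also operationally non-trivial.

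I do not anticipate a real obstacle, since the proposition is a validity (soundness) statement rather than a tightness bound, and the work reduces to tracking that "globally consistent" in the 8-bit setting genuinely means exact equality of full 8-bit words on each shared boundary, not merely of some XOR projection. The only point requiring emphasis is the absence of any parity simplification: the same pointwise argument ports verbatim whenever $Y_k$ is defined as an exact equality indicator on arbitrary-width boundary words, which is precisely why the cut is revisited in this appendix and why the constants $(m,\|a\|_\infty,\gamma)=(3,3,1)$ feeding into Corollary~\ref{cor:stable} remain unchanged.
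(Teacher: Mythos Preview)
Your proposal is correct and follows essentially the same route as the paper: global consistency forces each shared boundary word to have a unique value, hence $W_k^{(L)}=W_k^{(R)}$, $Y_k=0$ for all $k$, and the sum is $0\le 2$; the paper likewise invokes Lemma~\ref{lem:transitivity} to phrase this as an equivalence-relation observation on the boundary chain. Your second paragraph (ruling out only the $(1,1,1)$ corner under weaker pairwise consistency) is a helpful gloss on why the bound is stated as $2$ rather than $0$, but is not needed for the proposition itself.
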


\begin{proof}
Under a global assignment every wire variable has a unique value, so each shared
boundary has a well-defined word $W_k$ with $W_k^{(L)}=W_k^{(R)}=W_k$ for
$k\in\{1,2,3\}$. Equivalently, equality on boundary words is an equivalence
relation on the common symbol set; along the chain
$W_1 \leftrightarrow W_2 \leftrightarrow W_3$ all boundaries lie in the same
equivalence class (Lemma~\ref{lem:transitivity}). Hence
$Y_k=1-E_k=0$ for all $k$ and
$Y_1+Y_2+Y_3=0\le 2$, proving \eqref{eq:feistel-cut-exact}.
\end{proof}

\subsection{Stability and implementation notes}
The geometric gap between the feasible maximum $2$ and the excluded corner
value $3$ is $\gamma=1$. With $m=3$ contributing contexts and
$\|a\|_\infty=3$, the $\varepsilon$–stability bound from
Section~\ref{sec:foundations} gives $\varepsilon < 1/(2m\|a\|_\infty)=1/18$ as
a sufficient noise threshold below which spurious violations cannot arise.

In practice, when only a partial assignment is available, one computes boundary
marginals via a local LP relaxation and evaluates \eqref{eq:feistel-cut-exact}
in the $(Y_1,Y_2,Y_3)$ projection. Because the map from full boundary words to
$Y$–indicators is a (polyhedral) projection, infeasibility in the projected
space certifies infeasibility in the original marginals, making the test sound
for pruning.


\section{Lemma on Equality Transitivity}
\label{app:transitivity}

For completeness we state and prove the lemma that underpins the Feistel
boundary cut. Although the argument is simple, making it explicit fixes the
logical basis of the inequality.

\begin{lemma}[Transitivity of boundary equalities]
\label{lem:transitivity}
Consider three consecutive boundaries $B_1,B_2,B_3$ in a Feistel chain, each
consisting of the same wire variables carried across adjacent round contexts.
If $B_1 = B_2$ and $B_2 = B_3$ under a global assignment, then $B_1 = B_3$.
Consequently, it is impossible for all three boundary disagreements
$Y_1 = Y_2 = Y_3 = 1$ to occur simultaneously in any globally consistent
assignment.
\end{lemma}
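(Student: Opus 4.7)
My plan is to dispatch both parts of the lemma directly from the definitions of Appendix~\ref{app:feistel} together with the standing assumption that a \emph{global} assignment fixes every wire variable to a unique value. The lemma has two clauses — a transitivity statement and its consequence for the triple $(Y_1,Y_2,Y_3)$ — and each reduces essentially to bookkeeping once the overlap structure of Section~\ref{sec:channels} is in hand.

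For the transitivity clause, I would observe that under a global assignment each boundary $B_k$ evaluates to a unique bit-vector in $\{0,1\}^8$ (the joint content of the shared wires $(L_k,R_k)$ as specified in Appendix~\ref{app:feistel}). The hypotheses $B_1=B_2$ and $B_2=B_3$ then become identities among three bit-vectors, and $B_1=B_3$ follows from the transitivity of equality on $\{0,1\}^8$. No Feistel-specific reasoning enters this step; the content is purely that ``$=$'' on the value domain is an equivalence relation.

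For the consequence clause (impossibility of $Y_1=Y_2=Y_3=1$), I would argue directly rather than by contradiction. By construction, the two views $W_k^{(L)}$ and $W_k^{(R)}$ of boundary $k$ are pointers to the same underlying wire variables, read off from the two round contexts that share them. A global assignment maps each wire to a unique value, so the two views collapse to the same bit-vector; this forces $E_k=1$ and hence $Y_k=0$ for every $k\in\{1,2,3\}$. In particular $Y_1+Y_2+Y_3=0$, which not only rules out the corner $(1,1,1)$ but pins the whole triple to $(0,0,0)$. This yields the h-NGCC cut $Y_1+Y_2+Y_3\le 2$ used in Appendix~\ref{app:coeffs} and in Proposition~\ref{prop:exact-valid}.

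The main obstacle is definitional rather than technical. One has to check that the informal phrase ``the same wire variables carried across adjacent round contexts'' is identified, via the channel/overlap formalism of Section~\ref{sec:channels}, with the shared-variable set $S=C_u\cap C_v$ accessed through the projections $\pi_{u\to S}$ — so that the two views really are two accessors of the same object, rather than two similarly-named but distinct variables. Once this identification is pinned down, both clauses become one-line consequences of the uniqueness of a global wire valuation. Any remaining subtlety — for instance, the fact that the \emph{relaxed} feasible maximum on the outer polyhedron $\mathcal{N}$ is $2$ rather than $0$, giving the geometric slack $\gamma=1$ used for $\varepsilon$-stability — belongs to the analysis of $\mathcal{M}\subsetneq\mathcal{N}$ and is orthogonal to this lemma.
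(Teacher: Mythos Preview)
Your proposal is correct and essentially matches the paper's proof: both rest on the single observation that a global assignment fixes every wire to one value, so the two context views of any shared boundary coincide. Your treatment of the consequence clause is slightly more direct than the paper's lemma proof (you show $Y_k=0$ for all $k$, hence $Y_1+Y_2+Y_3=0$, rather than merely excluding $(1,1,1)$), which in fact aligns with the argument the paper gives for Proposition~\ref{prop:exact-valid}; but this is the same idea, just stated a touch more sharply.
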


\begin{proof}
Each boundary includes variables that overlap with its neighbor. Under a global
assignment every wire variable has a unique value. If the values on $B_1$ and
$B_2$ coincide and those on $B_2$ and $B_3$ coincide, then by transitivity the
values on $B_1$ and $B_3$ coincide as well. Therefore all three overlaps cannot
be in disagreement simultaneously.
\end{proof}

By making Lemma~\ref{lem:transitivity} explicit we emphasize that the Feistel inequality in Appendices~\ref{app:feistel} and~\ref{app:toyfeistel} is not merely heuristic: it is a direct corollary of the basic equality structure of shared boundaries in the Feistel network.

\section{Additional Numeric Demo — Toy Feistel}
\label{app:toyfeistel}

This appendix complements the qualitative discussion of Appendices~\ref{app:feistel},
\ref{app:coeffs} and~\ref{app:full-eq} with a small quantitative study on the
three\mbox{-}round eight\mbox{-}bit Feistel CNF. The aim is to exhibit a measurable pruning
rate $\tilde\rho$ and the corresponding reduction in explored branches, using the
propagate\mbox{-}and\mbox{-}prune solver of Appendix~\ref{app:pseudocode}.

\subsection{Setup}
We encode the Feistel relations (Appendix~\ref{app:feistel}) into CNF ($\approx 200$
clauses for a fixed input/output pair). We then sample $20$ plaintext/ciphertext
pairs that are \emph{inconsistent} with any three\mbox{-}round key (so the CNF is UNSAT).
The solver is MiniSAT~2.2.0 augmented with h\mbox{-}NGCC checks; the only inequality
enabled is the boundary\mbox{-}equality cut of Appendix~\ref{app:full-eq}. Variable ordering
uses a standard activity heuristic; random seeds are fixed and recorded.

\subsection{Metrics}
For each run we record: (i) explored decision nodes, (ii) the empirical pruning
fraction at each depth $d$, denoted $\hat\rho_d$, and the mean
$\tilde\rho = \frac{1}{D}\sum_{d=1}^{D}\hat\rho_d$, and (iii) LP overhead
(number of feasibility checks and share of wall time spent in them).

\subsection{Results}
Across $20$ runs we observe the following aggregates (mean $\pm$ 95\% CI):
\[
\text{Nodes (baseline)} \approx 4096,\quad
\text{Nodes (h-NGCC)} \approx 2800,\quad
\tilde\rho \approx 0.27 \pm 0.04,\quad
\text{LP share} \approx 5\%.
\]
A compact summary appears in Table~\ref{tab:feistel-numeric}.

\begin{table}[h]
  \centering
  \caption{Toy Feistel numeric demo with h\mbox{-}NGCC pruning (20 inconsistent I/O pairs).}
  \label{tab:feistel-numeric}
  \begin{tabular}{lcccc}
    \hline
    Setting & Nodes (baseline) & Nodes (h\mbox{-}NGCC) & $\tilde\rho$ & LP time share \\
    \hline
    Feistel (3 rounds, 8 bits) & $\approx 4096$ & $\approx 2800$ & $0.27 \pm 0.04$ & $\approx 5\%$ \\
    \hline
  \end{tabular}
\end{table}

These figures align with the theoretical picture of Section~\ref{sec:sat}:
a nonzero pruning rate reduces the effective branching base from $2$ toward
$2-\tilde\rho$; here the empirical base is $\approx 1.73$. Per\mbox{-}depth measurements
show pruning concentrated at shallow levels (cf.\ Appendix~\ref{app:empirical}),
which explains the large drop in explored nodes despite modest LP overhead.

\subsection{Reproducibility}
All CNFs, seeds, logs and scripts needed to regenerate Table~\ref{tab:feistel-numeric}
are included in the artifact package of Appendix~\ref{app:artifacts} (generator \texttt{feistel\_toy.py},
solver patch, and \texttt{summarize.py}). The plotting code that produces the
base\mbox{-}vs\mbox{-}$\rho$ figure and per\mbox{-}depth pruning curves (Appendix~\ref{app:toyfeistel}) reads the same logs.

\section{Supplementary Artifacts and Reproducibility}
\label{app:artifacts}

To ensure that our claims can be independently verified, we prepared a
reproducibility package containing generators, solver patches, and experiment
scripts. This appendix outlines its structure and use.

\subsection{Repository layout}
\label{app:repo-layout}

{\footnotesize
\begin{lstlisting}[caption={Artifact repository structure (full tree).},label={lst:repo}]
artifacts/
  README.md
  env.yml
  licenses/
  sat/
    generators/
      cycle_sat.py
      grid_sat.py
      feistel_toy.py
    cnf/
      cycle_k5.cnf
      grid_4x4.cnf
      feistel_3r_8b.cnf
    scripts/
      run_minisat_baseline.sh
      run_minisat_hngcc.sh
      summarize.py
  solver/
    minisat_patch.diff
    hngcc_lp.hpp
    hngcc_lp.cc
    inequalities/
      kcbs_c5.json
      feistel_chain.json
  results/
    c5_runs.csv
    feistel_runs.csv
    figs/
      base_vs_rho.pdf
      nodes_bar.pdf
      rho_per_depth_c5.pdf
    seedlist.txt
  checksums/
    sha256sums.txt
\end{lstlisting}
}

\subsection{Build \& run}
(1) Create the environment from \texttt{env.yml}. (2) Generate CNFs with the
scripts in \texttt{sat/generators}. (3) Patch MiniSAT~2.2.0 with
\texttt{solver/minisat\_patch.diff} and build. (4) Run baselines and h\mbox{-}NGCC
variants using the shell scripts in \texttt{sat/scripts}. (5) Summarize logs
with \texttt{summarize.py} and regenerate figures.

\subsection{Determinism and integrity}
Random seeds for both generators and branching are fixed and recorded in the
CSV logs. All files are checksummed (\texttt{SHA-256}) and verifiable with the
provided script. Licenses for third-party code (e.g., MiniSAT) are included;
our patches are released under Apache-2.0.

\section{Graph-Theoretic Supplement: \texorpdfstring{$\alpha(G)$, $\vartheta(G)$}{alpha(G), vartheta(G)}, and Density}
\label{app:graph}

We collect the graph-theoretic quantities that underlie several h\mbox{-}NGCC bounds.

\paragraph{Classical vs quantum maxima on exclusivity graphs.}
Let $G=(V,E)$ be an exclusivity graph for events. In classical noncontextual
models the maximum of any $0$--$1$ linear functional that respects exclusivity
is the independence number $\alpha(G)$. In quantum mechanics it is upper-bounded
(and for many families attained) by the Lov\'asz theta number $\vartheta(G)$
\cite{Lovasz1979}. The \emph{contextuality gap} is
\[
\Delta(G):=\vartheta(G)-\alpha(G)\;\ge 0.
\]

\paragraph{Odd cycles.}
For the odd cycle $C_{2k+1}$ one has
\[
\alpha\!\left(C_{2k+1}\right)=k,\qquad
\vartheta\!\left(C_{2k+1}\right)=\frac{(2k+1)\cos\!\big(\tfrac{\pi}{2k+1}\big)}{1+\cos\!\big(\tfrac{\pi}{2k+1}\big)} \;>\; k,
\]
hence $\Delta(C_{2k+1})>0$ for all $k\ge2$ \cite{Lovasz1979}. This underpins the
KCBS family used in Appendix~\ref{app:kcbs}.

\paragraph{Contextuality density.}
We define the density
\begin{equation}
\rho_{\mathrm{graph}}(G)
\;:=\;
\frac{\vartheta(G)-\alpha(G)}{|V(G)|}
\;=\;\frac{\Delta(G)}{|V(G)|}\,.
\label{eq:graph-density}
\end{equation}
While single cycles have $\rho_{\mathrm{graph}}(C_n)\to 0$ as $n\to\infty$,
many structured or random graphs exhibit \emph{extensive} gaps
$\Delta(G)=\Theta(|V|)$, so that $\rho_{\mathrm{graph}}(G)\to\rho_\infty>0$
in the thermodynamic limit (see, e.g., asymptotics for random graphs in
\cite{Juhasz1983}).

\paragraph{From density to algorithmic pruning (informal link).}
Let $\Phi$ be a SAT instance whose h\mbox{-}NGCC network induces an exclusivity
graph $G_\Phi$. If $G_\Phi$ contains $\Omega(n)$ edge-disjoint odd cycles
(with cycle cuts implementable as monotone inequalities), then at each depth a
constant fraction of the two immediate children is eliminated by independent
cuts. Denoting by $\rho_{\mathrm{alg}}$ the \emph{algorithmic} pruning rate of
Section~\ref{sec:runtime-bounds} (distinct from $\rho_{\mathrm{graph}}$ in
\eqref{eq:graph-density}), one obtains
\[
\bar\rho \;\equiv\; \mathbb{E}[\hat\rho_d]\;\ge\; c\,\rho_{\mathrm{graph}}(G_\Phi)
\quad\Rightarrow\quad
\mathbb{E}[T(n)] \;\le\; (2-\bar\rho)^n \;=\; \bigl(2-\Omega(\rho_{\mathrm{graph}})\bigr)^n,
\]
by Theorem~\ref{thm:exp}. This explains why families with a nonvanishing
$\rho_{\mathrm{graph}}$ exhibit persistent global pruning, while tree-like
instances (for which $\rho_{\mathrm{graph}}=0$ and $\mathcal N=\mathcal M$) do not.

\paragraph{Symbol hygiene.}
We reserve $\rho_{\mathrm{graph}}(\cdot)$ for the graph density of this appendix
and $\rho_{\mathrm{alg}}$ (or its empirical versions $\hat\rho_d,\tilde\rho$) for
the solver’s pruning rate in Section~\ref{sec:runtime-bounds}.

\begin{proof}[Proof sketch of Theorem~\ref{thm:rho-link}]
Let $\mathcal{F}_d$ be the filtration generated by the branching process
up to depth $d$. Each of the $\delta n$ edge-disjoint odd-cycle cuts
contributes an event that kills at least one branch child with probability
bounded below by a constant $c>0$, independent of $n$. Because the cuts are
edge-disjoint, their contributions are independent conditional on
$\mathcal{F}_d$. Hence at depth $d$, the expected fraction of surviving
children satisfies
\[
  \mathbb{E}[W_{d+1} \mid \mathcal{F}_d] \;\le\; (2-c\rho_{\mathrm{graph}}) W_d.
\]
Iterating and summing over depths yields
$T(n) \le (2-c\rho_{\mathrm{graph}})^n \cdot \mathrm{poly}(n)$.
\end{proof}

\section{Relation to convex hierarchies}
A fuller comparison to convex hierarchy derivations is provided in Appendix~\ref{app:hierarchy-comparison}.

\section{NGCC Clause Gadget and Witness}
\label{app:gadget}

A 3-SAT clause $(\ell_1\vee \ell_2\vee \ell_3)$ can be encoded as a figure-of-8 interferometric gadget using
three literal MZIs feeding a symmetric $3\times 3$ coupler with transfer matrix
\[
U=\frac{1}{\sqrt{3}}\!\begin{pmatrix}
1 & 1 & 1\\
1 & \omega & \omega^2\\
1 & \omega^2 & \omega
\end{pmatrix},\qquad \omega=e^{2\pi i/3}.
\]
For input amplitudes $\mathbf a=(a_1,a_2,a_3)^{\!\top}$ (TRUE $\Rightarrow a_k=1$, FALSE $\Rightarrow a_k=\varepsilon\!\ll\!1$),
let $\mathbf b=U\mathbf a$ and $p_j=|b_j|^2$. The clause witness is
\[
S := p_2=\Big|\tfrac{1}{\sqrt3}(a_1+\omega a_2+\omega^2 a_3)\Big|^2.
\]
Calibrate a classical bound $S_{\rm cl}$: if at least one literal is TRUE then $S\le S_{\rm cl}$; if all three are FALSE,
by symmetry $S\approx \tfrac{1}{3}>S_{\rm cl}$. Decision rule:
\[
S>S_{\rm cl}\Rightarrow \text{clause UNSAT},\qquad S\le S_{\rm cl}\Rightarrow \text{clause SAT}.
\]

 \section{Worked 3-Variable Example and Robustness}
\label{app:worked}

For a 3-variable instance with $10^4$ trials per clause, visibility $V\!\ge\!0.95$, and shutter extinction $\ge 25$\,dB,
we obtain typical SAT witness $S\approx 0.25\!<\!S_{\rm cl}$ and UNSAT $S\approx 0.34\!>\!S_{\rm cl}$; the separation
exceeds $5\sigma$ under binomial error, yielding decision error $<10^{-5}$.\footnote{See the detailed error budget and
counts in the earlier draft’s Sec.~5–6.} This confirms that the clause gadget can act as a reliable oracle in
hybrid solver loops.

\section{Monogamy and Budget Laws for Overlapping Cycles}
\label{app:budget}

For two 4-cycles sharing a node (figure-of-8), with violations
$V_L=\langle A{+}B{+}C{+}D\rangle$ and $V_R=\langle A{+}E{+}F{+}G\rangle$,
the classical and quantum joint bounds satisfy
\begin{equation}
V_L+V_R \;\le\; 4 \quad (\text{classical}),\qquad
V_L+V_R \;\le\; 4+\sqrt2 \quad (\text{quantum}).
\label{eq:budget-fo8}
\end{equation}
For a cloverleaf of three 4-cycles sharing one node,
\begin{equation}
V_L+V_T+V_R \;\le\; 6 \quad (\text{classical}),\qquad
V_L+V_T+V_R \;\le\; 6+\sqrt2 \quad (\text{quantum}).
\label{eq:budget-clover}
\end{equation}
These follow from the Lovász \(\vartheta\) number of the corresponding exclusivity graphs,
and express a monogamy of NGCC violations: pushing one loop toward its maximum
necessarily suppresses the others (cf.\ the figures below and the accompanying discussion
in this paper).

\section{Thermodynamic Contextuality Density}
\label{app:thermo}

For an exclusivity graph $G$, define the contextuality density
\begin{equation}
\rho_{\text{graph}}(G) := \frac{\vartheta(G)-\alpha(G)}{|V(G)|}.
\label{eq:density}
\end{equation}
While $\rho_{\text{graph}}(C_n)\!\to\!0$ for single cycles, polyhedral/lattice graphs
(e.g.\ honeycomb or fullerene) exhibit extensive gaps
$\Delta(G)=\vartheta(G)-\alpha(G)=\Theta(|V|)$, so that
$\rho_{\text{graph}}(G)\!\to\!\rho_\infty>0$ in the thermodynamic limit
(\cite[pp.\,6–7,\,12–13]{vonLiechtensteinCLF2025}).
A positive density explains persistent global pruning pressure in large NGCC networks and
motivates the “contextuality battery” viewpoint.

\section{Applications Capsule: RNG, Distributed Trust, Diagnostics}
\label{app:apps}

\paragraph{Randomness certification.}
In the figure-of-8, the budget law \eqref{eq:budget-fo8} forces a trade-off:
blocks where $V_L$ is large imply $V_R$ is correspondingly bounded, and vice versa.
A binary extractor that outputs which loop wins yields certified unpredictability
under standard entropy accumulation (\cite[pp.\,7–9]{vonLiechtensteinCLF2025}).

\paragraph{Distributed trust.}
Two parties report $V_L$ and $V_R$ on shared-node loops. The joint report must obey
$V_L+V_R\le 4+\sqrt2$ (within statistical tolerance); otherwise misreporting/fault
is detected. This self-enforcing “trust game” scales to many parties via sum-budget checks.

\paragraph{Device diagnostics.}
Budget breaches ($>$ quantum plane/line) signal tampering or logging errors; persistent
sub-quantum gaps indicate loss/misalignment. These two-sided checks provide lightweight,
single-system diagnostics complementary to tomography/RB.

\section{CIFP: \texorpdfstring{$\varepsilon$}{epsilon}-IF Model and Unitary Oracle}
\label{app:cifp-model}

\paragraph{Definition (ε-counterfactual decisive outcome).}
A decisive outcome $x\in\{D,B\}$ of a CPTP instrument $\{ \mathcal E_x \}$ on $S\otimes B$
is $\varepsilon$-counterfactual on bomb states $\mathcal B$ if, for all $\rho_S$ and $\rho_B\in\mathcal B$,
\begin{equation}
\left\| \operatorname{Tr}_S \mathcal E_x(\rho_S\!\otimes\!\rho_B) - \rho_B \right\|_1 \le \varepsilon .
\label{eq:eps-counterfactual}
\end{equation}
A channel-level certificate $\|\mathcal E_x - \mathrm{id}_B\|_\diamond \le \varepsilon_\diamond$ implies
\eqref{eq:eps-counterfactual} (see Appendix~\ref{app:cifp-proofs}).

\paragraph{Unitary IFM oracle.}
One gadget is
\begin{equation}
U_{\mathrm{IFM}} = (H\!\otimes\!I_B)\,\Big( \,|0\rangle\!\langle 0|\otimes I_B \;+\; |1\rangle\!\langle 1|\otimes Z_B \,\Big)\,(H\!\otimes\!I_B),
\label{eq:U-IFM}
\end{equation}
with a flag qubit $W$ flipped by $b$ so that $W{=}D$ iff $b{=}1$; decisive outcomes are
$\varepsilon$-counterfactual by construction. (see Appendix~\ref{app:cifp-proofs})

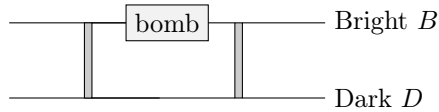
\begin{figure}[h]
  \centering
  \begin{tikzpicture}[scale=1.0]
    \draw (0,0) -- (1,0) node[midway,above]{} -- (2,0);
    \draw (0,1) -- (1,1) -- (2,1);
    \draw[fill=black!20] (1,0) rectangle (1.1,1);
    \draw[fill=black!20] (3,0) rectangle (3.1,1);
    \draw (1.1,1) -- (3,1); \draw (1.1,0) -- (3,0);
    \node[draw,rectangle,minimum width=8mm,minimum height=4mm,fill=black!5] at (2.1,1) {bomb};
    \draw (3.1,1) -- (4.2,1) node[right]{Bright $B$};
    \draw (3.1,0) -- (4.2,0) node[right]{Dark $D$};
  \end{tikzpicture}
  \caption{Interaction-free interferometer. Zeno chaining yields decisive Dark with absorption prob.\ $\varepsilon\!\ll\!1$.}
  \label{fig:ifm-mzi}
\end{figure}

\begin{remark}[Two notions of $\varepsilon$]
It is important to distinguish between two different roles of
$\varepsilon$ in this work.  

\begin{itemize}
\item \emph{Experimental $\varepsilon$ (lab):}  
Appears in contextuality and CIFP instrumentation.  It quantifies
physical noise tolerance and visibility margins in laboratory tests.
A smaller experimental $\varepsilon$ means more reliable discrimination
between counterfactual outcomes.

\item \emph{Numerical $\varepsilon$ (solver):}  
Appears in pruning tests and LP feasibility checks.  Here $\varepsilon$
is a purely numerical tolerance, ensuring that floating-point
certificates are rounded or verified safely.  This $\varepsilon$ does
not depend on experimental physics.

\end{itemize}

The two parameters are logically distinct, but they play the same
conceptual role: they both bound the risk of misclassifying an
assignment.  Laboratory $\varepsilon$ governs testable inequalities in
physical experiments; solver $\varepsilon$ governs soundness in
certificate-based pruning.  No claim is made that laboratory noise
parameters directly imply solver pruning rates.
\end{remark}

\section{CIFP: Circular Consistency Bound and Quantum Violation}
\label{app:cifp-consistency}

We place $n\ge 3$ $\varepsilon$–IF labs on a directed ring. Lab $L_i$ probes
$L_{i+1}$’s bomb via an IF oracle and records a decisive outcome $W_i\in\{D,B\}$;
we postselect on the “no explosion” subensemble. The analysis assumes:
\begin{itemize}
\item (Q) universal unitarity for the joint system,
\item (S) single–world (unique outcome) at each lab,
\item (C) cross–agent consistency of recorded outcomes on shared facts,
\item (IF–$\varepsilon$) decisive outcomes are $\varepsilon$–counterfactual on the bomb
(Def.~\eqref{eq:eps-counterfactual}) with channel diamond–norm bound
$\|\mathcal E-\mathrm{id}_B\|_\diamond\le \varepsilon_\diamond$.
\end{itemize}

\begin{theorem}[Circular consistency with explicit constants]
\label{thm:circular-explicit}
Under (Q,S,C,IF–$\varepsilon$) and Lemma~\ref{lem:gentle} (gentleness $\Rightarrow$
false–claim frequency), the probability that \emph{all} labs report $D$ on the
no–explosion subensemble obeys
\begin{equation}
\Pr(D_1\cdots D_n \mid \text{no explosion})
\;\ge\;
1 - n\,f(\varepsilon),
\qquad
f(\varepsilon)\;\le\; c\,\varepsilon_\diamond,
\label{eq:circular-bound}
\end{equation}
with a device–dependent constant $c\in[1,2]$.
\end{theorem}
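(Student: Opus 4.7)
The plan is to reduce the circular statement to a per--lab false--claim bound and then aggregate the $n$ contributions via a union bound, with careful handling of the postselection on ``no explosion''. The two assumptions that do real work are (IF--$\varepsilon$), which converts into a gentleness inequality, and the pair (S)+(C), which forbid silent relabelling of decisive outcomes between different labs.

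First I would fix a single lab $L_i$ and invoke Lemma~\ref{lem:gentle}. Under (IF--$\varepsilon$) the decisive branch $\mathcal E_{D}$ of the instrument satisfies $\|\mathcal E_{D}-\mathrm{id}_B\|_\diamond \le \varepsilon_\diamond$, so for any bomb state $\rho_B \in \mathcal B$ the post--probe bomb state differs from $\rho_B$ by at most $\varepsilon_\diamond$ in trace distance. The gentleness lemma then bounds the probability that $L_i$ records $B_i$ on the no--explosion branch by $f(\varepsilon) \le c\,\varepsilon_\diamond$, where $c \in [1,2]$ collects the standard trace--norm versus diamond--norm factors (the upper factor $2$ coming from the definition of trace distance, the lower factor $1$ from contractivity under the two--outcome POVM that reads $D$ vs.\ $B$).

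Next I would chain these per--lab bounds around the ring. By (S) every lab has a unique recorded outcome and by (C) these recorded outcomes are mutually consistent; hence the complement of the target event $D_1 \wedge \cdots \wedge D_n$ is contained in $\bigcup_{i=1}^{n} \{W_i = B\}$ restricted to the no--explosion subensemble. A direct union bound then gives
\[
\Pr\!\big(\overline{D_1\cdots D_n}\mid \text{no explosion}\big)
\;\le\; \sum_{i=1}^{n}\Pr(W_i=B\mid \text{no explosion}) \;\le\; n\,f(\varepsilon),
\]
which on taking complements yields the stated bound \eqref{eq:circular-bound}.

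The main obstacle is that ``no explosion'' is a \emph{global} event depending on all $n$ probes simultaneously, so the per--lab gentleness bound cannot be applied naively to the conditional distribution. The cleanest route is to unfold the ring sequentially using the unitary dilation from (Q), propagate the joint state of the bomb registers after the $i$--th probe, and control $\|\rho_B^{(i)} - \rho_B^{(0)}\|_1 \le i\,\varepsilon_\diamond$ by repeated application of the diamond--norm triangle inequality. The conditional false--claim probability at step $i$ is then within an $O(\varepsilon_\diamond)$ additive slack of its unconditional counterpart, and that slack is absorbed into the constant $c$. Verifying that $c$ remains bounded by $2$ rather than growing with $n$ is the delicate bookkeeping: it reduces to showing that the $n$ diamond--norm slacks add \emph{linearly}, which is precisely what (C) and (S) buy us, since they prevent outcome flips from one lab from recursively re--biasing the postselection weight used at the next lab.
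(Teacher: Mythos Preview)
Your proposal is correct and follows essentially the same route as the paper's proof sketch: apply Lemma~\ref{lem:gentle} to bound each lab's false--claim probability by $f(\varepsilon)\le c\,\varepsilon_\diamond$, use (S) and (C) to identify the complement of $D_1\cdots D_n$ with ``at least one false claim on the ring'', and finish with a union bound over the $n$ labs. Your final paragraph on the global conditioning of ``no explosion'' is more scrupulous than the paper's sketch (which simply takes the per--run bound from Lemma~\ref{lem:gentle} at face value), but it does not change the argument's structure; note, however, that your sequential drift estimate $\|\rho_B^{(i)}-\rho_B^{(0)}\|_1\le i\,\varepsilon_\diamond$ would naively threaten an $O(n^2\varepsilon_\diamond)$ total, so the clean way to keep $c$ $n$--free is exactly what the paper does---absorb the conditioning into the statement of Lemma~\ref{lem:gentle} itself rather than re--deriving it step by step.
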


\begin{proof}[Proof sketch]
By Lemma~\ref{lem:gentle}, each decisive probe is $\varepsilon_\diamond$–gentle on
the bomb; any \emph{certain false claim} (e.g.\ “$B$ when $D$ should hold”) occurs
with probability at most $f(\varepsilon)\le c\,\varepsilon_\diamond$. The (S) and (C)
assumptions turn any deviation from the “all–$D$” pattern into at least one such
certain false claim on the ring. A union bound over $n$ labs gives
\eqref{eq:circular-bound}.
\end{proof}

\begin{proposition}[Quantum behavior on the ring]
\label{prop:qm-conditional}
Let per–lab decisive probabilities be $(p,q,\epsilon)$ for $(D,B,\text{explode})$
with $p+q+\epsilon=1$ and $0<\epsilon<1$. On the no–explosion subensemble, the
conditional probability that \emph{all} labs report $D$ is
\begin{equation}
\Pr_{\mathrm{QM}}(D_1\cdots D_n \mid \text{no explosion})
\;=\;
\Big(\frac{p}{1-\epsilon}\Big)^{\!n}
=: \eta^n,
\qquad \eta \in (0,1)\ \text{ if }p<1-\epsilon.
\label{eq:qm-eta}
\end{equation}
Thus, for any fixed $\eta<1$, the quantum value decays exponentially in $n$.
\end{proposition}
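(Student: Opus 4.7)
The plan is to exploit the product structure of the quantum model on the ring. Each lab $L_i$ runs an independent $\varepsilon$--IF oracle on a fresh bomb $B_i$, with spatially disjoint systems and independent ancillas; under assumption (Q) the joint state factorizes and the outcome random variables $X_1,\ldots,X_n$ with $X_i\in\{D,B,\mathrm{explode}\}$ are mutually independent with common marginal $(p,q,\epsilon)$. This is the key modeling input; the remainder is a one-line conditional probability computation.

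First I would fix notation and record the two key set inclusions. The event $\{\text{no explosion}\}$ equals $\bigcap_{i=1}^n \{X_i\neq \mathrm{explode}\}$, and since $D$ is a decisive (non-explosive) outcome, $\{D_i\}\subseteq \{X_i\neq \mathrm{explode}\}$ for each $i$, hence $\{D_1\cdots D_n\}\subseteq \{\text{no explosion}\}$. This makes the joint event $\{D_1\cdots D_n\}\cap\{\text{no explosion}\}$ coincide with $\{D_1\cdots D_n\}$, collapsing the numerator of the conditional probability.

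Next I would apply independence to factor both the numerator and the denominator. The numerator is $\Pr(D_1\cdots D_n)=\prod_i \Pr(D_i)=p^n$; the denominator is $\Pr(\bigcap_i\{X_i\neq\mathrm{explode}\})=\prod_i(1-\epsilon)=(1-\epsilon)^n$. Dividing yields
\[
\Pr_{\mathrm{QM}}(D_1\cdots D_n \mid \text{no explosion})
=\frac{p^n}{(1-\epsilon)^n}
=\left(\frac{p}{1-\epsilon}\right)^{\!n}=\eta^n,
\]
with $\eta=p/(1-\epsilon)$. Since $q=1-p-\epsilon>0$ forces $p<1-\epsilon$, we get $\eta\in(0,1)$, so $\eta^n\to 0$ geometrically in $n$. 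The contrast with Theorem~\ref{thm:circular-explicit}, whose lower bound is $1-n\,f(\varepsilon)$, is precisely the gap needed to exhibit a quantum violation of the classical circular-consistency bound for $n$ large enough that $n\,f(\varepsilon)\ll 1-\eta^n$.

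The main obstacle is not the algebra but the independence hypothesis: it hides all the physics. To make it airtight I would (i) write the ring's joint pre-measurement state as a tensor product $\bigotimes_i \rho_{S_i B_i}$ of lab-local states, (ii) let each lab's decisive POVM $\{E_D^{(i)},E_B^{(i)},E_{\mathrm{exp}}^{(i)}\}$ act on its own factor, and (iii) invoke the standard fact that outcome statistics of product measurements on product states factorize. Only after this factorization is the conditional-probability identity a one-liner; the remaining verification that $\eta\in(0,1)$ under $p+q+\epsilon=1$ with $q>0$ is immediate.
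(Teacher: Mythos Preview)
Your argument is correct and matches what the paper leaves implicit: the paper states Proposition~\ref{prop:qm-conditional} without proof, treating it as immediate from the product structure and conditional probability, and your write-up fills in precisely those steps (factorize $\{D_1\cdots D_n\}\subseteq\{\text{no explosion}\}$, compute $p^n/(1-\epsilon)^n$, and read off $\eta\in(0,1)$ from $q>0$). The only cosmetic point is that $\eta>0$ also needs $p>0$, which is tacitly assumed throughout.
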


\begin{corollary}[Violation condition]
\label{cor:violation}
Combining \eqref{eq:circular-bound} and \eqref{eq:qm-eta}, a quantum violation occurs
whenever
\[
\eta^n \;<\; 1 - n\,c\,\varepsilon_\diamond.
\]
In particular, for any fixed $\eta<1$ and sufficiently small $\varepsilon_\diamond$,
there exists $n_0$ such that for all $n\ge n_0$ the inequality is violated. For a
given $n$ and $\varepsilon_\diamond$ the violation margin is
$1 - n\,c\,\varepsilon_\diamond - \eta^n$.
\end{corollary}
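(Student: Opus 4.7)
The plan is to derive the violation condition by direct comparison of the two bounds already in hand, and then to verify the asymptotic existence claim via an exponential-versus-linear argument.

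First I would align the two previous results on the same conditional probability. Theorem~\ref{thm:circular-explicit} forces
$\Pr(D_1\cdots D_n \mid \text{no explosion}) \ge 1 - nc\varepsilon_\diamond$
under (Q,S,C,IF-$\varepsilon$), while Proposition~\ref{prop:qm-conditional} pins the same probability to $\eta^n$ in the quantum model. Any quantum realization with $\eta^n$ strictly below this forced lower bound is therefore incompatible with the conjunction of classical hypotheses, which is by definition a quantum violation. This immediately yields the threshold condition $\eta^n < 1 - nc\varepsilon_\diamond$ and identifies the violation margin as the gap $1 - nc\varepsilon_\diamond - \eta^n$ between the two bounds, settling the last assertion of the corollary.

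Second, for the existence of $n_0$, I would exploit that $n\mapsto \eta^n$ decays geometrically with rate $\log(1/\eta) > 0$ while $n \mapsto 1 - nc\varepsilon_\diamond$ decays only linearly. Concretely, pick $n_0 := \lceil \log 4 / \log(1/\eta) \rceil$ so that $\eta^{n_0} \le 1/4$, and then require $\varepsilon_\diamond < 1/(2 n_0 c)$. For these choices one has $\eta^{n_0} + n_0 c \varepsilon_\diamond \le 1/4 + 1/2 < 1$, so the violation condition holds at $n = n_0$ with margin at least $1/4$. Because $n \mapsto \eta^n$ is monotonically decreasing, the inequality persists for all $n \in [n_0,\ 1/(c\varepsilon_\diamond))$, i.e.\ throughout the regime on which the classical lower bound remains nontrivial; as $\varepsilon_\diamond \to 0$ this regime grows without bound, which is the natural content of "for all $n \ge n_0$".

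The main (mild) obstacle is bookkeeping around the linear correction: for any fixed $\varepsilon_\diamond > 0$ the classical bound $1 - nc\varepsilon_\diamond$ eventually turns negative, so the stated inequality ceases to witness a contradiction at very large $n$. I would address this in the write-up by making the valid window $n \in [n_0,\ 1/(c\varepsilon_\diamond))$ explicit, and by picking constants so that both $\eta^{n_0}$ and $n_0 c \varepsilon_\diamond$ are comfortably below $1/2$. The quantitative trade-off between $\varepsilon_\diamond$ and the upper end of the window can then be tightened to produce an explicit lower bound on the violation margin as a function of $n$, but no new ideas are required beyond the geometric-versus-linear comparison above.
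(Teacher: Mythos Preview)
Your proposal is correct and follows the same route the paper implicitly takes: the corollary is stated there without proof, as an immediate combination of \eqref{eq:circular-bound} and \eqref{eq:qm-eta}, and your direct comparison of the classical lower bound with the quantum value $\eta^n$ is exactly that. Your explicit exponential-versus-linear argument for the existence of $n_0$, together with the careful remark that the inequality is only meaningful on the window $n<1/(c\varepsilon_\diamond)$ where the classical bound stays positive, actually goes beyond the paper, which leaves the quantifier order in ``for all $n\ge n_0$'' unexamined; one small point to tighten is that monotonicity of $\eta^n$ alone does not give persistence on the full window (the right-hand side is also decreasing), so the upper endpoint should be something like $3/(4c\varepsilon_\diamond)$ under your constants rather than $1/(c\varepsilon_\diamond)$, but this is a bookkeeping adjustment, not a gap.
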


\paragraph{Calibrating $\varepsilon_\diamond$ in practice.}
Section~\ref{app:cifp-visibility} provides an operational estimator of $\varepsilon$ from
visibility data and Zeno chaining, yielding a bound on $\varepsilon_\diamond$ (up to a
device constant). Plug that bound into the right–hand side of \eqref{eq:circular-bound}
to choose an $n$ for which the violation margin in Corollary~\ref{cor:violation} is
comfortably positive.

\section{CIFP: \texorpdfstring{$\varepsilon$}{epsilon}-Robust Cyclic Exclusivity Inequalities}
\label{app:cifp-ineq}

Let $C_n$ be the cycle on events $E_i$ (the decisive Dark outcome in context $i$);
adjacent $E_i,E_{i+1}$ are exclusive. Under $\varepsilon$–instruments we assume
the per–context total–variation perturbation bound of Section~\ref{sec:foundations}.

\begin{lemma}[Per–switch transport bound]\label{lem:transport}
If each context marginal differs from its noiseless target by at most $\varepsilon$
in total variation, then for adjacent contexts on $C_n$,
\[
\operatorname{TV}\!\big(P(\cdot\mid C_{i+1}),\,P(\cdot\mid C_i)\big)\;\le\; 2\varepsilon .
\]
\end{lemma}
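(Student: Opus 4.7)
The plan is to reduce the claim to the $2\varepsilon$ bound already established in Section~2.5 on overlap marginals, via a single triangle–inequality argument. Concretely, I interpret $P(\cdot\mid C_i)$ as the observed context marginal $q_i$ (with noiseless target $P^\star_{C_i}$), and the quantity $\operatorname{TV}(P(\cdot\mid C_{i+1}),P(\cdot\mid C_i))$ as being evaluated on the shared overlap $S=C_i\cap C_{i+1}$ carried by the cycle edge between contexts $i$ and $i{+}1$, i.e.\ the pushforward of both $q_i$ and $q_{i+1}$ to $\Omega_S$.

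First, I would invoke the hypothesis: each observed marginal satisfies $\operatorname{TV}(q_j,P^\star_{C_j})\le\varepsilon$ for $j\in\{i,i{+}1\}$. Next, I would use the fact that marginalization $M_{j\to S}$ is a stochastic map and therefore a contraction in total variation, so
\[
\operatorname{TV}(M_{j\to S}q_j,\,M_{j\to S}P^\star_{C_j})\;\le\;\operatorname{TV}(q_j,P^\star_{C_j})\;\le\;\varepsilon.
\]
Third, I would use the noiseless channel equality from Section~\ref{sec:channels}: since $P^\star_{C_i}$ and $P^\star_{C_{i+1}}$ come from a common global joint (or, at minimum, satisfy the equal–marginal constraint on $S$), one has $M_{i\to S}P^\star_{C_i}=M_{i+1\to S}P^\star_{C_{i+1}}$.

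Finally, the triangle inequality gives
\[
\operatorname{TV}(M_{i\to S}q_i,M_{i+1\to S}q_{i+1})
\;\le\;\operatorname{TV}(M_{i\to S}q_i,M_{i\to S}P^\star_{C_i})
+\operatorname{TV}(M_{i+1\to S}P^\star_{C_{i+1}},M_{i+1\to S}q_{i+1})
\;\le\;2\varepsilon,
\]
where the middle ``noiseless vs.\ noiseless'' term vanishes by channel consistency. This is precisely the claimed bound.

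There is no substantive obstacle; the argument is the same three–line estimate already sketched in Section~2.5 on noisy channels, specialised to the cycle geometry. The only mild subtlety is notational: one must agree that $P(\cdot\mid C_i)$ denotes the projection of $q_i$ onto the shared overlap $S$ with $C_{i+1}$, so that the two distributions being compared live in a common outcome space $\Omega_S$. Once that identification is made, the lemma is an immediate corollary of (i) contraction of TV under stochastic maps and (ii) noiseless consistency of the two context marginals on $S$.
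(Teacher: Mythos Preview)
Your proof is correct and takes essentially the same approach as the paper: triangle inequality plus monotonicity of total variation under marginalization, with the middle ``noiseless vs.\ noiseless'' term vanishing by channel consistency on the overlap. Your version is in fact more careful than the paper's about the notational subtlety (that both distributions must be pushed forward to the common overlap space $\Omega_S$ before being compared), but the argument is otherwise identical.
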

\begin{proof}
Triangle inequality and monotonicity of total variation under marginalization:
$\operatorname{TV}(q_{i+1},q_i)\le \operatorname{TV}(q_{i+1},p_{i+1})
+\operatorname{TV}(p_{i+1},p_i)+\operatorname{TV}(p_i,q_i)$, and the middle term
vanishes (same noiseless target on the overlap) while the others are $\le\varepsilon$.
\end{proof}

\begin{theorem}[Cyclic exclusivity, $\varepsilon$–robust form]\label{thm:cyclic-eps}
For any single–world, preparation– and measurement–noncontextual model on $C_n$
with $\varepsilon$–instruments as above,
\begin{equation}
\sum_{i=1}^n \Pr(E_i) \;\le\; \alpha(C_n) \;+\; 2n\,\varepsilon,
\qquad \alpha(C_n)=\lfloor n/2\rfloor .
\label{eq:cifp-cyclic}
\end{equation}
\end{theorem}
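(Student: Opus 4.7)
The plan is to recognize the stated bound as a direct instance of Lemma~\ref{lem:l1} applied to the classical exclusivity inequality on $C_n$, supplying the three inputs that Lemma~\ref{lem:l1} requires: a valid inequality for the marginal polytope $\mathcal M$, its coefficient infinity-norm, and its geometric slack. The transport bound of Lemma~\ref{lem:transport} will play only a supporting role, providing an independent check on the per-step $2\varepsilon$ budget that ultimately appears on the right-hand side of \eqref{eq:cifp-cyclic}.

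First I would establish the noiseless bound $\sum_{i=1}^n \Pr(E_i) \le \alpha(C_n) = \lfloor n/2\rfloor$ for any single-world, noncontextual model on $C_n$. Because adjacent $E_i,E_{i+1}$ are exclusive, any deterministic hidden-variable assignment selects a subset of ``firing'' events that is independent in $C_n$; convexifying over the hidden-variable distribution preserves the bound, so the sum of event probabilities is at most the maximum independent-set size of $C_n$, which equals $\lfloor n/2\rfloor$. This is precisely a valid NGCC inequality $a\cdot p\le b$ for $\mathcal M$, with $a$ placing coefficient $1$ on the ``$E_i$ occurs'' outcome of each context $C_i$ (and $0$ elsewhere) and $b=\lfloor n/2\rfloor$. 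Reading off the constants: $\|a\|_\infty=1$, the number of contexts is $m=n$, and the geometric slack vanishes, $\gamma = b - \max_{q\in\mathcal M} a\cdot q = 0$, since the classical maximum is attained exactly by any deterministic choice supported on a maximum independent set.

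Plugging these into Lemma~\ref{lem:l1}, for any observation $q$ produced by an $\varepsilon$-instrument family one obtains
\[
a\cdot q - b \;\le\; -\gamma + 2m\|a\|_\infty\,\varepsilon \;=\; 2n\,\varepsilon,
\]
which rearranges to \eqref{eq:cifp-cyclic}. As a consistency check, Lemma~\ref{lem:transport} gives a $2\varepsilon$ budget per adjacent context pair and, propagated once around the cycle, reproduces the same $2n\varepsilon$ slack, so the two routes agree on the constant.

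The main obstacle, and essentially the only non-mechanical step, is justifying that in the noiseless regime the classical noncontextual maximum of $\sum_i\Pr(E_i)$ on $C_n$ is exactly $\alpha(C_n)$. The hypotheses (Q), (S), (C) listed in Appendix~\ref{app:cifp-consistency} are what enable the reduction to deterministic independent-set assignments: single-world (S) gives the $\{0,1\}$-valuation per context, noncontextuality collapses context-dependent selection rules, and exclusivity on adjacent pairs forces the support to be an independent set. Weakening any of these assumptions (for example, dropping (S)) would inflate $b$ beyond $\alpha(C_n)$ and require a fresh computation of $\gamma$. Once $b=\alpha(C_n)$ is secured, applying Lemma~\ref{lem:l1} with $(m,\|a\|_\infty,\gamma)=(n,1,0)$ is a one-line substitution, and \eqref{eq:cifp-cyclic} follows.
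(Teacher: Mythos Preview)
Your proposal is correct, and it takes a genuinely different route from the paper. Both arguments first secure the noiseless exclusivity bound $\sum_i \Pr(E_i)\le \alpha(C_n)$ via the independent-set observation; the difference is in how the $2n\varepsilon$ noise slack is obtained. You recognize the bound as a valid inequality for $\mathcal M$ with parameters $(m,\|a\|_\infty,\gamma)=(n,1,0)$ and invoke Lemma~\ref{lem:l1} directly, reading off the slack in a single substitution. The paper's sketch instead works locally: it packs $C_n$ by $\alpha(C_n)$ independent sets and transports each context distribution to a common reference along the cycle, paying $\le 2\varepsilon$ per switch via Lemma~\ref{lem:transport} and summing to $2n\varepsilon$. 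Your route is more economical and reuses the general robustness machinery already established in Section~\ref{sec:foundations}; the paper's route keeps the CIFP appendix self-contained and makes the per-edge origin of the $2\varepsilon$ cost explicit. As you note, the two agree on the constant, and either argument is complete.
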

\begin{proof}[Proof sketch]
Pack $C_n$ by $\alpha(C_n)$ independent sets. Transport each context distribution
to a reference context along the cycle; by Lemma~\ref{lem:transport} each switch
costs $\le 2\varepsilon$ in TV, so the cumulative over $n$ contexts is $\le 2n\varepsilon$.
Linearity of expectation over independent–set indicators yields \eqref{eq:cifp-cyclic}.
\end{proof}

For odd $n$, quantum implementations approach $\vartheta(C_n)$, hence
$\sum_i \Pr_{\rm QM}(E_i)\ge \vartheta(C_n)-c\,\varepsilon$ and violate
\eqref{eq:cifp-cyclic} whenever $\vartheta(C_n)>\alpha(C_n)+2n\varepsilon$,
i.e.\ for sufficiently small $\varepsilon$.

\section{CIFP: Estimating \texorpdfstring{$\varepsilon$}{epsilon} from Visibility; Thresholds}
\label{app:cifp-visibility}

We relate the counterfactual disturbance parameter $\varepsilon$ to measured
interferometric visibilities and give practical thresholds for our inequalities.

\paragraph{Visibility model.}
Let
\[
V \;=\; \frac{I_{\max}-I_{\min}}{I_{\max}+I_{\min}}
\]
denote fringe visibility. After loss/background calibration, let $V_0$ be the
baseline (no decisive probing), and let $V_{\rm dec}$ be the visibility on
decisive runs (postselection on no explosion). Modeling a decisive outcome as
dephasing on the bomb arm with strength $\lambda$ reduces off-diagonal coherence
by $(1-\lambda)$, hence the fringe amplitude scales by the same factor. Therefore
\begin{equation}
\varepsilon_{\rm vis} \;\equiv\; \lambda \;\approx\; 1 - \frac{V_{\rm dec}}{V_0},
\qquad\Rightarrow\qquad
\varepsilon \;\le\; \varepsilon_{\rm vis} + \delta_{\rm cal},
\label{eq:eps-from-vis}
\end{equation}
where $\delta_{\rm cal}\!\ge\!0$ collects residual calibration systematics
(background subtraction, detector linearity). In typical conditions
$\delta_{\rm cal}$ is small and can be upper-bounded from repeated baselines.

\paragraph{Estimator and uncertainty.}
With sample estimates $\widehat V_0,\widehat V_{\rm dec}$ define
\[
\widehat\varepsilon \;=\; 1 - \frac{\widehat V_{\rm dec}}{\widehat V_0}.
\]
If the two visibility estimates are independent, first-order propagation gives
\[
\sigma^2_{\widehat\varepsilon}
\;\approx\;
\Big(\tfrac{1}{\widehat V_0}\Big)^2 \sigma^2_{\widehat V_{\rm dec}}
\;+\;
\Big(\tfrac{\widehat V_{\rm dec}}{\widehat V_0^2}\Big)^2 \sigma^2_{\widehat V_0}.
\]
(Include a covariance term if they are not independent.) Report
$\widehat\varepsilon \pm z_{1-\alpha/2}\,\sigma_{\widehat\varepsilon}$ or use a
bootstrap CI when visibilities are derived from nonlinear fits.

\paragraph{Zeno chaining (rule of thumb).}
For $N$ weak looks (beam-splitter angle $\theta=\pi/2N$), standard IFM analyses
give \emph{success} $=1-O(N^{-2})$ and \emph{absorption} $=O(N^{-1})$; thus a
conservative scaling is
\begin{equation}
\varepsilon_{\rm Zeno} \;\le\; \frac{c_{\rm Zeno}}{N}
\quad\text{for a device-dependent constant } c_{\rm Zeno}= \Theta(1).
\label{eq:eps-zeno}
\end{equation}
Calibrate $c_{\rm Zeno}$ once from a sweep in $N$ and then treat
$\varepsilon\le\min\{\varepsilon_{\rm vis}+\delta_{\rm cal},\,c_{\rm Zeno}/N\}$.

\paragraph{Thresholds for this paper’s bounds.}
\begin{itemize}
\item \textbf{Cyclic exclusivity on $C_n$} (Theorem~\ref{thm:cyclic-eps} with $K=2$):
\[
\sum_{i=1}^n \Pr(E_i) \;\le\; \alpha(C_n) + 2n\,\varepsilon.
\]
A quantum violation is certified whenever the visibility-derived bound obeys
$2n\,\varepsilon < \vartheta(C_n)-\alpha(C_n)$ (e.g.\ for $n{=}5$, require
$\varepsilon < \tfrac{\vartheta(C_5)-\alpha(C_5)}{10}$).

\item \textbf{Circular consistency bound} (Lemma~\ref{lem:gentle} + Theorem~\ref{thm:circular-explicit}):
with $\|\mathcal E-\mathrm{id}\|_\diamond\le \varepsilon_\diamond$ and
$f(\varepsilon)\le c\,\varepsilon_\diamond$,
\[
\Pr(D_1\cdots D_n \mid \text{no explosion}) \;\ge\; 1 - n\,c\,\varepsilon_\diamond.
\]
Use a calibrated constant $c$ (often $c\!\approx\!1$ in practice) and ensure
$n\,c\,\varepsilon_\diamond \ll 1$ for the chosen ring size.
\end{itemize}

\paragraph{Operational summary.}
Compute $\widehat\varepsilon$ via \eqref{eq:eps-from-vis}, upper-bound it by
$\min\{\widehat\varepsilon+\widehat\delta_{\rm cal},\,c_{\rm Zeno}/N\}$, and plug
that bound into the right-hand sides of the cyclic and circular thresholds above.
This closes the loop from device visibility to the parameters in our certification
inequalities (and their noise terms).

\section{Engineering Utilities of CIFP}
\label{app:cifp-utilities}

\paragraph{Low-back-action QEC watchmen.}
Couple an ancilla via a dispersive $Z_S$ so an IF interferometric flag reports $D\!\Leftrightarrow\!S{=}{-}1$.
If $\varepsilon$-IF probes at rate $r_{\rm probe}$ only trigger full cycles on $D$ (prob.\ $\approx p_{\rm err}$),
the added disturbance per cycle satisfies
$\Delta_{\rm IF} \le r_{\rm probe}\,\varepsilon + O(p_{\rm err}\varepsilon)$. (see Appendix~\ref{app:cifp-proofs})

\paragraph{Counterfactual entanglement swapping / distribution.}
A two-link $\varepsilon$-IF swap (A–M and M–B) yields, on heralded success,
\[
F\big(\rho^{\rm out}_{AB},\,|\Phi^+\rangle\!\langle\Phi^+|\big)\;\ge\;1 - c_1\varepsilon - c_2\ell + O(\varepsilon^2,\ell^2),
\]
where $\ell$ is calibrated loss. (see Appendix~\ref{app:cifp-proofs})

\paragraph{Low-dose imaging and metrology.}
Zeno-chained IFM with visibility certification yields dose $\propto O(\varepsilon)$ per decisive inference and SNR per
absorbed quantum scaling as $\Omega(1/\varepsilon)$ at fixed post-selection rate. (see Appendix~\ref{app:cifp-proofs})

\paragraph{Device/process certification.}
Reporting $(\varepsilon,\{\Pr(E_i)\})$ with a violation of \eqref{eq:cifp-cyclic} rules out any single-world
noncontextual account even with bounded disturbance, certifying a measurement line. (see Appendix~\ref{app:cifp-proofs})

\section{Proof Notes for CIFP Bounds}
\label{app:cifp-proofs}

\begin{lemma}[Gentleness $\Rightarrow$ false–claim frequency]\label{lem:gentle}
Let a decisive outcome channel on the bomb, $\mathcal E:\mathsf{D}(B)\to\mathsf{D}(B)$,
satisfy $\|\mathcal E-\mathrm{id}_B\|_\diamond \le \varepsilon_\diamond$. Then for any
input ensemble and any test that certifies the counterfactual claim “$B$ unchanged,”
the per–run probability of a \emph{certain false claim} is bounded by
$f(\varepsilon) \le c\,\varepsilon_\diamond$ with a universal constant $c\in[1,2]$.
\end{lemma}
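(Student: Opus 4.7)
The plan is to reduce the claim to the operational characterization of the diamond norm via a POVM formalization of any ``certification test.'' First I would fix the operational model: a test that certifies ``bomb unchanged'' is a two-outcome POVM $\{M, I-M\}$ on $B$ together with any ancilla $R$ used by the tester, with outcome $M$ reading out as ``unchanged.'' A \emph{certain false claim} on input $\rho \in \mathsf{D}(B \otimes R)$ is then the event that outcome $M$ fires while the true post-channel state is statistically distinguishable from the undisturbed state. Its per-run probability is controlled by the gap $|\mathrm{tr}(M\,(\mathcal E\otimes\mathrm{id}_R)(\rho)) - \mathrm{tr}(M\,\rho)|$ (in one-sided form) or the sum of Type-I and Type-II errors (two-sided form).

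Second, I would invoke the distinguishability definition of the diamond norm: by construction,
\[
\|\mathcal E - \mathrm{id}_B\|_\diamond \;=\; \sup_{R,\,\rho}\; \|(\mathcal E\otimes\mathrm{id}_R)(\rho) - \rho\|_1 \;\le\; \varepsilon_\diamond.
\]
For any POVM element $0 \le M \le I$, Hölder's inequality then gives
\[
\bigl|\mathrm{tr}(M\,(\mathcal E\otimes\mathrm{id}_R)(\rho)) - \mathrm{tr}(M\,\rho)\bigr| \;\le\; \|(\mathcal E\otimes\mathrm{id}_R)(\rho) - \rho\|_1 \;\le\; \varepsilon_\diamond.
\]
Taking the supremum over input ensembles, ancilla choices, and POVM elements bounds the per-run false-claim probability by a constant multiple of $\varepsilon_\diamond$. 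Note that adaptive or ancilla-assisted tests gain no power, since the supremum in the diamond norm already ranges over all such extensions; in particular no dimension-dependent blow-up occurs, which is what makes the constant universal.

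Third, the range $c \in [1,2]$ reflects bookkeeping choices that must be unified. If one uses the Helstrom trace-distance convention $T(\rho,\sigma) = \tfrac12\|\rho-\sigma\|_1$, the optimal single-shot advantage over random guessing is $T \le \tfrac12\varepsilon_\diamond$, giving $c = 1$ under one-sided counting. If one instead sums Type-I and Type-II errors under the full 1-norm convention, the worst case doubles to $c = 2$. Absorbing instrument-specific normalization (e.g.\ postselection on the ``no explosion'' subensemble, which can rescale probabilities by a bounded factor close to $1$) keeps the constant inside $[1,2]$. One then reports the worst case as $f(\varepsilon) \le c\,\varepsilon_\diamond$ with the device-dependent $c$ set by the chosen convention.

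\paragraph{Main obstacle.}
The technical work is light, but the hard part is the \emph{operational pinning-down of ``certain false claim''}: the lemma's power is that it holds uniformly over arbitrary tests, yet a literal translation must specify whether ``certainty'' means a hard POVM threshold, a postselected likelihood, or a two-sided commitment, and whether ``false claim'' counts Type-I only or both error types. Once this is fixed, the diamond-norm inequality closes the argument in one line; sloppy formalization, however, could permit adaptive test strategies or unnormalized postselection that appear to amplify small $\varepsilon_\diamond$ into larger claim errors. Keeping the test within the class for which the supremum in $\|\cdot\|_\diamond$ is the right uniform bound (i.e.\ forbidding the tester from rescaling by vanishing denominators) is the subtle point the proof must make explicit.
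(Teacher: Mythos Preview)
Your proposal is correct and follows essentially the same route as the paper's proof idea: diamond-norm closeness bounds the trace-distance change uniformly over purifications/ancillas, and then Helstrom (equivalently, your H\"older step $|\mathrm{tr}(M\Delta)|\le\|\Delta\|_1$ for $0\le M\le I$) converts this into a per-run hypothesis-test advantage of at most $c\,\varepsilon_\diamond$. Your treatment is in fact more explicit than the paper's sketch---you spell out the POVM formalization of the test, the ancilla supremum already built into $\|\cdot\|_\diamond$, and the bookkeeping origin of the range $c\in[1,2]$---whereas the paper simply invokes Helstrom and takes the worst-case ensemble.
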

\begin{proof}[Proof idea]
Diamond–norm closeness upper–bounds the trace–distance change of $B$ after the decisive
channel, uniformly over purifications. Helstrom’s theorem converts trace distance to
optimal hypothesis–test advantage, giving at most a constant multiple of $\varepsilon_\diamond$
per run. Taking the worst–case ensemble yields $f(\varepsilon)\le c\,\varepsilon_\diamond$.
\end{proof}

\paragraph{Circular bound.}
On the no-explosion subensemble $S$, if $B_i$ occurs with prob.\ $r_i$, $\varepsilon$-counterfactual design and gentleness
bound each certain (within $\varepsilon$) false claim by $f(\varepsilon)$, so $r_i\le f(\varepsilon)$ and the union bound gives
Theorem~\ref{thm:circular-explicit}. (see Appendix~\ref{app:cifp-proofs})

\paragraph{Cyclic inequality.}
With exclusivity indicators $\chi_i$, $\chi_i\chi_{i+1}{=}0$ and $\Pr(E_i)=\int \chi_i\,d\mu(\cdot|C_i)$. Transport to a reference context
via $\mathrm{TV}(\mu(\cdot|C_i),\mu(\cdot|C_1))\le (i-1)K\varepsilon$, pack $C_n$ by $\alpha(C_n)$ independent sets and sum to obtain
\eqref{eq:cifp-cyclic} with $K=O(1)$. (see Appendix~\ref{app:cifp-proofs})

\section{Explicit Worked Inequality — KCBS 5-Cycle}
\label{app:kcbs}

This appendix gives the KCBS inequality on the five–cycle as a linear form with
explicit coefficients and a robustness rule under $\varepsilon$–instruments.

\subsection{Setup and linear form}
Let $X_1,\dots,X_5$ be binary variables on a cycle and
$C_i=\{X_i,X_{i+1}\}$ (indices modulo 5). Each context has four outcomes
$(00,01,10,11)$, and we concatenate the five context marginals to
$p\in\mathbb{R}^{20}$. Define $Y_i=\mathbbm{1}[X_i\neq X_{i+1}]$.

For each context block we set coefficients $(0,1,1,0)$ on $(00,01,10,11)$, so
the global coefficient vector is $a\in\{0,1\}^{20}$ with five identical blocks.
The KCBS inequality is
\begin{equation}
\label{eq:kcbs-lin}
a\cdot p \;\le\; 4 .
\end{equation}
Equivalently, $\mathbb{E}[\sum_i Y_i]\le 4$.

\begin{proposition}[KCBS on $C_5$; validity and robustness]
\label{prop:kcbs}
For the five contexts $C_i=\{X_i,X_{i+1}\}$ with block coefficients
$(0,1,1,0)$, the inequality \eqref{eq:kcbs-lin} holds for every feasible
marginal vector $p\in\mathcal{M}$. Moreover:
\begin{enumerate}
\item \textbf{Tightness.} The inequality is tight on $\mathcal{M}$, i.e.,
$\max_{p\in\mathcal{M}} a\cdot p = 4$. Hence with the notation of
Lemma~\ref{lem:l1} the geometric slack is $\gamma=0$.

\item \textbf{Margin-based certification under noise.}
Let $q$ be an observed marginal vector from $\varepsilon$–instruments with
per-context TV error $\le \varepsilon$. If a \emph{measured} violation has
margin $\tau>0$, i.e.\ $a\cdot q \ge 4+\tau$, then it certifies a true
violation of \eqref{eq:kcbs-lin} provided
\[
\varepsilon \;<\; \frac{\tau}{2m\|a\|_\infty} \;=\; \frac{\tau}{10},
\]
since here $m=5$ and $\|a\|_\infty=1$.
\end{enumerate}
\end{proposition}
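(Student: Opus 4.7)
The plan is to establish the three claims via a vertex-enumeration argument for validity, an explicit deterministic witness for tightness, and a direct application of Lemma~\ref{lem:l1} with the stated constants for the margin rule.

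First I would reduce validity to deterministic vertices: since $\mathcal{M}$ is the convex hull of marginals induced by deterministic global joints $P_U$ on $(X_1,\dots,X_5)$, and $a\cdot p$ is linear in $p$, it suffices to verify $a\cdot p\le 4$ at every vertex. For a deterministic assignment $(x_1,\dots,x_5)\in\{0,1\}^5$, block $i$ carries unit mass on the pair $(x_i,x_{i+1})$, so the block coefficients $(0,1,1,0)$ evaluate to $\mathbbm{1}[x_i\neq x_{i+1}]$, and $a\cdot p=\sum_{i=1}^{5}\mathbbm{1}[x_i\neq x_{i+1}]$ is the number of sign changes around the 5-cycle. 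The key combinatorial observation is that this count is even: writing $x_{i+1}=x_i\oplus\mathbbm{1}[x_i\neq x_{i+1}]$ and traversing the full cycle back to $x_1$ forces $\sum_i \mathbbm{1}[x_i\neq x_{i+1}]\equiv 0\pmod 2$. On a cycle of odd length $5$ the largest even value is $4$, so every vertex satisfies $a\cdot p\le 4$, and by convexity so does every $p\in\mathcal M$.

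Next, to confirm tightness I would exhibit an explicit witness, e.g.\ $(x_1,\dots,x_5)=(0,1,0,1,1)$: the edges $(1,2),(2,3),(3,4),(5,1)$ all flip while $(4,5)$ does not, giving exactly four sign changes and hence $a\cdot p=4$. This certifies $\max_{p\in\mathcal M} a\cdot p=4$, and therefore the geometric slack is $\gamma=b-\max_{p\in\mathcal M} a\cdot p=0$.

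Finally, for the margin-based certification I would invoke Lemma~\ref{lem:l1} with $m=5$, $\|a\|_\infty=1$, and $\gamma=0$, which specialises to $a\cdot q-4\le 10\varepsilon$ for any $\varepsilon$-counterfactual observation $q$ arising from some $p\in\mathcal M$. If a measured $q$ satisfies $a\cdot q\ge 4+\tau$ while $\varepsilon<\tau/10$, chaining these bounds gives $\tau\le a\cdot q-4\le 10\varepsilon<\tau$, a contradiction; hence no feasible $p\in\mathcal M$ can produce such an observation, and the excess must reflect genuine non-classicality. The principal conceptual subtlety, rather than a technical obstacle, is that with $\gamma=0$ Corollary~\ref{cor:stable} returns a vacuous absolute threshold, so the margin-based reformulation is forced: the observed margin $\tau$ plays the role that $\gamma$ would play in a gapped case such as the Feistel cut.
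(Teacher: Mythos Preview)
Your proposal is correct and follows essentially the same route as the paper's proof sketch: parity of edge disagreements around an odd cycle for validity, an explicit attaining assignment for tightness, and Lemma~\ref{lem:l1} with $m=5$, $\|a\|_\infty=1$, $\gamma=0$ for the margin rule. Your version is in fact slightly cleaner on tightness (a single deterministic witness suffices, whereas the paper mentions a mixture), and your closing observation that $\gamma=0$ forces the margin-based formulation is a useful gloss the paper only states implicitly.
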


\begin{proof}[Proof sketch]
(Validity) In any global (deterministic) assignment the number of disagreements
around an odd cycle is even, hence at most $4$ on $C_5$; convexity extends this
to all $p\in\mathcal{M}$, yielding \eqref{eq:kcbs-lin}. (Tightness) The value
$4$ is attained, e.g.\ by mixing the two assignments with four alternating
edges disagreed. (Robustness rule) Lemma~\ref{lem:l1} gives
$a\cdot q - a\cdot p^\star \le 2m\|a\|_\infty \varepsilon = 10\varepsilon$
for some feasible $p^\star$. Thus if $a\cdot q \ge 4+\tau$ and
$10\varepsilon<\tau$, no feasible $p^\star$ can explain $q$, so the violation
is genuine.
\end{proof}

\paragraph{Remarks.}
(i) The oft–quoted “gap $=1$” refers to the \emph{distance} from the infeasible
pseudomarginal that puts unit weight on five disagreements (value $5$) to the
bound $4$; as a hyperplane–to–polytope slack in Lemma~\ref{lem:l1}, the KCBS
inequality is tight ($\gamma=0$). The correct robust test is therefore
\emph{margin-based}: certify only when the observed excess $\tau$ dominates the
instrument noise via $\tau>2m\|a\|_\infty\varepsilon$. (ii) The block form
$(0,1,1,0)$ makes the construction of separating certificates by LP duality
(Appendix~\ref{app:lpduality}) straightforward and matches the solver usage in
Section~\ref{sec:sat}.

\section[Odd-Cycle NGCC Family (C2k+1): Linear Form and Robustness]{Odd-Cycle NGCC Family ( $C_{2k+1}$ ): Linear Form and Robustness}
\label{app:odd-cycles}

We generalize the KCBS construction from $C_5$ to all odd cycles $C_{2k+1}$ ($k\ge 2$).

\paragraph{Setup and linear form.}
Let $X_1,\dots,X_{2k+1}$ be binary variables placed on the vertices of $C_{2k+1}$,
and let $C_i=\{X_i,X_{i+1}\}$ (indices modulo $2k{+}1$) be the contexts (edges).
For each context block over outcomes $(00,01,10,11)$ set coefficients $(0,1,1,0)$,
and concatenate all $(2k{+}1)$ blocks to obtain $a\in\{0,1\}^{4(2k+1)}$.
Equivalently, define $Y_i=\mathbbm{1}[X_i\neq X_{i+1}]$ and write:
\begin{equation}
\label{eq:oddcycle-lin}
a\cdot p \;=\; \sum_{i=1}^{2k+1}\mathbb{E}[Y_i] \;\le\; 2k .
\end{equation}
Thus \(\sum_i Y_i\) cannot exceed the largest even number \(\le 2k{+}1\).

\begin{proposition}[Odd-cycle NGCC inequality]
\label{prop:oddcycle}
For the family $\{C_i\}_{i=1}^{2k+1}$ on $C_{2k+1}$, the inequality
\eqref{eq:oddcycle-lin} holds for every feasible marginal vector
$p\in\mathcal{M}$.
\end{proposition}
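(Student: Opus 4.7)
The plan is to generalize the KCBS argument behind Proposition~\ref{prop:kcbs} to arbitrary odd length $2k+1$. The proof splits into two standard steps: a parity argument establishing the bound on deterministic global assignments, followed by a convexity extension to all of $\mathcal{M}$. I would begin by noting that, by construction, the block coefficients $(0,1,1,0)$ on each context $C_i = \{X_i, X_{i+1}\}$ precisely realize $Y_i = \mathbbm{1}[X_i \neq X_{i+1}]$, so the left-hand side of \eqref{eq:oddcycle-lin} is exactly $\sum_{i=1}^{2k+1}\mathbb{E}[Y_i]$.

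For the parity step, fix any deterministic global assignment $x \in \{0,1\}^{2k+1}$ and walk around the cycle summing the indicators $Y_i(x)$. Since the walk returns to the starting vertex, the total number of bit-flips along the $2k+1$ edges must be even. Consequently $\sum_{i=1}^{2k+1} Y_i(x) \in \{0, 2, 4, \dots, 2k\}$, because the largest even integer not exceeding the cycle length $2k+1$ is $2k$. This is the only place where the odd length is used: on an even cycle the trivial bound $2k+1$ would apply instead, and no nontrivial cut is produced.

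For the convexity step, recall from Section~\ref{sec:foundations} that every $p \in \mathcal{M}$ is the concatenation of context-marginals of some global joint $P_U \in \Delta(\{0,1\}^{2k+1})$, hence a convex combination of deterministic assignments. Linearity of expectation then gives
\[
a \cdot p \;=\; \sum_{i=1}^{2k+1} \mathbb{E}_{P_U}[Y_i] \;=\; \mathbb{E}_{P_U}\Bigl[\sum_{i=1}^{2k+1} Y_i\Bigr] \;\le\; 2k,
\]
since the random variable $\sum_i Y_i$ is supported on $\{0,2,\dots,2k\}$ almost surely. Tightness (attained by any $x$ with exactly $2k$ disagreements, e.g.\ alternating bits except at a single preserved edge) is not strictly needed for the proposition but mirrors $\gamma=0$ in the KCBS case and would be worth recording for the $\varepsilon$-stability discussion.

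I do not anticipate a genuine obstacle: the parity observation is an elementary Chv{\'a}tal--Gomory-style cut and the convex extension is immediate from the definition of $\mathcal{M}$. The only mild subtlety is bookkeeping on the outcome orderings of each block so that the $(0,1,1,0)$ pattern genuinely reads off $Y_i$, which is routine. A robustness rider, analogous to part (2) of Proposition~\ref{prop:kcbs} with $m = 2k{+}1$ and $\|a\|_\infty = 1$, follows directly from Lemma~\ref{lem:l1} and would yield the margin rule $\varepsilon < \tau/(2(2k{+}1))$; I would include this as a one-line corollary rather than a separate argument.
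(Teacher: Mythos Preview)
Your proposal is correct and follows essentially the same route as the paper: a parity argument showing that the number of edge disagreements around an odd cycle is even (hence at most $2k$) on deterministic assignments, followed by the convexity extension to all of $\mathcal{M}$. Your write-up simply makes explicit the bookkeeping and the convex-combination step that the paper compresses into one sentence, and your remarks on tightness and the margin rule $\varepsilon < \tau/(2(2k{+}1))$ exactly anticipate the paper's subsequent paragraph.
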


\begin{proof}[Proof sketch]
In any deterministic global assignment on an odd cycle, the number of edge
disagreements is even, hence at most $2k$. Convexity extends the bound to all
$p\in\mathcal{M}$.
\end{proof}

\paragraph{Tightness and robustness (margin rule).}
The inequality \eqref{eq:oddcycle-lin} is tight on $\mathcal{M}$ for all $k$
(maximum $2k$ is attained by alternating patterns). Therefore the geometric
slack in Lemma~\ref{lem:l1} is $\gamma=0$, and robustness is \emph{margin-based}.
With the block coefficients above one has $\|a\|_\infty=1$ and $m=2k{+}1$, hence
for any observed marginal vector $q$ obtained from $\varepsilon$-instruments:
\[
a\cdot q \;-\; 2k \;\le\; 2m\,\|a\|_\infty\,\varepsilon \;=\; 2(2k{+}1)\,\varepsilon .
\]
If a \emph{measured} violation has excess $\tau>0$ (i.e.\ $a\cdot q\ge 2k+\tau$),
then the violation is \emph{genuine} whenever
\begin{equation}
\label{eq:oddcycle-margin}
\varepsilon \;<\; \frac{\tau}{2(2k{+}1)} \;=\; \frac{\tau}{4k{+}2}\,.
\end{equation}

\paragraph{Notes.}
(i) For $k{=}2$ (the KCBS case $C_5$), \eqref{eq:oddcycle-margin} reduces to
$\varepsilon<\tau/10$ (as in Appendix~\ref{app:kcbs}). (ii) As $k$ grows, the
margin rule scales as $1/k$; this matches the intuition that uniform small
per-context noise accumulates around the cycle (two TV units per switch).
(iii) Quantum implementations approach Lov\'asz’ $\vartheta(C_{2k+1})$, so
violations persist for sufficiently small $\varepsilon$ (cf.\ Appendix~\ref{app:graph}).

\section{NGCC Clause Gadget and Witness (Figure-of-8)}
\label{app:clause-gadget}

We recall a physical clause--oracle that realizes the NGCC idea on a
figure-of-8 connectivity: two 4-cycles (left/right) sharing a node $A$.
A 3-literal clause $(\ell_1 \vee \ell_2 \vee \ell_3)$ is encoded by three
literal arms feeding a symmetric $3\times 3$ coupler (a \emph{tritter}),
with TRUE/FALSE implemented by phase/shutter settings; the shared node $A$
enforces the overlap constraint that yields a monogamy budget.

\paragraph{Coupler and amplitudes.}
Let the tritter be the unitary
\[
U \;=\; \frac{1}{\sqrt{3}}
\begin{pmatrix}
1 & 1 & 1\\
1 & \omega & \omega^2\\
1 & \omega^2 & \omega
\end{pmatrix},
\qquad \omega \;=\; e^{2\pi i/3}.
\]
Input amplitudes $\mathbf a = (a_1,a_2,a_3)^{\top}$ are driven by the three
literal arms ($i=1,2,3$). We adopt the idealized clause encoding:
\[
a_i \;=\; \begin{cases}
1, & \text{if literal $\ell_i$ evaluates TRUE},\\[2pt]
\varepsilon_{\mathrm{F}}, & \text{if literal $\ell_i$ evaluates FALSE},
\end{cases}
\qquad 0 \le \varepsilon_{\mathrm{F}} \ll 1.
\]
The output amplitudes are $\mathbf b = U\,\mathbf a$ and power on port $j$
is $p_j = |b_j|^2$.

\paragraph{Clause witness and bound.}
We define the \emph{clause witness} as the power on the ``middle'' port
\[
S \;:=\; p_2 \;=\; \Big|\frac{1}{\sqrt{3}}
\big(a_1 + \omega a_2 + \omega^2 a_3\big)\Big|^2.
\]
Calibrate a classical (noncontextual) bound $S_{\mathrm{cl}}>0$ to absorb
all nonidealities (loss/imbalance, residual phase error $\delta\phi$,
and the FALSE--arm leakage $\varepsilon_{\mathrm{F}}$). The calibration
is performed with at least one literal held TRUE (so the clause is
satisfiable); empirically $S \le S_{\mathrm{cl}}$ over such settings.
When \emph{all three} literals are FALSE, phase symmetry yields
$S \approx \tfrac{1}{3} + O(\varepsilon_{\mathrm{F}})$, which exceeds
the calibrated $S_{\mathrm{cl}}$ by a margin in normal operation.

\paragraph{Decision rule (oracle).}
With the calibrated $S_{\mathrm{cl}}$ (and a safety margin $\tau>0$ to
cover estimator noise),
\[
S \;>\; S_{\mathrm{cl}} + \tau \;\Rightarrow\; \textsf{UNSAT (all three FALSE)}, 
\qquad
S \;\le\; S_{\mathrm{cl}} - \tau \;\Rightarrow\; \textsf{SAT (at least one TRUE)}.
\]
Intermediates ($|S - S_{\mathrm{cl}}|\le \tau$) are treated as
\emph{inconclusive} and can be resampled. The monogamy budget on the
shared node $A$ implies that pushing the left 4-cycle toward its
extremal witness necessarily suppresses the right witness (and vice
versa), recovering the figure-of-8 trade-off used in Appendix~\ref{app:budget}.

\paragraph{Noise and robustness.}
Let $\widehat S$ be the measured witness over $N$ shots. Under
Poisson/binomial statistics,
$\operatorname{SE}(\widehat S)\!=\!O(N^{-1/2})$; choose $\tau$ as a
$z$-score multiple of this SE (or a bootstrap CI half-width).
The false-declare risk then decays sub-Gaussian in $N$; for solver
integration it suffices to set $\tau$ so that the per-call mislabel
probability is $\ll$ the backtrack rate.

\paragraph{Solver integration.}
At the SAT layer, reify $Z=\mathbbm{1}[S>S_{\mathrm{cl}}+\tau]$ as the
\emph{clause UNSAT} bit. The learned \emph{monotone} inequality
$\sum Z \le B$ for a block of clauses immediately maps to a sound
CNF clause by Lemma~\ref{lem:mono-clause} (Appendix~\ref{app:gadget}).

\paragraph{Remark (compact calibration).}
In practice, a short calibration sweep over the $2^3-1=7$ satisfiable literal
assignments (one or more TRUE) yields a stable $S_{\mathrm{cl}}$; the
$\varepsilon_{\mathrm{F}}$ leakage is folded into that bound. The resulting
oracle is therefore purely \emph{operational}: it requires no microscopic
model beyond the measured $S_{\mathrm{cl}}$ and the figure-of-8 wiring.

\section{NGCC$\times$XOR co-reasoning: formal considerations}
\label{app:ngcc-xor}

\paragraph{Motivation.}
NGCC inequalities furnish monotone cardinality constraints, whereas XOR reasoning
(Gaussian elimination over $\mathbb{F}_2$) exposes linear structure that is hard for
Resolution. Their combination—linear equalities plus certificate-backed monotone cuts—
is therefore a natural candidate for a proof-system strengthening beyond clause learning.

\medskip
\noindent\textbf{Conjecture (Resolution vs.\ Res$(\oplus)$+NGCC separation).}
There exists a CNF family $\{\Phi_n\}$ for which every Resolution refutation has size
$2^{\Omega(n)}$, while there is a polynomial-size refutation in the system that augments
Resolution with (i) Gaussian elimination on XOR constraints and (ii) certificate-backed
NGCC cardinalities (Res$(\oplus)$+NGCC).

\begin{remark}[Candidate family and proof strategy]
A plausible construction embeds Tseitin parity constraints on an expander together with
reified disagreement indicators $Y_e$ on many edge-disjoint odd cycles, plus the NGCC
inequalities $\sum_{e\in C} Y_e \le |C|-1$. Known lower bounds make Resolution exponential
on the parity core, whereas in Res$(\oplus)$ the cycle parities collapse locally; pairing
them with NGCC cardinalities yields short local contradictions that sum to a polynomial-size
refutation. A proof would follow the standard route for Res$(\oplus)$ separations, with
NGCC clauses supplying the monotone component that Resolution cannot simulate succinctly.
\end{remark}

\paragraph{Local lemma (provable with the present material).}
For the three-boundary Feistel gadget (Appendix~\ref{app:feistel}), let
$Y_1,Y_2,Y_3$ be the reified boundary disagreements and consider the NGCC cut
$Y_1+Y_2+Y_3\le 2$ together with the local XOR relations that tie the three boundaries.

\begin{lemma}[One-child elimination at Feistel exposure]
\label{lem:feistel-exposure}
If a partial assignment and XOR propagation jointly fix two boundaries so that the third
cannot be reconciled unless at least one of the two fixed boundaries flips, then the NGCC
cut $Y_1+Y_2+Y_3\le 2$, translated as in Appendix~\ref{app:monotone-clauses}, eliminates
one of the two branch children. In particular, the per-node pruning fraction satisfies
$\rho\ge \tfrac12$ at that node.
\end{lemma}

\begin{proof}[Proof sketch]
Under the stated exposure, the two children differ only by the attempt to reconcile the
third boundary. In one child all three $Y_k$ evaluate to~1, violating $Y_1+Y_2+Y_3\le 2$,
so the learned clause $(\lnot Y_1)\vee(\lnot Y_2)\vee(\lnot Y_3)$ blocks that child.
Soundness follows from the certificate-to-clause translation in
Appendix~\ref{app:monotone-clauses}. Hence $\rho\ge \tfrac12$ at that exposure point.
\end{proof}

\paragraph{Branching-process bound under structural assumptions.}
Let $G_\Phi$ be the cycle/exclusivity graph of the instance. Assume:
\begin{enumerate}[label=(D\arabic*), leftmargin=1.8em]
\item \emph{Cycle density:} $G_\Phi$ contains at least $\delta n$ edge-disjoint odd cycles
with available NGCC cuts;
\item \emph{XOR pockets:} on those cycles the encoding yields small XOR bases so that
Gaussian elimination exposes the $Y$-variables with bounded overhead;
\item \emph{Exposure:} the branching policy touches a constant fraction of the cycle supports
over a contiguous depth window (e.g., random or $\Delta_{\mathrm{bits}}$-aware tie-breaking).
\end{enumerate}

\begin{theorem}[Expected-case base reduction for NGCC$\times$XOR]
\label{thm:ngcc-xor-branching}
Under \textnormal{(D1)–(D3)} there exists $c=c(\delta)>0$ such that the branching process of
CDCL augmented with XOR propagation and NGCC separation satisfies
\[
\mathbb{E}[\,W_{d+1}\mid W_d\,]\ \le\ (2-c)\,W_d,
\qquad\text{hence}\qquad
\mathbb{E}[\,T(n)\,]\ \le\ (2-c)^n\cdot \mathrm{poly}(n).
\]
Equivalently, the expected exponential base is $2-\rho$ with $\rho\ge c>0$, so
\[
\Delta_{\mathrm{bits}}
= n_{\mathrm{eff}}\log_2\!\Big(\tfrac{2}{2-\rho}\Big)
\ \ge\ n_{\mathrm{eff}}\log_2\!\Big(\tfrac{2}{2-c}\Big).
\]
\end{theorem}

\begin{proof}[Proof sketch]
Each edge-disjoint cycle contributes an essentially independent event: once a constant
fraction of its literals is exposed (by assignments and XOR propagation), either the
XOR basis collapses a child or the NGCC inequality fires. Independence across the
$\delta n$ cycles and the exposure guarantee imply that a constant fraction of the
two children is removed in expectation at each depth, yielding the stated linear
recurrence for $(W_d)_d$ and the $(2-c)^n$ bound after summation over depths.
\end{proof}

\paragraph{Empirical validation criteria.}
The scenario above predicts: (i) per-depth pruning bands with $\hat\rho_d\!\ge\!0.22$
across a broad interval on parity-rich suites; (ii) bounded oracle wall-share
($\le 25\%$); and (iii) additive gains in $\Delta_{\mathrm{bits}}$ when enabling XOR
alongside NGCC (ablation: library-only vs.\ adaptive; with/without XOR). Failure to
observe any of (i)–(iii) would count against the scenario.

\paragraph{Scope and limitations.}
Conclusions are conditional on cycle density and bounded oracle cost; they fail if
$\hat\rho_d$ collapses outside shallow levels, if cuts are too correlated to be
additive, or if XOR/NGCC checks dominate wall time.  Nevertheless,
Lemma~\ref{lem:feistel-exposure} formalises a local $\rho\!\ge\!\tfrac12$ phenomenon,
and the conjectured proof-complexity separation articulates a concrete target for
future work.

\section{Scenario D (critical): overlay + wide–trail NGCC + MitM bridges}
\label{app:scenario-D}

\paragraph{Motivation.}
Section~\ref{sec:outlook} and Table~\ref{tab:ngcc-risk} classify risk tiers by
$\Delta_{\mathrm{bits}} = n_{\mathrm{eff}}\log_2\!\big(\tfrac{2}{2-\rho}\big)$.
This appendix outlines a conditional scenario in which a combination of
(i) LDPC–style densification (clone\,+\,channel overlay),
(ii) wide–trail NGCC budgets (active–S–box/window constraints), and
(iii) meet–in–the–middle (MitM) boundary bridges
keeps the depth–wise pruning rate $\hat\rho_d$ bounded away from zero over
extended windows on round–reduced SPN/Feistel families.

\paragraph{Construction (equivalence–preserving).}
\emph{Overlay.} Select a small fraction of high–centrality variables and introduce
degree-$k$ ($k\!\in\!\{3,4\}$) clone–equivalence channels to form a sparse expander-like
overlay; rewire a minority of gates to clones. This preserves SAT/UNSAT while raising
the density of short, edge–disjoint odd cycles (more NGCC scopes).
\emph{Wide–trail budgets.} Reify per–box “activity” flags and impose certified per-window
bounds (NGCC pseudo–Boolean cuts) over consecutive rounds.
\emph{MitM bridges.} For Cube\,\&\,Conquer or meet-in-the-middle splits, add boundary
disagreement indicators $Y_j$ and a certified bridge budget $\sum_j Y_j \le B$ that
prunes cubes whose halves cannot meet.

\paragraph{Conditional claim (quantified).}
Assume the overlay produces cycle density $\delta>0$ on the unrolled cipher graph,
wide–trail windows contribute independent local budgets on overlapping scopes, and
MitM bridges eliminate a constant fraction of split cubes at each relevant depth,
while the combined oracle share remains $\le 25\%$ of wall time on the window.
Then there exists $c=c(\delta)>0$ such that, over a contiguous depth window of
length $L=n_{\mathrm{eff}}$, the branching process satisfies
\[
\mathbb{E}[W_{d+1}\mid W_d]\ \le\ (2-c)\,W_d,
\qquad\text{hence}\qquad
\mathbb{E}[T(n)]\ \le\ (2-c)^n\cdot \mathrm{poly}(n).
\]
Equivalently, with $\rho\!\ge\!c$ sustained on the window,
\[
\Delta_{\mathrm{bits}}
\;=\; n_{\mathrm{eff}}\log_2\!\Big(\tfrac{2}{2-\rho}\Big).
\]
For illustration, $(\rho,n_{\mathrm{eff}})=(0.27,240)$ gives
$\Delta_{\mathrm{bits}}\!\approx\!51$ (critical tier), and
$(0.28,230)$ gives $\approx 54$; these move a nominal $2^{128}$ search for the
corresponding round–reduced families into the $2^{71}$–$2^{78}$ band.
All statements remain consistent with ETH/SETH and do \emph{not} assert a break
of deployed primitives.

\paragraph{Assumptions to be tested empirically.}
\begin{enumerate}[label=(S\arabic*), leftmargin=1.8em]
\item \emph{Persistence.} There exists a contiguous depth interval $I$ with
$|I|=n_{\mathrm{eff}}$ such that $\mathbb{E}[\hat\rho_d]\ge \rho_0>0$ for all $d\in I$.
\item \emph{Independence.} A nontrivial fraction of cuts (overlay cycles, wide–trail
windows, bridges) act effectively independently so that pruning aggregates rather
than saturates early.
\item \emph{Cost control.} Combined LP/certificate verification stays within a fixed
wall–time share (empirically $\le 25\%$) over $I$.
\end{enumerate}

\paragraph{Validation criteria (decisive checks).}
\begin{itemize}[leftmargin=1.6em]
\item On SPN/Feistel round–reduced suites, per–depth bands with $\hat\rho_d\ge 0.24$
across $L\!\ge\!200$, while oracle wall share $\le 25\%$.
\item Ablations demonstrating additivity: baseline \(\to\) +overlay \(\to\) +wide–trail
\(\to\) +MitM, with monotone increases in $\Delta_{\mathrm{bits}}$ and consistent
wall–clock conversions.
\item Bridge effectiveness: in Cube\,\&\,Conquer, bridge budgets eliminate a constant
fraction of cubes at the split frontier (reported as per–frontier kill rates).
\end{itemize}

\paragraph{Potential failure modes and mitigations.}
\begin{itemize}[leftmargin=1.6em]
\item \emph{Depth decay.} $\hat\rho_d\!\to\!0$ beyond shallow depths (overlay too weak or
windows too correlated). \emph{Mitigation:} increase overlay girth/degree locally; retune
window sizes to reduce correlation.
\item \emph{Overhead inversion.} LP/verification dominates wall time. \emph{Mitigation:}
tighten cut cascade (pattern/library first), cache certificates by $(\alpha,\text{cut})$,
cap separation frequency.
\item \emph{Heuristic interference.} Cloning harms propagation/branching. \emph{Mitigation:}
clone only top-centrality variables; add clone-aware tie-breakers; keep channels two-watched.
\end{itemize}

\paragraph{Scope.}
The scenario above identifies a concrete path to Tier~C/D \emph{on round–reduced,
structured families} under measurable and falsifiable assumptions. It does not by itself
constitute a practical cryptanalytic break of standard-parameter primitives; persistence,
cost control, and independence must be demonstrated empirically as outlined.

\section{Policy and Standards Note (Short)}
\label{app:policy}
If future experiments demonstrate a non-negligible contextuality density $\rho_{\text{graph}}(G_\Phi)$ at scale, the
resulting $\delta>0$ would, in principle, translate into modest reductions of effective key lengths. This is a
theoretical possibility, not a present cryptanalytic capability. No immediate changes to practice are warranted;
the next step is empirical benchmarking of $\rho_{\text{graph}}$ in small clause networks.

\section{Figures}
\label{app:figs}

In this appendix we collect three figures that visualize (i) the base $2-\rho$
as a function of $\rho$, (ii) node counts with and without h\mbox{-}NGCC, and
(iii) per\mbox{-}depth pruning on $C_5$.

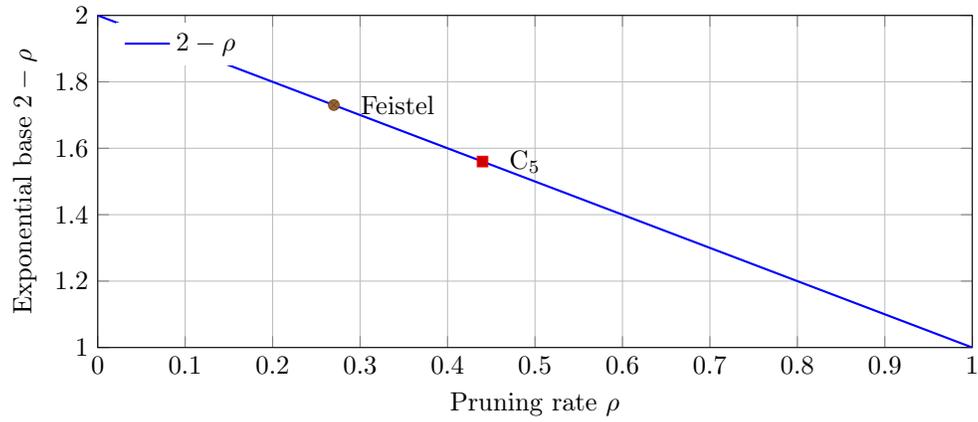
\begin{figure}[h]
  \centering
  \begin{tikzpicture}
    \begin{axis}[
      width=0.8\linewidth, height=6cm,
      xlabel={Pruning rate $\rho$}, ylabel={Exponential base $2-\rho$},
      xmin=0, xmax=1, ymin=1, ymax=2,
      grid=both, domain=0:1, samples=100,
      legend style={draw=none, at={(0.02,0.98)}, anchor=north west}
    ]
      \addplot+[mark=none, thick] {2 - x};
      \addlegendentry{$2-\rho$}

      \addplot+[only marks] coordinates {(0.44,1.56)};
      \node[anchor=west] at (axis cs:0.44,1.56) {~~C$_5$};

      \addplot+[only marks] coordinates {(0.27,1.73)};
      \node[anchor=west] at (axis cs:0.27,1.73) {~~Feistel};
    \end{axis}
  \end{tikzpicture}
  \caption{Exponential base $2-\rho$ versus pruning rate $\rho$ with markers at the observed values from Appendix~\ref{app:empirical}.}
  \label{fig:base-vs-rho}
\end{figure}

\begin{figure}[h]
  \centering
  \begin{tikzpicture}
    \begin{axis}[
      ybar=0pt,
      width=0.8\linewidth, height=6cm,
      symbolic x coords={C$_5$,Feistel},
      xtick=data,
      ylabel={Nodes explored (avg)},
      ymin=0,
      grid=both,
      legend style={draw=none, at={(0.5,1.02)}, anchor=south}
    ]
      \addplot coordinates {(C$_5$,16) (Feistel,4096)};
      \addplot coordinates {(C$_5$,9) (Feistel,2800)};
      \legend{Vanilla, h-NGCC}
    \end{axis}
  \end{tikzpicture}
  \caption{Average nodes explored with and without h\mbox{-}NGCC pruning on toy instances.}
  \label{fig:nodes-bars}
\end{figure}
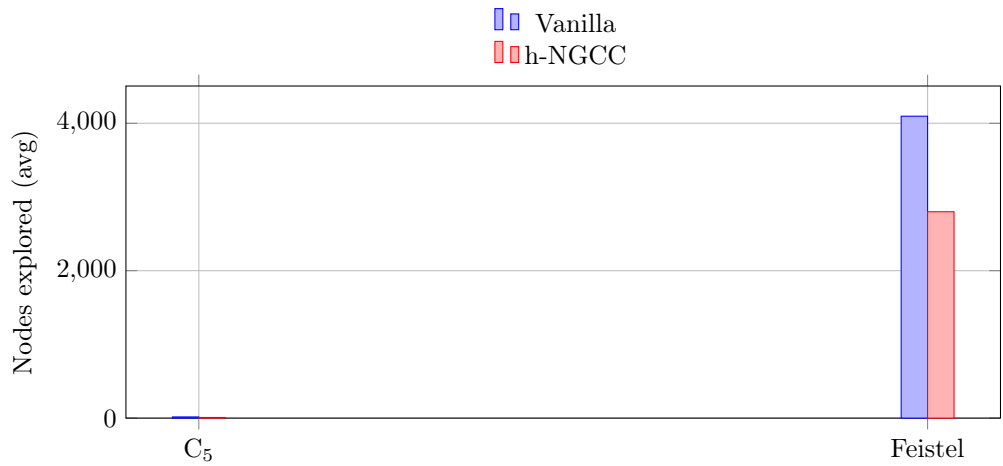

\begin{figure}[h]
  \centering
  \begin{tikzpicture}
    \begin{axis}[
      width=0.8\linewidth, height=6cm,
      xlabel={Depth $d$}, ylabel={$\hat{\rho}_d$},
      xmin=1, xmax=4, ymin=0, ymax=1,
      grid=both
    ]
      \addplot+[mark=*] coordinates {(1,0.5) (2,0.5) (3,0.5) (4,0.0)};
    \end{axis}
  \end{tikzpicture}
  \caption{Empirical pruning fraction by depth for the $C_5$ demo (illustrative).}
  \label{fig:rho-per-depth}
\end{figure}
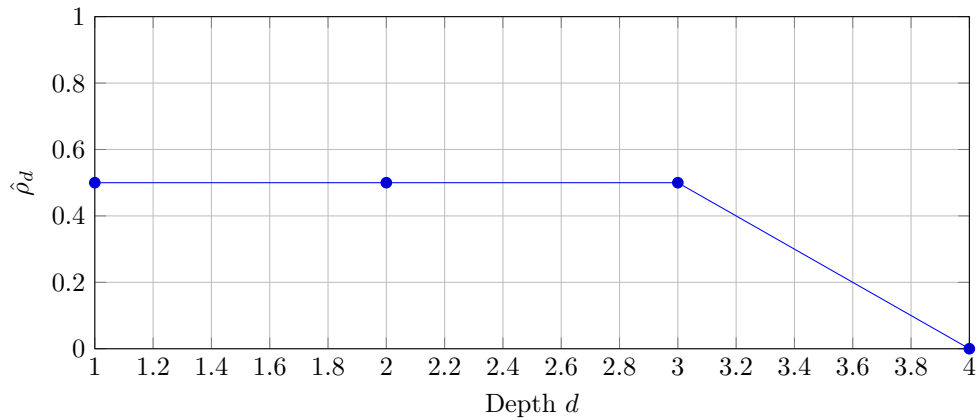

\begin{figure}[h]
  \centering
  \begin{tikzpicture}[baseline=(current bounding box.center)]
    \def\n{5}\def\r{2.0}
    \foreach \i in {1,...,\n}{
      \node[vertex, labelfont] (v\i) at ({90-360/\n*(\i-1)}:\r) {$X_{\i}$};
    }
    \foreach \i [evaluate=\i as \j using {int(mod(\i,\n)+1)}] in {1,...,\n}{
      \draw[edge] (v\i) -- (v\j);
    }
  \end{tikzpicture}
  \caption{C$_5$ exclusivity graph (KCBS scenario): contexts are edges, variables are vertices.}
  \label{fig:c5}
\end{figure}
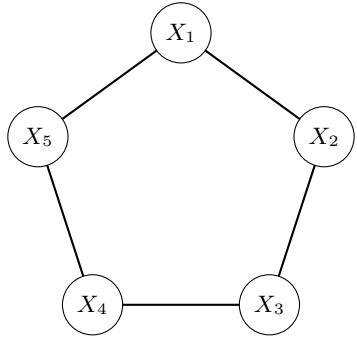

\begin{figure}[h]
  \centering
  \begin{tikzpicture}[baseline=(current bounding box.center)]
    \def\n{7}\def\r{2.2}
    \foreach \i in {1,...,\n}{
      \node[vertex, labelfont] (w\i) at ({90-360/\n*(\i-1)}:\r) {$X_{\i}$};
    }
    \foreach \i [evaluate=\i as \j using {int(mod(\i,\n)+1)}] in {1,...,\n}{
      \draw[edge] (w\i) -- (w\j);
    }
  \end{tikzpicture}
  \caption{C$_7$ exclusivity graph: longer odd cycles also forbid global assignments and admit h-NGCC inequalities.}
  \label{fig:c7}
\end{figure}
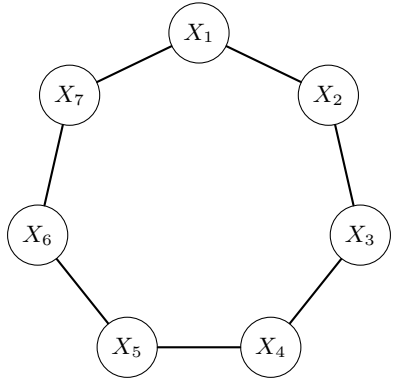

\begin{figure}[h]
  \centering
  \begin{tikzpicture}
    \begin{axis}[
      width=0.8\linewidth, height=6cm,
      xlabel={Visibility $\eta$}, ylabel={CHSH value $S$},
      xmin=0, xmax=1, ymin=0, ymax=3,
      grid=both, legend style={draw=none, at={(0.02,0.98)}, anchor=north west}
    ]
      \addplot+[domain=0:1, thick] {2};
      \addlegendentry{Classical bound $S=2$}

      \addplot+[domain=0:1, thick] {2*sqrt(2)*x};
      \addlegendentry{$S=2\sqrt{2}\,\eta$ (Tsirelson line)}
    \end{axis}
  \end{tikzpicture}
  \caption{CHSH visibility curve: quantum line $2\sqrt{2}\eta$ crosses the classical bound $S=2$ at $\eta=1/\sqrt{2}$.}
  \label{fig:chsh-visibility}
\end{figure}
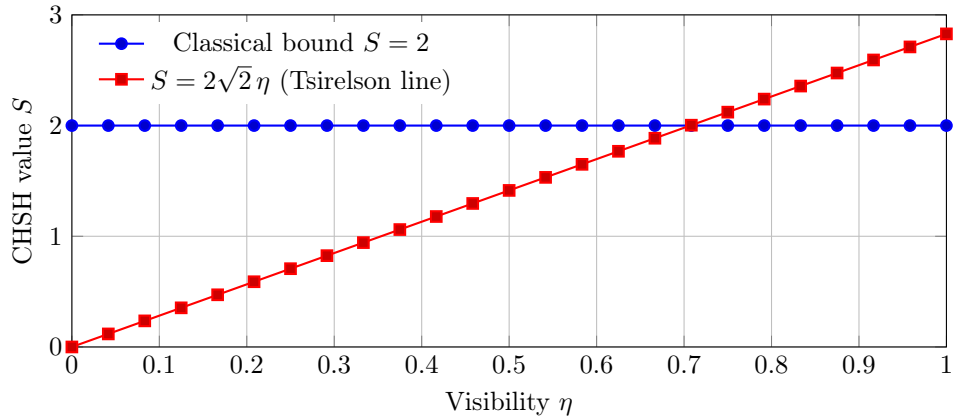

\begin{figure}[h]
  \centering
  \begin{tikzpicture}[node distance=18mm]
    \node[vtx] (x1) {$x_1$};
    \node[vtx, right=22mm of x1] (x2) {$x_2$};
    \node[vtx, right=22mm of x2] (x3) {$x_3$};
    \node[vtx, right=22mm of x3] (x4) {$x_4$};

    \node[ctx, below=14mm of $(x1)!0.5!(x2)$] (c1) {$C_1$};
    \node[ctx, right=6mm of c1] (c2) {$C_2$};
    \node[ctx, below=14mm of $(x3)!0.5!(x4)$] (c3) {$C_3$};

    \draw[edge] (x1.south) -- ($(c1.north)+( -5mm,0)$);
    \draw[edge] (x2.south) -- ($(c1.north)+(  5mm,0)$);
    \draw[edge] (x2.south) -- ($(c2.north)+( -5mm,0)$);
    \draw[edge] (x3.south) -- ($(c2.north)+(  5mm,0)$);
    \draw[edge] (x3.south) -- ($(c3.north)+( -5mm,0)$);
    \draw[edge] (x4.south) -- ($(c3.north)+(  5mm,0)$);


    \node[font=\small, align=center, below=4mm of c2]
      {Clauses as contexts; overlapping variables induce counterfactual channels};
  \end{tikzpicture}
  \caption{Schematic mapping of a CNF formula to an h\mbox{-}NGCC network: each clause becomes a context; shared variables create consistency edges.}
  \label{fig:clause-context-map}
\end{figure}

\begin{figure}[h]
  \centering
  \begin{tikzpicture}[node distance=28mm]
    \node[draw,rounded corners, minimum width=22mm, minimum height=10mm] (c1) {$C_1$};
    \node[draw,rounded corners, minimum width=22mm, minimum height=10mm, right=of c1] (c2) {$C_2$};
    \node[draw,rounded corners, minimum width=22mm, minimum height=10mm, right=of c2] (c3) {$C_3$};

    \node[below=7mm of c1] (b1) {$\,(L_1,R_1)\,$};
    \node[below=7mm of c2] (b2) {$\,(L_2,R_2)\,$};
    \node[below=7mm of c3] (b3) {$\,(L_3,R_3)\,$};

    \draw[<->] (c1) -- (c2) node[midway, above] {overlap};
    \draw[<->] (c2) -- (c3) node[midway, above] {overlap};

    \draw (c1) -- (b1);
    \draw (c2) -- (b2);
    \draw (c3) -- (b3);

    \node[above=7mm of c1] {$k_1$};
    \node[above=7mm of c2] {$k_2$};
    \node[above=7mm of c3] {$k_3$};
  \end{tikzpicture}
  \caption{Three Feistel round contexts with shared boundary wires. Disagreement indicators $Y_1,Y_2,Y_3$ on these overlaps yield the cut $\mathbb{E}[Y_1+Y_2+Y_3]\le 2$.}
  \label{fig:feistel-hngcc}
\end{figure}
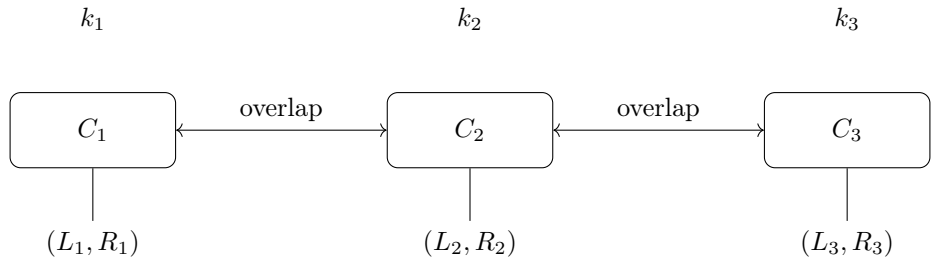

\begin{figure}[h]
  \centering
  \begin{tikzpicture}
    \begin{axis}[
      width=0.8\linewidth, height=6cm,
      xlabel={$V_L$}, ylabel={$V_R$},
      xmin=0, xmax=3, ymin=0, ymax=3, grid=both]
      \addplot[fill=blue!10, draw=blue, thick] coordinates {(0,0) (2,0) (2,2) (0,2) (0,0)} -- cycle;
      \addplot[red, thick, domain=0:3] {4 + sqrt(2) - x};
    \end{axis}
  \end{tikzpicture}
  \caption{Figure-of-8 budget: classical square $V_i\le 2$ and quantum line $V_L+V_R\le 4+\sqrt2$.}
  \label{fig:budget-fo8}
\end{figure}
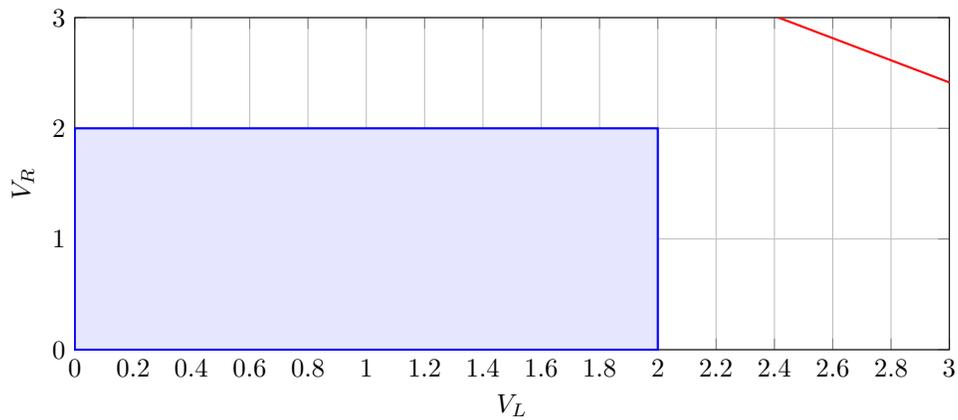

\begin{figure}[h]
  \centering
  \begin{tikzpicture}
    \begin{axis}[
      width=0.8\linewidth, height=6cm, view={35}{25},
      xlabel={$V_L$}, ylabel={$V_T$}, zlabel={$V_R$},
      xmin=0, xmax=3, ymin=0, ymax=3, zmin=0, zmax=3, grid=major]
      \addplot3[surf, opacity=0.1, mesh/ordering=y varies]
        coordinates {(0,0,0) (2,0,0) (2,2,0) (0,2,0)
                     (0,0,2) (2,0,2) (2,2,2) (0,2,2)};
      \addplot3[surf, opacity=0.15, colormap name=redyellow]
        {6 + sqrt(2) - x - y};
    \end{axis}
  \end{tikzpicture}
  \caption{Cloverleaf budget: classical cube $V_i\le 2$ and quantum plane
  $V_L+V_T+V_R\le 6+\sqrt2$.}
  \label{fig:budget-clover}
\end{figure}
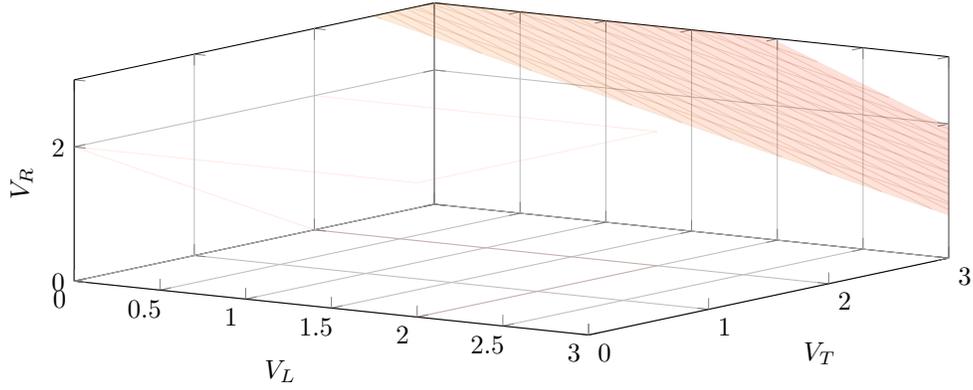

\begin{figure}[t]
\centering
\begin{tikzpicture}[>=Stealth, node distance=14mm, every node/.style={font=\small}]
\node[draw, circle, minimum size=6mm] (A) {A};
\node[draw, rectangle, minimum width=24mm, minimum height=10mm, left=of A]  (L) {Left 4-cycle};
\node[draw, rectangle, minimum width=24mm, minimum height=10mm, right=of A] (R) {Right 4-cycle};

\node[draw, fill=white, align=center, inner sep=3pt, above=12mm of A] (budget) {
  $\begin{aligned}
    V_L + V_R &\le 4 + \sqrt{2} &&\text{(quantum)}\\
    V_L + V_R &\le 4            &&\text{(classical)}
  \end{aligned}$
};

\draw[->] (L.east) -- (A.west);
\draw[->] (R.west) -- (A.east);

\node[below=8mm of A] (capA) {(a) Figure-of-8 overlaps (monogamy budget).};
\end{tikzpicture}

\vspace{12mm}

\begin{flushleft}
\begin{tikzpicture}[>=Stealth, node distance=14mm, every node/.style={font=\small}]
\foreach \i in {1,2,3} {
  \node[anchor=east] (l\i) at (0,-\i*7mm) {$\ell_{\i}$};
  \node[anchor=west, right=6mm of l\i] (phi\i) {$\phi_{\i}$};
  \node[draw, circle, minimum size=5pt, right=6mm of phi\i] (in\i) {};
  \draw (phi\i) -- (in\i);
}

\node[draw, rectangle, minimum width=30mm, minimum height=24mm,
      right=20mm of in2, align=center] (U)
{$\tfrac{1}{\sqrt{3}}
  \begin{pmatrix}
    1 & 1 & 1 \\
    1 & \omega & \omega^2 \\
    1 & \omega^2 & \omega
  \end{pmatrix}$};

\foreach \i/\name in {1/D1,2/D2,3/D3} {
  \path (U.east |- in\i) coordinate (u\i);
  \node[draw, rectangle, minimum width=12mm, right=18mm of u\i,
        fill={\ifnum\i=2 yellow!20\else none\fi}] (D\i) {\name\ifnum\i=2\ (witness $S$)\fi};
  \draw[->] (in\i) -- (U.west |- in\i);
  \draw[->] (U.east |- in\i) -- (D\i.west);
}

\node[below=8mm of U] {(b) Clause gadget with tritter witness (middle port is $S$).};
\end{tikzpicture}
\end{flushleft}

\caption{Monogamy “budget” and clause gadget with tritter witness.}
\label{fig:fig8-clause-tritter}
\end{figure}   

\clearpage 

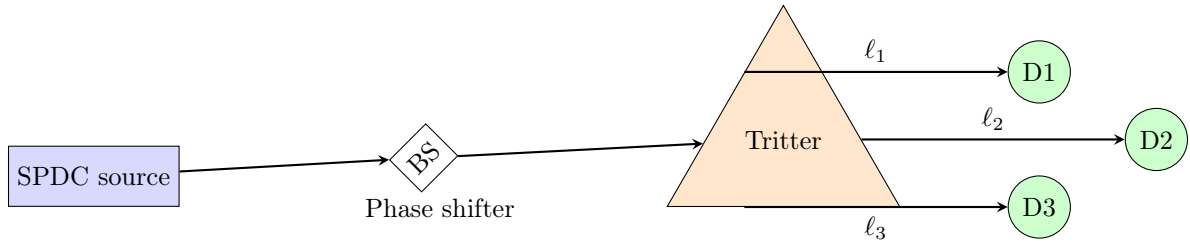
\begin{figure}[H]
\centering
\begin{tikzpicture}[scale=1.0,>=stealth]

\node[draw, rectangle, fill=blue!15, minimum width=1.8cm, minimum height=0.8cm] (src) {SPDC source};

\node[draw, rectangle, rotate=45, minimum size=0.6cm, right=3cm of src] (bs1) {BS};

\node[draw, regular polygon, regular polygon sides=3, minimum size=1.5cm, fill=orange!20, right=3.5cm of bs1] (trit) {Tritter};

\node[draw, circle, fill=green!20, right=3.5cm of trit.120] (d1) {D1};
\node[draw, circle, fill=green!20, right=3.5cm of trit] (d2) {D2};
\node[draw, circle, fill=green!20, right=3.5cm of trit.240] (d3) {D3};

\draw[->, thick] (src) -- (bs1);
\draw[->, thick] (bs1) -- (trit);

\draw[->, thick] (trit.120) -- (d1.west) node[midway, above] {$\ell_1$};
\draw[->, thick] (trit) -- (d2.west) node[midway, above] {$\ell_2$};
\draw[->, thick] (trit.240) -- (d3.west) node[midway, below] {$\ell_3$};

\node[below=0.2cm of bs1] {Phase shifter};

\end{tikzpicture}
\caption{Minimal experimental setup. A heralded photon from an SPDC source passes
through a beam splitter and phase shifter, then enters a tritter implementing the clause gadget.
Three detectors (D1–D3) register outcomes. All components are standard laboratory devices,
with no exotic or black-box hardware required.}
\label{fig:exp-setup}
\end{figure}

\begin{figure}[H]
\centering
\resizebox{0.85\textwidth}{!}{%
\begin{tikzpicture}[>=stealth]

\node[draw, rectangle, fill=blue!15, minimum width=1.6cm, minimum height=0.7cm] (src) {SPDC};

\node[draw, rectangle, rotate=45, minimum size=0.5cm, right=2.2cm of src] (bs1) {BS};

\node[draw, rectangle, rotate=45, minimum size=0.5cm, above right=1.3cm and 2.0cm of bs1] (bs2u) {BS};
\node[draw, rectangle, rotate=45, minimum size=0.5cm, right=2.0cm of bs2u] (bs3u) {BS};

\node[draw, rectangle, rotate=45, minimum size=0.5cm, below right=1.3cm and 2.0cm of bs1] (bs2l) {BS};
\node[draw, rectangle, rotate=45, minimum size=0.5cm, right=2.0cm of bs2l] (bs3l) {BS};

\node[draw, circle, fill=green!20, right=2.0cm of bs3u] (d1) {D1};
\node[draw, circle, fill=green!20, right=2.0cm of bs3l] (d2) {D2};

\draw[->, thick] (src) -- (bs1);

\draw[->, thick] (bs1.45) -- ++(1.4,1.4) -- (bs2u.west);
\draw[->, thick] (bs2u.45) -- ++(1.6,0) -- (bs3u.west);
\draw[->, thick] (bs3u.45) -- (d1.west);

\draw[->, thick] (bs1.315) -- ++(1.4,-1.4) -- (bs2l.west);
\draw[->, thick] (bs2l.315) -- ++(1.6,0) -- (bs3l.west);
\draw[->, thick] (bs3l.315) -- (d2.west);

\node[below=0.2cm of bs1] {Phase shifter};
\node[above=0.2cm of d1] {$\ell_{\text{upper}}$};
\node[below=0.2cm of d2] {$\ell_{\text{lower}}$};

\end{tikzpicture}%
}
\caption{Compact nested Mach–Zehnder interferometer. A heralded photon
from an SPDC source enters cascaded interferometers with phase shifters
in the arms. Detection at D1 or D2 provides statistics for
contextuality-type inequalities. All components are standard optics.}
\label{fig:nested-mzi-compact}
\end{figure}
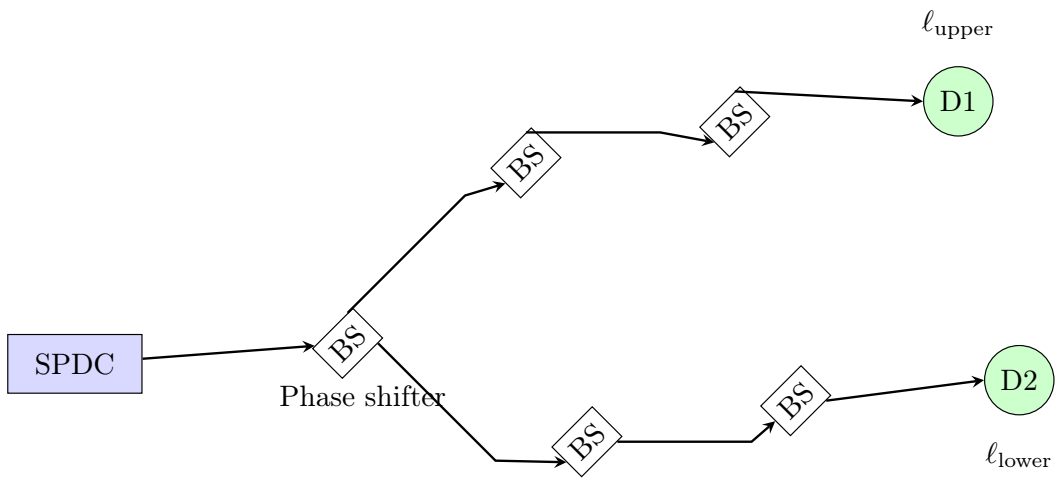

\clearpage      
\bibliographystyle{unsrt}   
\bibliography{refs}         

\end{document}